\documentclass[a4paper,UKenglish,cleveref, autoref, thm-restate, standalone]{lipics-v2019}

\usepackage[utf8]{inputenc}
\usepackage[dvipsnames]{xcolor}
\usepackage{caption, setspace, subcaption}

\usepackage{amsfonts, comment}
\usepackage{mathtools,amssymb}
\usepackage{graphicx}
\usepackage{xspace}
\usepackage[T1]{fontenc}
\usepackage[mathscr]{eucal}

\usepackage{tikz,pgfplots}
\usepackage{forest}
\usetikzlibrary{fit,patterns, arrows,positioning,shadows, trees, shapes}

\graphicspath{ {./image/} }
\usepackage[linesnumbered,ruled,vlined]{algorithm2e}

\usepackage{xspace}
\usepackage{multicol}
\usepackage{amsmath}
\usepackage{caption}

\newcommand{\rsap}{{\sc Round-SAP}\xspace}
\newcommand{\rufp}{{\sc Round-UFP}\xspace}
\newcommand{\ursap}{{\sc Uni-Round-SAP}}
\newcommand{\urufp}{{\sc Uni-Round-UFPP}}
\newcommand{\rtree}{{\sc Round-Tree}}

\newcommand{\urtree}{{\sc Uni-Round-Tree}}
\newcommand{\pcolor}{{\sc Path-Coloring}}

\newcommand{\ff}{{\sc First-Fit}}

\newcommand{\eps}{\varepsilon}
\newcommand{\opt}{OPT}
\newcommand{\As}{\mathcal{A}}
\newcommand{\Ps}{\mathcal{P}}
\newcommand{\Ls}{\mathcal{L}}
\newcommand{\Rs}{\mathcal{R}}
\newcommand{\Ts}{\mathcal{T}}
\newcommand{\Ms}{\mathcal{M}}
\newcommand{\Is}{\mathcal{I}}
\newcommand{\Ws}{\mathcal{W}}
\newcommand{\Xs}{\mathcal{X}}
\newcommand{\Ys}{\mathcal{Y}}
\newcommand{\Zs}{\mathcal{Z}}

\newcommand{\Js}{\mathcal{J}}
\newcommand{\Qs}{\mathcal{Q}}
\newcommand{\Ss}{\mathcal{S}}
\newcommand{\Cs}{\mathcal{C}}

\newcommand{\Ns}{\mathcal{N}}
\newcommand{\Vs}{\mathcal{V}}
\newcommand{\Hs}{\mathcal{H}}
\newcommand{\Ds}{\mathcal{D}}
\newcommand{\sbmdm}{2-B-3-DM}

\newcommand{\js}{\mathfrak{j}}
\newcommand{\cs}{c^*}
\global\long\def\R{\mathcal{R}}%

\def\DEBUG{false} 
	\ifdefined\DEBUG

	\newcommand{\aw}[1]{\textcolor{orange}{#1}}
	\def\rem#1{{\marginpar{\raggedright\scriptsize #1}}}
	\newcommand{\arir}[1]{\rem{\textcolor{red}{$\bullet$ #1}}}
	\newcommand{\debr}[1]{\rem{\textcolor{blue}{$\bullet$ #1}}}
	\newcommand{\awr}[1]{\rem{\textcolor{orange}{$\bullet$ #1}}}
	\else

	  \newcommand{\arir}[1]{}
	  \newcommand{\debr}[1]{}
	   \newcommand{\awr}[1]{}
	\fi

\bibliographystyle{plainurl}

\title{Approximation Algorithms for \rufp~and \rsap} 

\titlerunning{ROUND-SAP and ROUND-UFPP} 
   
\author{Debajyoti Kar} 
{Department of Computer Science and Engineering \and Indian Insitute of Technology, Kharagpur, India}{debajyoti.kar@iitkgp.ac.in}{}{}

\author{Arindam Khan}{Department of Computer Science and Automation \and Indian Institute of Science, Bengaluru, India}{arindamkhan@iisc.ac.in}{}{}

\author{Andreas Wiese}
{School of Business and Economics, Operations Analytics \and Vrije Universiteit, Amsterdam, Netherlands}{a.wiese@vu.nl}{}{}

\authorrunning{Debajyoti Kar and Arindam Khan and Andreas Wiese} 

\Copyright{Debajyoti Kar and Arindam Khan and Andreas Wiese} 

\ccsdesc[100]{Theory of computation~Design and analysis of algorithms~Approximation algorithms analysis} 


\keywords{Approximation Algorithms, Scheduling, Rectangle Packing.} 





\acknowledgements{We thank Waldo Galvez, Afrouz Jabal Ameli, Siba Smarak Panigrahi and Arka Ray for helpful initial discussions.}

\nolinenumbers 


\EventEditors{John Q. Open and Joan R. Access}
\EventNoEds{2}
\EventLongTitle{42nd Conference on Very Important Topics (CVIT 2016)}
\EventShortTitle{CVIT 2016}
\EventAcronym{CVIT}
\EventYear{2016}
\EventDate{December 24--27, 2016}
\EventLocation{Little Whinging, United Kingdom}
\EventLogo{}
\SeriesVolume{42}
\ArticleNo{23}

\begin{document}

\maketitle

\begin{abstract}
We study \rufp\ and \rsap, two generalizations of the classical
\textsc{Bin Packing }problem that correspond to the unsplittable flow
problem on a path (UFP) and the storage allocation problem~(SAP), respectively.
We are given a path with capacities on its edges and a set of tasks
where for each task we are given a demand and a subpath. In \rufp,
the goal is to find a packing of all tasks into a minimum number of
copies (rounds) of the given path such that for each copy, the total
demand of tasks on any edge does not exceed the capacity of the respective
edge. In \rsap, the tasks are considered to be rectangles and the
goal is to find a non-overlapping packing of these rectangles into a
minimum number of rounds such that all rectangles lie completely below
the capacity profile of the edges.

We show that in contrast to \textsc{Bin Packing}, both the problems
do not admit an asymptotic polynomial-time approximation scheme (APTAS),
even when all edge capacities are equal. However, for this setting,
we obtain asymptotic $(2+\eps)$-approximations for both problems.
For the general case, we obtain an $O(\log\log n)$-approximation
algorithm and an $O(\log\log\frac{1}{\delta})$-approximation under
$(1+\delta)$-resource augmentation for both problems. For the intermediate
setting of the {\em no bottleneck assumption} (i.e., the maximum task demand
is at most the minimum edge capacity), we obtain absolute 
$12$- and asymptotic $(16+\eps)$-approximation algorithms for \rufp\ and
\rsap, respectively. 
\end{abstract}

\section{Introduction}

\label{sec:intro} 
The unsplittable flow on a path problem (UFP) 
and the storage allocation problem~(SAP) 
are two
well-studied problems in combinatorial optimization. In this paper,
we study \rufp~and \rsap, which are two related natural problems
that also generalize the classical {\sc Bin Packing} problem. 


In both \rufp and \rsap, we are given as input a path $G=(V,E)$
and a set of $n$ jobs $J$. We assume that $\{v_{0},v_{1},\dots,v_{m}\}$
are the vertices in $V$ from left to right and then for each $i \in \{1,...,m\}$ there is an
edge $e_{i}:=\{v_{i-1},v_{i}\}$.
Each job $\js\in J$ has integral demand $d_{\js}\in\mathbb{N}$,
a source $v_{s_{j}}\in V$, and a sink $v_{t_{j}}\in V$. We say that
each job $\js$ spans the path $P_{\js}$ which we define to be the path between
$v_{s_{\js}}$ and $v_{t_{\js}}$. For every edge $e\in E$, we are
given an integral capacity $c_{e}$. A useful geometric interpretation
of the input path and the edge capacities is the following (see Figure~\ref{closedcor2}):
consider the interval $[0,m)$ on the $x$-axis and a function $c\colon[0,m)\to\mathbb{N}$.
Each edge $e_{k}$ corresponds to the interval $[k-1,k)$ and each
vertex $v_{i}$ corresponds to the point $i$. For edge $e_{k}$, we
define $c(x)=c_{e_{k}}$ for each $x\in[k-1,k)$.

In the \rufp~problem, the objective is to partition the jobs $J$
into a minimum number of sets $J_{1},...,J_{k}$ (that we will denote
by \emph{rounds}) such that the jobs in each set $J_{i}$ form a \emph{valid
packing, }i.e., they obey the edge capacities, meaning that $\sum_{\js\in J_{i}:e\in P_{\js}}d_{\js}\le c_{e}$
for each $e\in E$. In the \rsap~problem, we require to compute
additionally for each set $J_{i}$ a non-overlapping set of rectangles
underneath the capacity profile, corresponding to the jobs in $J_{i}$
(see Figure~\ref{closedcor2}). Formally, we require for each job $\js\in J_{i}$
to determine a height $h_{\js}$ with $h_{\js}+d_{\js}\le c_{e}$
for each edge $e\in P_{\js}$, yielding a rectangle $R_{\js}=(s_{\js},t_{\js})\times(h_{\js},h_{\js}+d_{\js})$,
such that for any two jobs $\js,\js'\in J_{i}$ we have that $R_{\js}\cap R_{{\js}'}=\emptyset$.
Again, the objective is to minimize the number of rounds.

Note that unlike \rsap, in \rufp\ we do not need to pack the jobs
as contiguous rectangles. Hence, intuitively in \rufp\ we can slice the rectangles
vertically and place different slices at different heights. See Figure~\ref{closedcor2}
for the differences between the problems. 

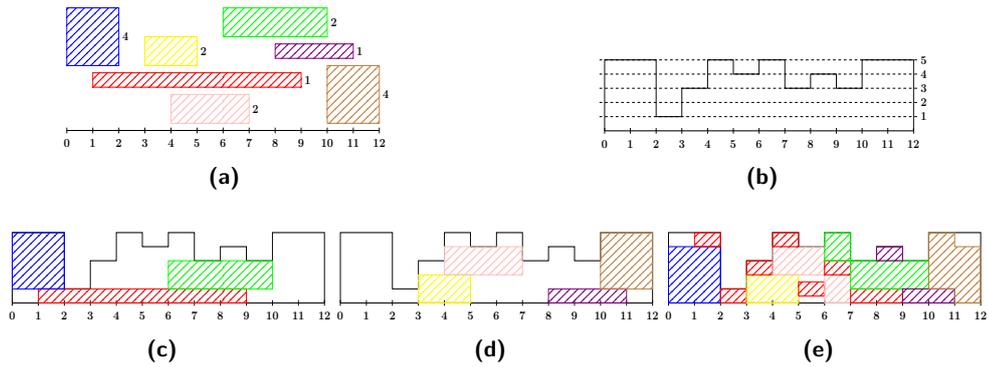
\begin{figure}
\captionsetup{justification=centering} \captionsetup[subfigure]{justification=centering}
\begin{subfigure}[b]{.5\textwidth} \centering \resizebox{4.4cm}{1.9cm}{ \begin{tikzpicture}
			\draw[thick] (1,0) -- (13,0);
			\foreach \x in {1,...,13}
				\draw[thick] (\x,0.1) -- (\x,-0.1);
			\foreach \x in {0,...,12}
				\draw[thick] (\x+1,-0.45) node {\Large \textbf{\x}};
			\draw[thick, color=pink, pattern=north east lines, pattern color = pink] (5,0.25) rectangle (8,1.25);			
			\draw[thick, color=brown, pattern=north east lines, pattern color = brown] (11,0.25) rectangle (13,2.25);
			\draw[thick, color=red, pattern=north east lines, pattern color = red] (2,1.5) rectangle (10,2);
			\draw[thick, color=blue, pattern=north east lines, pattern color = blue] (1,2.25) rectangle (3,4.25);
			\draw[thick, color=yellow, pattern=north east lines, pattern color = yellow] (4,2.25) rectangle (6,3.25);
			\draw[thick, color=violet, pattern=north east lines, pattern color = violet] (9,2.5) rectangle (12,3);
			\draw[thick, color=green, pattern=north east lines, pattern color = green] (7,3.25) rectangle (11,4.25);
			
			\draw[thick] (8.25,0.75) node {\Large \textbf{2}};
			\draw[thick] (13.25,1.25) node {\Large \textbf{4}};
			\draw[thick] (10.25,1.75) node {\Large \textbf{1}};
			\draw[thick] (3.25,3.25) node {\Large \textbf{4}};
			\draw[thick] (6.25,2.75) node {\Large \textbf{2}};
			\draw[thick] (12.25,2.75) node {\Large \textbf{1}};
			\draw[thick] (11.25,3.75) node {\Large \textbf{2}};
	
		\end{tikzpicture}} \caption{}
\label{fig:zbend1} \end{subfigure} \begin{subfigure}[b]{.5\textwidth}
\centering \resizebox{4.4cm}{1.3cm}{ \begin{tikzpicture}
			\draw[thick] (1,0) -- (13,0);
			\foreach \x in {1,...,13}
				\draw[thick] (\x,0.1) -- (\x,-0.1);
			\foreach \x in {0,...,12}
				\draw[thick] (\x+1,-0.45) node {\Large \textbf{\x}};
			\draw[thick] (1,0) -- (1,2.5) -- (3,2.5) -- (3,0.5) -- (4,0.5) -- (4,1.5) -- (5,1.5) -- (5,2.5) -- (6,2.5) -- (6,2) -- (7,2) -- (7,2.5) -- (8,2.5) -- (8,1.5) -- (9,1.5) -- (9,2) -- (10,2) -- (10,1.5) -- (11,1.5) -- (11,2.5) -- (13,2.5) -- (13,0);
			
			\foreach \x in {1,...,5}
				\draw[dashed] (0.8,\x/2) -- (13.2,\x/2);
			\foreach \x in {1,...,5}
				\draw[thick] (13.4,\x/2) node {\Large \textbf{\x}};
		\end{tikzpicture}} \caption{}
\label{fig:zbend1} \end{subfigure}

\vspace{0.5cm}
 \centering \begin{subfigure}[b]{.3\textwidth} 
 \resizebox{4.35cm}{1.2cm}{ \begin{tikzpicture}
			\draw[thick] (1,0) -- (13,0);
			\foreach \x in {1,...,13}
				\draw[thick] (\x,0.1) -- (\x,-0.1);
			\foreach \x in {0,...,12}
				\draw[thick] (\x+1,-0.45) node {\Large \textbf{\x}};
			\draw[thick] (1,0) -- (1,2.5) -- (3,2.5) -- (3,0.5) -- (4,0.5) -- (4,1.5) -- (5,1.5) -- (5,2.5) -- (6,2.5) -- (6,2) -- (7,2) -- (7,2.5) -- (8,2.5) -- (8,1.5) -- (9,1.5) -- (9,2) -- (10,2) -- (10,1.5) -- (11,1.5) -- (11,2.5) -- (13,2.5) -- (13,0);
			
		\draw[thick, color=red, pattern=north east lines, pattern color = red] (2,0) rectangle (10,0.5);
		\draw[thick, color=blue, pattern=north east lines, pattern color = blue] (1,0.5) rectangle (3,2.5);
		\draw[thick, color=green, pattern=north east lines, pattern color = green] (7,0.5) rectangle (11,1.5);	
		\end{tikzpicture}} \caption{}
\label{fig:zbend1} \end{subfigure} 
 \begin{subfigure}[b]{.3\textwidth} \centering \resizebox{4.35cm}{1.2cm}{ \begin{tikzpicture}
			\draw[thick] (1,0) -- (13,0);
			\foreach \x in {1,...,13}
				\draw[thick] (\x,0.1) -- (\x,-0.1);
			\foreach \x in {0,...,12}
				\draw[thick] (\x+1,-0.45) node {\Large \textbf{\x}};
			\draw[thick] (1,0) -- (1,2.5) -- (3,2.5) -- (3,0.5) -- (4,0.5) -- (4,1.5) -- (5,1.5) -- (5,2.5) -- (6,2.5) -- (6,2) -- (7,2) -- (7,2.5) -- (8,2.5) -- (8,1.5) -- (9,1.5) -- (9,2) -- (10,2) -- (10,1.5) -- (11,1.5) -- (11,2.5) -- (13,2.5) -- (13,0);
			
			\draw[thick, color=yellow, pattern=north east lines, pattern color = yellow] (4,0) rectangle (6,1);
			\draw[thick, color=pink, pattern=north east lines, pattern color = pink] (5,1) rectangle (8,2);
			\draw[thick, color=violet, pattern=north east lines, pattern color = violet] (9,0) rectangle (12,0.5);
			\draw[thick, color=brown, pattern=north east lines, pattern color = brown] (11,0.5) rectangle (13,2.5);
	
		\end{tikzpicture}} \caption{}
\label{fig:zbend1} \end{subfigure} 
 \begin{subfigure}[b]{.3\textwidth} \centering \resizebox{4.35cm}{1.2cm}{ \begin{tikzpicture}
			\draw[thick] (1,0) -- (13,0);
			\foreach \x in {1,...,13}
				\draw[thick] (\x,0.1) -- (\x,-0.1);
			\foreach \x in {0,...,12}
				\draw[thick] (\x+1,-0.45) node {\Large \textbf{\x}};
			\draw[thick] (1,0) -- (1,2.5) -- (3,2.5) -- (3,0.5) -- (4,0.5) -- (4,1.5) -- (5,1.5) -- (5,2.5) -- (6,2.5) -- (6,2) -- (7,2) -- (7,2.5) -- (8,2.5) -- (8,1.5) -- (9,1.5) -- (9,2) -- (10,2) -- (10,1.5) -- (11,1.5) -- (11,2.5) -- (13,2.5) -- (13,0);
			
			\draw[thick, color=blue, pattern=north east lines, pattern color = blue] (1,0) rectangle (3,2);
			\draw[thick, color=red, pattern=north east lines, pattern color = red] (2,2) rectangle (3,2.5);
			\draw[thick, color=red, pattern=north east lines, pattern color = red] (3,0) rectangle (4,0.5);
			\draw[thick, color=yellow, pattern=north east lines, pattern color = yellow] (4,0) rectangle (6,1);
			\draw[thick, color=red, pattern=north east lines, pattern color = red] (4,1) rectangle (5,1.5);
			\draw[thick, color=pink, pattern=north east lines, pattern color = pink] (5,1) rectangle (7,2);
			\draw[thick, color=red, pattern=north east lines, pattern color = red] (5,2) rectangle (6,2.5);
			\draw[thick, color=red, pattern=north east lines, pattern color = red] (6,0.25) rectangle (7,0.75);
			\draw[thick, color=pink, pattern=north east lines, pattern color = pink] (7,0) rectangle (8,1);
			\draw[thick, color=red, pattern=north east lines, pattern color = red] (7,1) rectangle (8,1.5);
			\draw[thick, color=green, pattern=north east lines, pattern color = green] (7,1.5) rectangle (8,2.5);
			\draw[thick, color=red, pattern=north east lines, pattern color = red] (8,0) rectangle (10,0.5);
			\draw[thick, color=green, pattern=north east lines, pattern color = green] (8,0.5) rectangle (11,1.5);
			\draw[thick, color=violet, pattern=north east lines, pattern color = violet] (9,1.5) rectangle (10,2);
			\draw[thick, color=violet, pattern=north east lines, pattern color = violet] (10,0) rectangle (12,0.5);
			\draw[thick, color=brown, pattern=north east lines, pattern color = brown] (11,0.5) -- (11,2.5) -- (12,2.5) -- (12,2) -- (13,2) -- (13,0) -- (12,0) -- (12,0.5) -- (11,0.5);
	
		\end{tikzpicture}} \caption{}
\label{fig:zbend1} \end{subfigure} \caption{\textbf{(a)} A set of 7 jobs with demands written beside; \textbf{(b)}
The capacity profile; \protect \\
 \textbf{(c),(d)} Any valid \rsap\ packing requires at least 2 rounds;
\protect \\
 \textbf{(e)} A valid \rufp\ packing using only 1 round.}
\label{closedcor2} 
\end{figure}

\rufp and \rsap arise naturally in the setting of resource allocation
 with connections to many fundamental optimization problems,
e.g., wavelength division multiplexing (WDM) and optical
fiber minimization (see \cite{andrews2005bounds,winkler2003wavelength}
for more details and practical motivations). The path can represent
a network with a chain of communication links in which we need to
send some required transmissions in a few rounds so that they obey the
given edge capacities. The edges can also correspond to discrete time
slots, each slot models a job that we might want to execute, and the
edge capacities model the available amount of a resource shared by
the jobs like energy or machines. \rufp\ models routing in optical
networks, where each copy of the resource corresponds to a distinct
frequency. As the number of available distinct frequencies is limited,
minimizing the number of rounds for a given set of requests is a natural
objective. \rsap\ is motivated by settings in which jobs need a
contiguous portion of an available resource, e.g., a consecutive portion
of the computer memory or a frequency bandwidth. Another application
is ad-placement, where each job is an advertisement that requires
a contiguous portion of the banner \cite{MomkeW15}.

\rufp and \rsap are APX-hard as they contain the
classical \textsc{Bin Packing} problem as a special case when $G$
has only one single edge. However, while for \textsc{Bin Packing}
there exists an asymptotic polynomial time approximation scheme (APTAS)\footnote{For formal definitions of notions like asymptotic approximation ratio, (asymptotic) polynomial time approximation scheme, etc. we refer to Appendix \ref{sec:approx}.}, it is open whether such an
algorithm exists for \rufp\ or \rsap. The best known approximation
algorithm for \rufp\ is a $O(\min\{\log n,\log m,\log\log c_{\max}\})$-approximation~\cite{JahanjouKR17}.
For the special case of \rufp~of uniform edge capacities, Pal~\cite{pal2014approximation} gave a 3-approximation. Elbassioni et al.~\cite{ElbassioniGGKN12} gave a 24-approximation algorithm for the problem under the no-bottleneck
assumption (NBA) which states that the maximum task demand is upper-bounded
by the minimum edge capacity. A result in (the full version of)~\cite{MomkeW15}
states that any solution to an instance of UFP can be partitioned
into at most 80 sets of tasks such that each of them is a solution
to the corresponding SAP instance. This immediately yields approximation algorithms
for \rsap: a 240-approximation for the case of uniform capacities, 
a 1920-approximation under the NBA, and a $O(\min\{\log n,\log m,\log\log c_{\max}\})$-approximation for the general case. These are the best known results
for~\rsap. 


\subsection{Our Contributions}

\label{subsec:ourwork} 

First, we show that both \rsap\ and \rufp, unlike the classical
\textsc{Bin Packing} problem, do not admit an APTAS, even in the uniform
capacity case. We achieve this via a gap preserving reduction from
the 3D matching problem. We create a numeric version of the problem
and define a set of hard instances for both \rsap\ and \rufp. Together
with a result of Chlebik and Chlebikova \cite{chlebik2006complexity},
we derive an explicit lower bound on the asymptotic approximation
ratio for both problems. Our hardness result holds even for the case in which
in the optimal packing no round contains more than $O(1)$ jobs, i.e., a case in which
we can even enumerate all possible packings in polynomial time.

For the case of uniform edge capacities, we give asymptotic $(2+\eps)$-approximation
algorithms for both \rufp\ and \rsap, and absolute $(2.5+\eps)$ and 3-approximation algorithms for the two problems, respectively. 
This improves upon the previous absolute
3- and 240-approximation algorithms mentioned above. Note that for
both problems our factor of 2 is a natural threshold: in many algorithms
for UFP and SAP \cite{BonsmaSW11, AnagnostopoulosGLW14, GrandoniMW017, GMWZ018, MomkeW15, MomkeW20}, 
the input tasks are partitioned 
into tasks that are relatively small and relatively
large (compared to the edge capacities). Then, both sets
are handled separately with very different sets of techniques. This
inherently loses a factor of 2. Our algorithms are based on a connection
of our problems to the dynamic storage allocation (DSA) problem and
we show how to use some known deep results for DSA~\cite{BuchsbaumKKRT03} in our
setting. In DSA the goal is to place some given tasks
as non-overlapping rectangles, minimizing the height of the resulting packing. Hence, it is somewhat surprising
that also for \rufp (where we do not have this requirement) it yields
the needed techniques for an improved approximation.

For the general cases of \rufp and \rsap\, we give an $O(\log\log \min\{m, n\})$-approximation
algorithm. Depending on the concrete values of $n,m,$ and $c_{\max}$,
this constitutes an up to exponential improvement compared to the
best known result for \rufp~\cite{JahanjouKR17} and for \rsap\ (by the reasoning via~\cite{MomkeW15} above). We divide the
input tasks into two sets: tasks that use a relatively large portion
of the capacity of \emph{at least one} of their edges (large tasks) and tasks
that use a relatively small portion of the capacity of \emph{all} of their
edges (small tasks). For each large task, we fix a corresponding rectangle
that is drawn at the maximum possible height underneath the capacity curve
(even for \rufp~for which we do not need to represent the given
tasks as rectangles) and we seek a solution in which in each round
these corresponding rectangles are non-overlapping. It follows from
known results that this loses at most a factor of~$O(1)$~\cite{BonsmaSW11, JahanjouKR17}.
Then, we solve a configuration-LP for our problem which we solve via a separation oracle. We take the integral parts of
its solution and show that the fractional parts yield several instances of our original problem 
in which each point overlaps at most $O(\log m)$ rectangles.
Using a recent result by Chalermsook and Walczak \cite{parinbar} 
on the coloring
number of rectangle intersection graphs, 
we obtain an $O(\log\log m)$-approximation.
For the small tasks of a given \rufp instance, a result in~\cite{ElbassioniGGKN12}
yields an $O(1)$-approximation, which yields our $O(\log\log m)$-approximation
for \rufp. Finally, we use the result in~\cite{MomkeW15} mentioned above in order to turn this algorithm even into a $O(\log\log m)$-approximate
solution for \rsap. Then we study the setting of resource augmentation,
i.e., where we can increase the edge capacities by a factor of $1+\delta$
while the compared optimal solution does not have this privilege.
In this case, we show that we can reduce the given problem to the
setting in which the edge capacites are in the range $[1,1/\delta)$.
Applying the algorithm from~\cite{JahanjouKR17} then yields a $O(\log\log\frac{1}{\delta})$-approximation
for this case for \rufp, and with a similar argumentation as before
also for \rsap.

Furthermore, for the case of the NBA we improve the absolute approximation
ratio from 24 to 12 for \rufp, and from 1920 to 24 for
\rsap, and we obtain even an asymptotic $(16+\eps)$-approximation for \rsap. For \rsap~we show that we can reduce the general case to the
case of uniform edge capacities, losing only a factor of~8. Thus,
future improvements for the case of uniform edge capacities will directly yield improvements 
for the case of the NBA. For \rufp~we partition the input jobs
into several sets. For some of them we reduce the given setting
to the case of unit job demands and integral edge capacities and invoke
an algorithm from \cite{NomikosPZ01} for this case. For the others, we show that 
a simple greedy routine works sufficiently well. 

If in \rufp we are given a tree instead of a path, we obtain the \rtree~problem. The best known result for it under the NBA
is a 64-approximation \cite{ElbassioniGGKN12}. We are not aware of any better result for
uniform edge capacities.
We improve the best
known approximation ratio under the NBA to 55 and also provide a 5.5-approximation algorithm for the case of uniform edge capacities.


See Table~\ref{table:results} for an overview of our results.
Due to space limitations, many proofs had to be moved to the appendix.

\begin{table}[h!]
\centering{}%
\begin{tabular}{|l|l|l|l|}
\hline 
\textbf{Problem}  & \textbf{Edge capacites} & \textbf{Previous approximation}  & \textbf{Improved approximation} \tabularnewline
\hline 
\rufp  & uniform  & 3 \cite{pal2014approximation}  & asymp. $2+\eps$, abs. 2.5$+\eps$  \tabularnewline
\rsap  & uniform  & $240$  \cite{pal2014approximation, MomkeW15} & asymp. $2+\eps$, abs. 3 \tabularnewline
\rufp  & NBA & 24  \cite{ElbassioniGGKN12} & abs. 12 \tabularnewline
\rsap  & NBA & 1920   \cite{ElbassioniGGKN12, MomkeW15} & asymp. $16+\eps$, abs. 24\tabularnewline
\rufp  & general & $O(\log\min\{n,m,\log c_{\max}\})$ \cite{JahanjouKR17}& abs. $O(\log\log \min\{n,m\})$\tabularnewline
\rsap  & general & $O(\log\min\{n,m,\log c_{\max}\})$ \cite{JahanjouKR17}& abs. $O(\log\log \min\{n,m\})$ \tabularnewline
\rufp  & general with r.a. & $O(\log\min\{n,m,\log c_{\max}\})$ \cite{JahanjouKR17}& abs. $O(\log\log(1/\delta))$\tabularnewline
\rsap  & general with r.a. & $O(\log\min\{n,m,\log c_{\max}\})$ \cite{JahanjouKR17}& abs. $O(\log\log(1/\delta))$\tabularnewline
\rtree  & uniform & 64  \cite{ElbassioniGGKN12}& asymp. 5.1, abs. $5.5$\tabularnewline
\rtree  & NBA & 64  \cite{ElbassioniGGKN12}& asymp. 49, abs. 55\tabularnewline
\hline 
\end{tabular}
\vspace{0.4cm}
\caption{Overview of our results. We distinguish the settings according to uniform edge capacities, the no-bottleneck-assumption (NBA), general edge capacities, and general edge capacities with $(1+\delta)$-resource augmentation (r.a.). Also, we distinguish
between absolute approximation ratios and asymptotic approximation ratios. All listed previous results are absolute approximation ratios.
}
\label{table:results} 
\end{table}

\subsection{Other Related Work\label{subsec:relwork}}
Adamy and Erlebach~\cite{adamy2003online}
studied \rufp~for uniform edge capacities in the online setting and gave a 195-approximation
algorithm, which was subsequently improved to 10 \cite{narayanaswamy2004dynamic,azar2006improved}.
Without the NBA, Epstein et al.~\cite{epstein2009online} showed that
no deterministic online algorithm can achieve a competitive ratio better
than $\Omega(\log\log n)$ or $\Omega(\log\log\log(c_{\max}/c_{\min}))$.
They also gave a $O(\log c_{\max}$)-competitive algorithm, where
$c_{\max}$ is the largest edge capacity. Without the NBA, recently Jahanjou
et al.~\cite{JahanjouKR17} gave 
a $O(\min(\log m,\log\log c_{\max}))$-competitive
algorithm. 

\rufp\ and \rsap\ are related to many fundamental optimization
problems. For example, \rsap~can be interpreted as an intermediate
problem between two-dimensional bin packing (2BP) and the rectangle coloring
problem (RC). In 2BP, the goal is to find an axis-parallel nonoverlapping
packing of a given set of rectangles (which we can translate in both
dimensions) into minimum number of unit square bins. If all edges
have the same capacity then \rsap~can be seen as a variant of 2BP
in which the horizontal coordinate of each item is fixed and we can
choose only the vertical coordinate. For 2BP, the present best asymptotic
approximation guarantee is 1.406 \cite{bansal2014binpacking}. On
the other hand, in RC, all rectangles are fixed and the goal is to
color the rectangles using a minimum number of colors such that no two
rectangles of the same color intersect. For RC, recently Chalermsook
and Walczak~\cite{parinbar} have given a polynomial-time algorithm that uses
only $O(\omega\log\omega)$ colors, where $\omega$ is the clique number of the corresponding
intersection graph (and hence a lower bound on the number of needed colors). Another related problem is Dynamic Storage Allocation
(DSA), where the objective is to pack the given tasks (with fixed
horizontal location) such that the maximum vertical height, $\max_{\js}(h_{\js}+d_{\js})$
(called the \textit{makespan}) is minimized. The current best known approximation ratio 
for DSA is $(2+\eps)$ \cite{BuchsbaumKKRT03}.

In a sense \rufp and \rsap~are `{\sc{Bin Packing}}-type' problems, and their corresponding
`{\sc{Knapsack}}-type' problems are UFP and SAP, respectively, where each
task has an associated profit and the goal is to select a subset of
tasks which can be packed into one single round satisfying the corresponding
valid packing constraints. There is a series of work \cite{GrandoniMW017,bansal2014logarithmic,BatraG0MW15,AnagnostopoulosGLW14,GMWZ018, GMW22a}
in UFP, culminating in a PTAS~\cite{GMW22b}. It is maybe surprising that
\rufp~does not admit an APTAS, even though  UFP admits a PTAS. 
For SAP, the currently best polynomial time approximation ratio is
$2+\eps$ \cite{MomkeW15}, which has been recently improved to $1.969+\eps$
\cite{MomkeW20} for the case of uniform capacities, and also a quasi-polynomial time
$(1.997+\eps)$-approximation is known for quasi-polynomially bounded input data.


There are many other related problems, such as two-dimensional knapsack
\cite{GalvezGHI0W17,Jansen2004, Galvez00RW21}, strip packing \cite{GGIK16,Galvez0AJ0R20, DeppertJ0RT21},
maximum independent set of rectangles \cite{AHW19,parinbar,mitchellFocs, Galvez22}, guillotine
separability of rectangles \cite{KMR20, khan2021guillotine, KS21}, weighted bipartite edge
coloring \cite{KhanS15}, maximum edge disjoint paths \cite{chuzhoy2016polylogarithmic},
etc. We refer the readers to \cite{khan2015approximation,CKPT17}
for an overview of these problems.

\section{Preliminaries}

\label{sec:prelim} 
Let $\opt_{UFP}$ and $\opt_{SAP}$ denote the optimal number of rounds required to
pack all jobs of a given instance of \rufp and \rsap, respectively. By simple preprocessing, we can assume that each vertex
in $V$ corresponds to endpoint(s) of some job(s) in $J$, and hence $m\le2n-1$.
For each job $\js$ denote by $\mathfrak{b}_{\js}$ the minimum capacity of 
the edges in $P_{\js}$, i.e., $\min\{c_{e}:e\in P_{\js}\}$ which we denote as the \textit{bottleneck
capacity} of $\js$. Job $\js$
is said to \textit{pass through} edge $e$ if $e\in P_{\js}$. The \textit{load} on edge $e$ is defined as $l_{e}:=\sum_{e\in P_{\js}}d_{\js}$,
the total sum of demands of all jobs passing through $e$. Let $L:=\max_{e}l_{e}$
denote the maximum \textit{load}. The \textit{congestion} $r_{e}$
of edge $e$ is defined as $r_{e}:=\lceil l_{e}/c_{e}\rceil$, and
$r:=\max_{e}r_{e}$ denotes the maximum \textit{congestion}.
Clearly, $r$ is a lower bound on $\opt_{UFP}$ and $\opt_{SAP}$. 




\section{Lower Bounds}

\label{section:lowerboundnew} A simple reduction from the \textsc{Partition}
problem shows that it is NP-hard to obtain a better approximation ratio 
than $3/2$
for the classical \textsc{Bin Packing} problem. However, in the resulting
instances, the optimal solutions use only two or three bins. On the
other hand, \textsc{Bin Packing} admits an APTAS~\cite{de1981bin} and thus, 
for any $\eps>0$, a $(1+\eps)$-approximation algorithm for
instances in which $OPT$ is sufficiently large. 
Since \rsap\ and \rufp\ are generalizations of \textsc{Bin Packing} (even if $G$ has only a single edge),
the lower bound of $3/2$ continues to hold. However, maybe surprisingly,
we show that unlike \textsc{Bin Packing}, \rsap\ and \rufp\ do
not admit APTASes, even in the case of  uniform edge capacities. More
precisely, we provide a lower bound of $(1+1/1398)$ on the asymptotic
approximation ratio for \rsap\ and \rufp\ via a reduction from
the \textsc{2-Bounded Occurrence Maximum 3-Dimensional Matching} (\sbmdm)
problem. 

In \sbmdm, we are given as input three pairwise disjoint sets $X:=\{x_{1},x_{2},\dots,x_{q}\}$,
$Y:=\{y_{1},y_{2},\dots,y_{q}\}$, and $Z:=\{z_{1},z_{2},\dots,z_{q}\}$
and a set of triplets $\Ts\subseteq X\times Y\times Z$ such that
each element of $X\cup Y\cup Z$ occurs in exactly two triplets in
$\Ts$. Note that $|X|=|Y|=|Z|=q$ and $|\Ts|=2q$. A {\em matching}
is a subset $M\subseteq\Ts$ such that no two triplets in $M$ agree
in any coordinate. The goal is to find a matching of maximum cardinality
(denoted by $OPT_{3DM}$). Chlebik and Chlebikova \cite{chlebik2006complexity}
gave the following hardness result. 

\begin{theorem} [\cite{chlebik2006complexity}]\label{theorem:chlebikhardness_new}
For \sbmdm~there exists a family of instances such that for each
instance $K$ of the family, either $OPT_{3DM}(K)<\alpha(q):=\lfloor0.9690082645q\rfloor$
or $OPT_{3DM}(K)\ge\beta(q):=\lceil0.979338843q\rceil$, and it is
NP-hard to distinguish these two cases. 

\end{theorem}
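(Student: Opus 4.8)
The plan is to treat Theorem~\ref{theorem:chlebikhardness_new} as a black box imported from~\cite{chlebik2006complexity}: it is precisely the explicit gap-hardness for \sbmdm\ proved there, so no fresh argument is needed and we simply invoke it. For completeness, though, let me outline how such an explicit inapproximability gap for bounded-occurrence 3-dimensional matching is established, since the concrete thresholds $\alpha(q)=\lfloor 0.9690082645\,q\rfloor$ and $\beta(q)=\lceil 0.979338843\,q\rceil$ are exactly what will later propagate into our $(1+1/1398)$ lower bound for \rsap\ and \rufp.

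The starting point is a PCP-derived hard gap problem in which every variable occurs only a bounded number of times. One begins with a constraint-satisfaction problem of unbounded occurrences that already has an \emph{explicit} gap (e.g. a PCP-type inapproximability result for MAX-E3-LIN-2 or MAX-E3-SAT), and then applies the expander-replacement technique: a variable occurring $k$ times is replaced by $k$ copies, and the copies are forced to agree via equality constraints placed along the edges of a constant-degree expander. Expansion ensures that any assignment violating many equality constraints is itself far from optimal, so the original gap survives up to an explicitly computable loss, while each variable now occurs $O(1)$ times. Calibrating the expander degree against the number of added constraints is the delicate, quantitative heart of the construction, and is where the somewhat unwieldy decimal constants ultimately originate.

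Next I would compose the resulting bounded-occurrence CSP with a gadget reduction into 3DM --- either through the classical chain via bounded-occurrence MAX-2-SAT, or a direct local-replacement reduction in the style of Garey and Johnson, using truth-setting, satisfaction-testing, and garbage-collection triples. In the YES case nearly all gadgets can be satisfied and yield a near-perfect matching; in the NO case a constant fraction of the satisfaction gadgets must fail, each failure removing a fixed number of triples from every matching, which is what produces the two-sided threshold between $\alpha(q)$ and $\beta(q)$. To meet the requirement that each element of $X\cup Y\cup Z$ lies in \emph{exactly} two triples (not merely at most two), one pads with dummy elements and triples so as to raise degree-one elements to degree two without creating any new matching opportunities, ensuring the gap is not diluted.

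The last step is bookkeeping: track completeness and soundness values through every gadget, count the number of elements ($q$) and triples ($2q$) produced, and read off $\alpha(q)$ and $\beta(q)$. The main obstacle is not any single reduction --- each is standard --- but satisfying all of (i) the occurrence bound being exactly $2$, (ii) an \emph{explicit} numerical gap rather than just ``some $\eps>0$'', and (iii) NP-hardness rather than hardness under a conjecture, simultaneously; this is what forces the careful, quantitatively tight gadget engineering that constitutes the work of~\cite{chlebik2006complexity}, and for our purposes we simply cite the resulting statement.
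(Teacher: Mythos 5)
Your proposal is correct and matches the paper's treatment: Theorem~\ref{theorem:chlebikhardness_new} is imported verbatim from~\cite{chlebik2006complexity} and the paper supplies no proof of its own, so citing it as a black box is exactly what is done. Your supplementary sketch of the expander-replacement and gadget-reduction machinery behind the explicit constants is reasonable background but is not part of, or needed for, the paper's argument.
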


Hardness of \sbmdm~has been useful in inapproximability results
for various (multidimensional) packing, covering, and scheduling problems,
e.g. vector packing \cite{woeginger1997there}, geometric bin packing
\cite{bansal2006bin}, geometric bin covering \cite{chlebik2009hardness},
generalized assignment problem \cite{chekuri2005polynomial}, etc.
Similar to these results, we also use gadgets based on a reduction
from \sbmdm\ to the 4-Partition problem. 
However, the previous techniques are not directly transferable to
our problem due to the inherent differences between these problems.
Therefore, we first use the technique from \cite{woeginger1997there}
to associate certain integers with the elements of $X\cup Y\cup Z$
and $\Ts$ and then adapt the numeric data in a different way to obtain
the hard instances. 

Let $\rho=32q$ and let $\Vs$ be the set of $5q$ integers defined
as follows: $x'_{i}=i\rho+1\text{, for }1\le i\le|X|$, $y'_{j}=j\rho^{2}+2\text{, for }1\le j\le|Y|$,
$z'_{k}=k\rho^{3}+4\text{, for }1\le k\le|Z|,$ $\tau'_{l}=\rho^{4}-k\rho^{3}-j\rho^{2}-i\rho+8\text{, for each triplet }\tau_{l}=(x_{i},y_{j},z_{k})\in\Ts.$
Define $\gamma=\rho^{4}+15$. The following result is due to Woeginger
\cite{woeginger1997there}.

\begin{lemma} [\cite{woeginger1997there}]\label{lemma:woegingerlemma}
Four integers in $\Vs$ sum up to the value $\gamma$ if and only
if (i) one of them corresponds to some element $x_{i}\in X$, one
to some element $y_{j}\in Y$, one to an element $z_{k}\in Z$, and
one to some triplet $\tau_{l}\in\Ts$, and if (ii) $\tau_{l}=(x_{i},y_{j},z_{k})$
holds for these four elements. \end{lemma}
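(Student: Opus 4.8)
The plan is to prove the two directions separately. The forward implication is a direct computation: if the four integers are $x'_i$, $y'_j$, $z'_k$, and $\tau'_l$ with $\tau_l=(x_i,y_j,z_k)$, then adding the closed-form expressions makes the $i\rho$, $j\rho^2$, and $k\rho^3$ terms cancel against their negatives in $\tau'_l$, leaving exactly $\rho^4+(1+2+4+8)=\rho^4+15=\gamma$.

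For the converse, I would argue modulo $\rho$. Since $\rho$ divides $\rho^t$ for every $t\ge 1$, each integer in $\Vs$ is congruent modulo $\rho$ to its additive constant, i.e.\ to $1$, $2$, $4$, or $8$ according to whether it encodes an element of $X$, of $Y$, of $Z$, or a triplet. Hence if four such integers sum to $\gamma\equiv 15\pmod{\rho}$, the sum $s$ of their four constants satisfies $s\equiv 15\pmod{\rho}$; but $s\in\{4,5,\dots,32\}$ and $\rho=32q\ge 32$, so necessarily $s=15$. A one-line case analysis (counting how many $1$s, $2$s, $4$s, $8$s are used, i.e.\ solving $a+b+c+d=4$ and $a+2b+4c+8d=15$) shows that $15$ is a sum of four elements of $\{1,2,4,8\}$ only as $1+2+4+8$. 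Consequently the four chosen integers comprise exactly one $x'$-value, one $y'$-value, one $z'$-value, and one $\tau'$-value, which is statement~(i); in particular they are automatically distinct.

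It then remains to pin down the indices. Writing the four integers as $x'_i$, $y'_j$, $z'_k$, and $\tau'_l$ with $\tau_l=(x_{i'},y_{j'},z_{k'})$, substituting into $x'_i+y'_j+z'_k+\tau'_l=\gamma$, and cancelling $\rho^4$ together with the constants yields $(i-i')\rho+(j-j')\rho^2+(k-k')\rho^3=0$, i.e.\ $(i-i')+(j-j')\rho+(k-k')\rho^2=0$ after dividing by $\rho$. Because all indices lie in $\{1,\dots,q\}$, the three differences have absolute value at most $q-1<\rho$, so the usual uniqueness of base-$\rho$ representations (extract the residue modulo $\rho$, then divide and repeat) forces $i=i'$, then $j=j'$, then $k=k'$; thus $\tau_l=(x_i,y_j,z_k)$, which is statement~(ii). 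The only points requiring care — and the part I would flag as the ``main obstacle'', although it is routine — are checking that $\rho=32q$ is large enough that no sum of constants other than $15$ is congruent to $15$ modulo $\rho$ and that the index differences cannot be nonzero multiples of $\rho$; both are immediate from $1\le i,j,k\le q$ and $\rho=32q$. The whole argument is the classical device of encoding a $3$-dimensional matching (via $4$-partition) in a positional number system, with the powers $1,2,4,8$ serving as pairwise-separated ``type tags''.
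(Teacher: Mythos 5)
Your proof is correct. The paper does not reproduce a proof of this lemma---it is cited directly from Woeginger's paper \cite{woeginger1997there}---so there is no in-paper argument to compare against, but the argument you give (reading residues modulo $\rho$ to recover the type tags $1,2,4,8$, observing that $a+b+c+d=4$ and $a+2b+4c+8d=15$ forces $a=b=c=d=1$, and then matching indices via uniqueness of base-$\rho$ representation using $|i-i'|,|j-j'|,|k-k'|\le q-1<\rho$) is precisely the standard positional-encoding device behind Woeginger's construction.
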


Now, we create a hard instance, tailor-made for our problems. We define
that our path $G=(V,E)$ has $40000\gamma$ vertices
that we identify with the numbers $0,1,...,40000\gamma$. For each
$x_{i}\in X$ (respectively $y_{j}\in Y$, $z_{k}\in Z$), we specify
two jobs $a_{X,i}$ and $a'_{X,i}$ (respectively $a_{Y,j}$, $a'_{Y,j}$
and $a_{Z,k}$, $a'_{Z,k}$), which will be called \textit{peers}
of each other. Each job $\js$ is specified by a triplet $(s_{\js},t_{\js},d_{\js})$.
We define
\begin{itemize}
\item $a_{X,i}=(0,20000\gamma-4x'_{i},999\gamma+4x'_{i})\text{ and }a'_{X,i}=(20000\gamma-4x'_{i},40000\gamma,1001\gamma-4x'_{i})$,
\item $a_{Y,j}=(0,20000\gamma-4y'_{j},999\gamma+4y'_{j})\text{ and }a'_{Y,j}=(20000\gamma-4y'_{j},40000\gamma,1001\gamma-4y'_{j})$,
and 
\item $a_{Z,k}=(0,20000\gamma-4z'_{k},999\gamma+4z'_{k})\text{ and }a'_{Z,k}=(20000\gamma-4z'_{k},40000\gamma,1001\gamma-4z'_{k})$. 
\end{itemize}
For each $\tau_{l}\in\Ts$, we define two jobs $b_{l}$ and $b'_{l}$
(also \textit{peers}) by: 
\begin{itemize}
\item $b_{l}=(0,19001\gamma-4\tau'_{l},999\gamma+4\tau'_{l})\text{ and }b'_{l}=(19001\gamma-4\tau'_{l},40000\gamma,1001\gamma-4\tau'_{l})$. 
\end{itemize}
Finally let $D$ be a set of $5q-4\beta(q)$ \emph{dummy} jobs each
specified by $(0,40000\gamma,2997\gamma)$. We define that each edge
$e\in E$ has a capacity of $c_{e}:=\cs:=4000\gamma$. This completes
the reduction. For any job $\js=(s_{\js},t_{\js},d_{\js})$ we define
its \emph{width} $w_{\js}:=t_{\js}-s_{\js}$. 

Let $A_{X}:=\{a_{X,i}\mid1\le i\le q\}$ and $A'_{X}:=\{a'_{X,i}\mid1\le i\le q\}$.
The sets $A_{Y}$, $A'_{Y}$, $A_{Z}$, $A'_{Z}$ are defined analogously.
Let $A:=A_{X}\cup A_{Y}\cup A_{Z}$ and $A':=A'_{X}\cup A'_{Y}\cup A'_{Z}$.
Finally let $B:=\{b_{l}\mid1\le l\le2q\}$ and $B':=\{b'_{l}\mid1\le l\le2q\}$. 

To provide some intuition, we first give an upper bound on the number
of jobs that can be packed in a round. All following lemmas, statements,
and constructions hold for both \rsap\ and \rufp. 
\begin{lemma}
\label{lem:maxnumberofjobs}
In any feasible solution any round can contain at most 8 jobs. 
\end{lemma}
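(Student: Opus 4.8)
The plan is to exhibit two edges of $G$ such that every job of the instance runs over at least one of them, and then to show that in any round at most four jobs can simultaneously use a fixed edge; multiplying these two facts gives the bound $8$. Since the argument invokes only the edge-capacity constraints, it applies verbatim to both \rsap\ and \rufp, as claimed.

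First I would let $e_L$ be the edge between $v_0$ and $v_1$ and $e_R$ the edge between $v_{40000\gamma-1}$ and $v_{40000\gamma}$; both have capacity $\cs=4000\gamma$. By construction every job in $A\cup B$ has source $v_0$, so it runs over $e_L$; every job in $A'\cup B'$ has sink $v_{40000\gamma}$, so it runs over $e_R$; and every dummy job in $D$ runs over both. Hence every job of the instance uses $e_L$ or $e_R$ (or both). Next I would verify the numeric fact that every job has demand strictly more than $997\gamma$: the jobs in $A\cup B$ have demand $999\gamma$ plus a positive term, the dummy jobs have demand $2997\gamma$, a job in $A'$ has demand $1001\gamma-4x'_i$ (or $1001\gamma-4y'_j$, $1001\gamma-4z'_k$), and a job in $B'$ has demand $1001\gamma-4\tau'_l$; here $x'_i,y'_j,z'_k$ and $\tau'_l$ are all strictly smaller than $\gamma=\rho^4+15$ because $\rho=32q$ (for instance $\tau'_l=\rho^4-k\rho^3-j\rho^2-i\rho+8<\rho^4+8<\gamma$, and $z'_k\le q\rho^3+4<32q\rho^3+15=\gamma$), so each such demand exceeds $1001\gamma-4\gamma=997\gamma$.

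Finally I would combine the two observations: if $k$ jobs of a single round pass through one common edge $e$, then the capacity constraint gives $997k\gamma<\sum_{\js}d_{\js}\le c_e=4000\gamma$, hence $k\le 4$; applying this to $e_L$ and to $e_R$ and adding, no round can contain more than $4+4=8$ jobs. The only step that needs a little care is the demand estimate, which in turn reduces to the elementary observation that $x'_i,y'_j,z'_k,\tau'_l<\gamma$; this is immediate from their definitions together with $\rho=32q$.
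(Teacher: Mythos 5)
Your proof is correct and rests on the same core observations that drive the paper's argument: every job must pass through the leftmost edge $e_1$ or the rightmost edge $e_m$, both of capacity $\cs=4000\gamma$, and every job has demand exceeding $997\gamma$ (the content of \Cref{lemma:inequalities} plus the explicit $2997\gamma$ for dummies), so a round can place at most four jobs on either of these two edges. The paper organizes this as a case split: it first invokes \Cref{lem:propertyofdummyrounds} to cap any round that contains a dummy job at three jobs, and then applies the edge-counting bound only to dummy-free rounds, getting four jobs from $A\cup B$ through $e_1$ and four from $A'\cup B'$ through $e_m$. You instead give a single unified argument, noting that the dummy jobs also pass through both end edges and also have demand above $997\gamma$, and close with the union bound $|S_{e_1}\cup S_{e_m}|\le|S_{e_1}|+|S_{e_m}|\le 8$. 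That is a modest but genuine streamlining: you avoid taking \Cref{lem:propertyofdummyrounds} as a prerequisite, at the harmless cost (for this lemma) of not recovering the tighter bound of three for dummy rounds, which the paper needs separately in the counting argument of \Cref{lem:reduction-2}. Your numeric verification that $x'_i,y'_j,z'_k,\tau'_l<\gamma$ is also sound and suffices to give the demand lower bound for $A'\cup B'$.
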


We say that a round is \emph{nice }if it contains exactly 8 jobs.
It turns out that such a round corresponds exactly to one element
$\tau_{l}=(x_{i},y_{j},z_{k})\in\Ts$. We say that the jobs $a_{X,i},a'_{X,i},a_{Y,j},a'_{Y,j},a_{Z,k},a'_{Z,k},b_{l},$
and $b'_{l}$ \emph{correspond }to $\tau_{l}=(x_{i},y_{j},z_{k})$. 
\begin{lemma}
\label{lem:propertyofnicerounds}
We have that a round is nice if and only if there is an element $\tau_{l}=(x_{i},y_{j},z_{k})\in\Ts$
such that the round contains exactly the jobs that correspond to $\tau_{l}=(x_{i},y_{j},z_{k})$. 
\end{lemma}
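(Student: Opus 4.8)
For the ``if'' direction, suppose the round consists of the eight jobs corresponding to a fixed $\tau_l=(x_i,y_j,z_k)\in\Ts$. The plan is to observe that these jobs form four \emph{peer pairs} $\{a_{X,i},a'_{X,i}\}$, $\{a_{Y,j},a'_{Y,j}\}$, $\{a_{Z,k},a'_{Z,k}\}$, $\{b_l,b'_l\}$, each of which partitions the path at its common split point (the sink of the left member equals the source of the right member), so on every edge exactly one member of each pair is present and contributes $999\gamma+4w$ (left member) or $1001\gamma-4w$ (right member), where $w\in\{x'_i,y'_j,z'_k,\tau'_l\}$ is the $\Vs$-integer attached to that pair. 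If $r$ of the pairs have their split point to the left of the current edge, the total demand there equals $4000\gamma+2\gamma r-8\sigma$, where $\sigma$ is the sum of the integers of those $r$ pairs; since the $b$-pair's split point lies to the left of the three $a$-pairs' split points, $r\ge1$ gives $\sigma\ge\tau'_l>\tfrac34\gamma$, hence $2\gamma r\le 8\sigma$ when $r\le3$, while for $r=4$ one has $\sigma=x'_i+y'_j+z'_k+\tau'_l=\gamma$ by Lemma~\ref{lemma:woegingerlemma}; so the demand never exceeds $\cs$, and this is a valid \rufp\ packing. For \rsap\ I would additionally exhibit a non-overlapping rectangle packing: stack the four left members upward in order of decreasing split point and the four right members downward (from height $\cs$) in order of increasing split point; on each edge the present left members occupy a bottom block and the present right members a top block, whose heights sum to exactly the demand on that edge, so they do not overlap.

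For the ``only if'' direction it suffices to treat a round $R$ of eight jobs that respects all edge capacities (a valid \rsap\ round is in particular such a round). First, $R$ contains no dummy job: a dummy uses height $2997\gamma$ on every edge and every non-dummy job has demand $>997\gamma$, so since $2997\gamma+2\cdot997\gamma>\cs$ a dummy shares an edge with at most one non-dummy job; as every non-dummy job contains the first or the last edge, a round with a dummy has at most two further jobs, hence at most three in total. Thus all eight jobs lie in $A\cup A'\cup B\cup B'$, each contains exactly one of the two extreme edges, and since $5\cdot997\gamma>\cs$ exactly four contain the first edge (the ``left'' jobs) and four the last edge (the ``right'' jobs); also every edge carries at most four jobs of $R$. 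Next I would show exactly one left job lies in $B$ and exactly one right job in $B'$: two $B$-jobs among the left jobs put demand $>2\cdot1002\gamma+2\cdot999\gamma>\cs$ on the first edge (a $B$-job has demand $>1002\gamma$), and if no left job lies in $B$ then on an edge lying between all $B$-split points and all $a$-split points the four $A$-left jobs together with every $B'$-right job are present at once, so no right job lies in $B'$ either, but then four $A'$-jobs, whose integers sum to less than $\gamma$, put demand $>4000\gamma$ on the last edge; the argument for $B'$ is symmetric. Write $b_l$ and $b'_{l'}$ for these two jobs and let $V=\{v_1,v_2,v_3\}$, $U=\{u_1,u_2,u_3\}$ be the integers of the three remaining left jobs, respectively right jobs.

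The crux is to force $R$ to consist of four peer pairs. I would read off four facts from the capacity constraints: (A) the first edge gives $v_1+v_2+v_3+\tau'_l\le\gamma$; (B) the last edge gives $u_1+u_2+u_3+\tau'_{l'}\ge\gamma$; (C) if $\tau'_l<\tau'_{l'}$ there is an edge carrying $b_l$, $b'_{l'}$ and all three left $A$-jobs, five jobs in all, which is impossible, so $\tau'_l\ge\tau'_{l'}$; (D) on the edges between the $a$-split points at most three $A$- and $A'$-jobs are present at once, and as these switch off/on one at a time in decreasing order of their integers this forces $v_{[t]}\ge u_{[t]}$ for $t=1,2,3$ (integers listed in decreasing order), in particular $v_1+v_2+v_3\ge u_1+u_2+u_3$. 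Combining (A), (B) and (C) gives $v_1+v_2+v_3\le u_1+u_2+u_3$, so with (D) the two sums are equal, whence $V=U$ as sets, (C) becomes an equality so $l=l'$, and then (A) and (B) force $v_1+v_2+v_3+\tau'_l=\gamma$. Thus $\{v_1,v_2,v_3,\tau'_l\}$ are four distinct members of $\Vs$ summing to $\gamma$, so by Lemma~\ref{lemma:woegingerlemma} they are $\{x'_i,y'_j,z'_k,\tau'_l\}$ with $\tau_l=(x_i,y_j,z_k)$, and since $V=U$ and $l=l'$ the round is exactly $\{a_{X,i},a'_{X,i},a_{Y,j},a'_{Y,j},a_{Z,k},a'_{Z,k},b_l,b'_l\}$, the jobs corresponding to $\tau_l$.

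The main obstacle is this third step. A priori there are ``almost feasible'' non-peer configurations---for example two $A_Z$-jobs among the left jobs and two among the right jobs satisfying (A), (B) and the ``four per edge'' bound---and excluding them requires exactly the right edges (the two extreme edges, an edge just left of the $a$-split band, an edge between the two $b$-split points, and all edges inside the $a$-split band), together with the observation that inequalities (A)--(D) can hold simultaneously only when each of them is tight. This relies on the precise structure of $\Vs$ (so that $\tau'$-values dominate and a one-each $\{x',y',z',\tau'\}$ combination summing to $\gamma$ is essentially forced) and on the numerical gap between $B$-demands ($>1002\gamma$) and $A'$-demands ($<1001\gamma$).
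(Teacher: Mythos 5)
Your proof is correct, but for the ``only if'' direction it takes a genuinely different route from the paper. The paper's key move is to exploit the identity $\sum_{\js}(d_{\js}+w_{\js})=168000\gamma$ over any configuration of three $A$-jobs, one $B$-job, three $A'$-jobs, one $B'$-job (Lemma~\ref{lemma:lemma1new}): adding only three constraints --- the capacity constraint on the leftmost edge, the capacity constraint on the rightmost edge, and the trivial total-width bound $\sum w_{\js}\le 4\cdot 40000\gamma$ --- gives $\sum(d_{\js}+w_{\js})\le 168000\gamma$, which forces all three to be tight; Lemma~\ref{lemma:lemma2new} then immediately yields that the four left jobs correspond to a triplet and the four right jobs are their peers. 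You instead parametrize the left and right $A$-jobs by their $\Vs$-integers $V=\{v_1,v_2,v_3\}$, $U=\{u_1,u_2,u_3\}$, and extract four inequalities (A)--(D) from four different families of edges (the two extremes, an edge between the $b$-split points, and the band of edges between $a$-split points), then argue that all four must hold with equality. Both routes pass through the same intermediate structural facts (no dummy, a 4/4 left/right split, exactly one $B$-job and one $B'$-job), established by essentially the same load arguments --- though you invoke $d_b>1002\gamma$, which indeed holds for this instance family, whereas the paper's Lemma~\ref{lemma:inequalities} supplies only $d_b>1001\gamma$, which already suffices since $2\cdot1001\gamma+2\cdot999\gamma=4000\gamma$ with strict inequality. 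The paper's ``width plus demand'' argument is shorter and avoids the sweep-line reasoning that you correctly identify as the delicate part of your step~(D); your version makes the geometry of the split points explicit but needs more bookkeeping (in particular, verifying that $b'_{l'}$ is present and $b_l$ absent throughout the $a$-split band, and that events occur one at a time). Your ``if'' direction is also valid and, unlike the paper, gives an explicit vertical arrangement (two stacked blocks growing/shrinking monotonically) rather than citing Figure~\ref{fig:Awellpackedbin_new}, which uses an interleaved packing; both are correct.
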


Given an optimal solution $OPT_{3DM}$ to \sbmdm~ with $|OPT_{3DM}|\ge\beta(q)$, 
we construct a solution as follows:
\begin{enumerate}
\item Let $\Ms$ be any subset of $OPT_{3DM}$ with $|\Ms|=\beta(q)$. Create $\beta(q)$ \textit{\emph{nice}}\textit{ }rounds corresponding
to the elements in $\Ms$, i.e., for each element $\tau_{l}=(x_{i},y_{j},z_{k})\in \Ms$,
create a round containing the jobs that correspond to $\tau_{l}=(x_{i},y_{j},z_{k})$. 
\item For each $\tau_{l}\in\Ts\setminus \Ms$ create a round containing $b_l$ and $b'_l$ along with a dummy job. 
\item For each $x_{i}\in X$ (respectively $y_{j}\in Y$, $z_{k}\in Z$)
not covered by $\Ms$, pack $a_{X,i}$ and $a'_{X,i}$ (respectively
$a_{Y,j}$, $a'_{Y,j}$ and $a_{Z,k}$, $a'_{Z,k}$) together with
one dummy job in one round. 
\end{enumerate}
\begin{lemma}
\label{lem:reduction-1}If $|OPT_{3DM}|\ge\beta(q)$ then the constructed
solution is feasible and it uses at most $5q-3\beta(q)$ rounds.
\end{lemma}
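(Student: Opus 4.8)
The plan is to verify three things: that the construction assigns each job of $J$ to exactly one round, that it uses no more than the $5q-4\beta(q)$ dummy jobs in $D$, and that each of the three kinds of rounds it creates is a valid packing both for \rufp\ and for \rsap; the bound $5q-3\beta(q)$ on the number of rounds then follows by counting.

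First I would check that the assignment is a partition of $J$ and that $D$ suffices. Since $\Ms\subseteq OPT_{3DM}$ is a matching, its $\beta(q)$ triplets pairwise disagree in every coordinate, so together they cover exactly $\beta(q)$ elements of $X$, $\beta(q)$ of $Y$ and $\beta(q)$ of $Z$; hence $3\beta(q)$ elements are covered and $3q-3\beta(q)$ are not. Each peer job in $A\cup A'$ then lies in exactly one round: the Step~1 round of the (unique) triplet of $\Ms$ covering its element, if that element is covered, and otherwise the Step~3 round of its element. Each peer job in $B\cup B'$ also lies in exactly one round: a Step~1 round if its triplet lies in $\Ms$, and otherwise a Step~2 round. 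Steps~2 and~3 use one dummy job per round, a total of $(2q-\beta(q))+(3q-3\beta(q))=5q-4\beta(q)=|D|$ dummies, so exactly the jobs in $D$ are used. Counting rounds gives $\beta(q)$ (Step~1) $+\,|\Ts\setminus\Ms|$ (Step~2) $+\,(3q-3\beta(q))$ (Step~3) $=\beta(q)+(2q-\beta(q))+(3q-3\beta(q))=5q-3\beta(q)$.

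Next I would verify the feasibility of each kind of round. The Step~1 (nice) rounds are valid for both problems by Lemma~\ref{lem:propertyofnicerounds} (whose proof exhibits the required packing of the eight jobs corresponding to a triplet $\tau_l=(x_i,y_j,z_k)$). The key points there are that a job and its peer share only a single vertex, so every edge carries at most four of the eight jobs; that the edges incident to vertex $0$ carry $a_{X,i},a_{Y,j},a_{Z,k},b_l$ with total demand $3996\gamma+4(x'_i+y'_j+z'_k+\tau'_l)=4000\gamma=\cs$ by Lemma~\ref{lemma:woegingerlemma}, and symmetrically the four primed jobs saturate the edges incident to vertex $40000\gamma$; and that on each of the few intermediate segments the load is strictly below $\cs$, which follows from the same identity together with the explicit value $\tau'_l=\gamma-(k\rho^3+j\rho^2+i\rho+7)$. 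A Step~2 round consists of $b_l$, $b'_l$, and one dummy job $(0,40000\gamma,2997\gamma)$; since $b_l$ and $b'_l$ have disjoint spans, the load on any edge equals $2997\gamma$ plus one of $999\gamma+4\tau'_l$ or $1001\gamma-4\tau'_l$, and both $3996\gamma+4\tau'_l$ and $3998\gamma-4\tau'_l$ are strictly below $4000\gamma=\cs$ because $0<\tau'_l<\gamma$; for \rsap\ one places the dummy as a full-width rectangle of height $2997\gamma$ at the bottom and $b_l,b'_l$ as horizontally disjoint rectangles above it, whose tops are below $\cs$. A Step~3 round is handled identically with $\tau'_l$ replaced by one of $x'_i,y'_j,z'_k$, using $0<x'_i,y'_j,z'_k<\gamma$, which is immediate from $x'_i=i\rho+1$, $y'_j=j\rho^2+2$, $z'_k=k\rho^3+4$ with $i,j,k\le q$ and $\rho=32q$, $\gamma=\rho^4+15$.

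I expect the only non-trivial ingredient to be the feasibility of the nice rounds — in particular exhibiting a contiguous \rsap\ packing of eight jobs whose spans overlap in a nontrivial pattern — and this is exactly what Lemma~\ref{lem:propertyofnicerounds} provides; the remainder is the bookkeeping of the previous paragraph and the elementary inequalities above. The one place to be careful is that the capacity checks on the intermediate edges of a nice round must use the exact numbers from $\Vs$ rather than order-of-magnitude estimates.
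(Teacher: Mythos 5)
Your proof is correct and follows essentially the same counting strategy as the paper, but it is considerably more careful on two points. First, the paper writes that Step~3 constructs $3|\Ts\setminus\Ms| = 3q - 3|OPT_{3DM}|$ rounds, which is neither a correct equality nor the right count for Step~3; the correct number is $3q - 3\beta(q)$ (the number of elements of $X\cup Y\cup Z$ uncovered by $\Ms$), which is what you compute. (The paper's bound is rescued only because the sign of the error happens to go the right way via $|OPT_{3DM}|\ge\beta(q)$.) Second, the paper dismisses feasibility with ``one can easily check,'' whereas you actually supply the argument for Steps~2 and~3 (dummy at the bottom, two peer jobs with disjoint spans above, explicit load bounds using $0<\tau'_l,x'_i,y'_j,z'_k<\gamma$) and correctly identify that the nice-round feasibility is exactly the ``if'' direction of Lemma~\ref{lem:propertyofnicerounds}, whose figure in the appendix exhibits the packing. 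Your closing caution is also apt: the load on intermediate edges of a nice round is not trivially below $\cs$ (e.g.\ between the endpoints $19001\gamma-4\tau'_l$ and $20000\gamma-4z'_k$ the load is $4002\gamma-8\tau'_l$, so one needs $\tau'_l\ge\gamma/4$, which holds because $\tau'_l\ge(1-\tfrac1{32})\rho^4 - O(\rho^3)$), so one must indeed use the exact numbers from $\Vs$.
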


\begin{proof}
One can easily check that all constructed rounds are feasible. In
step (1) we construct exactly $\beta(q)$ rounds. In step (2), we
construct $|\Ts|-\beta(q) = 2q-\beta(q)$ rounds, since
$|\Ts|=2q$. In step (3), we construct $3|\Ts\setminus \Ms|=3q-3|OPT_{3DM}|$
rounds. Hence, overall we construct at most $5q-3|OPT_{3DM}|\le5q-3\beta(q)$
rounds.
\end{proof}
Conversely, assume that $|OPT_{3DM}| < \alpha(q)$ and that we are
given any feasible solution to our constructed instance. We want to
show that it uses at least $5q-3\beta(q)+\frac{1}{7}(\beta(q)-\alpha(q))$
rounds. For this, a key property of our construction is given in the
following lemma.
\begin{lemma}
\label{lem:propertyofdummyrounds}
If a round contains a dummy job, then it can have at most three jobs:
at most one dummy job, at most one job from $A\cup B$, and at most one job from $A'\cup B'$. 
\end{lemma}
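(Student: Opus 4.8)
The plan is to bound the total demand that jobs from certain groups contribute on a suitable edge, and derive a contradiction with the edge capacity $\cs = 4000\gamma$ if a round contained too many of them together with a dummy job. First I would record the rough demand sizes: every job in $A \cup B$ has demand of the form $999\gamma + (\text{something in } [4, 8\rho^4))$, hence demand in the range $(999\gamma, 1000\gamma)$ since all the additive terms are tiny compared to $\gamma = \rho^4+15$ (here $x'_i, y'_j, z'_k, \tau'_l$ are all strictly between $0$ and $\rho^4$, so $4x'_i < 4\rho^4 < \gamma$, etc.). Similarly every job in $A' \cup B'$ has demand of the form $1001\gamma - (\text{something in }[4, 8\rho^4))$, hence demand in the range $(1000\gamma, 1001\gamma)$. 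A dummy job has demand exactly $2997\gamma$, so roughly $3\gamma$. Thus any job in $A \cup A' \cup B \cup B'$ has demand in $(999\gamma, 1001\gamma)$, i.e., essentially $\gamma$, and a dummy has demand essentially $3\gamma$.

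Next I would exploit the geometry of the construction: every job spans a subpath starting at $0$ or ending at $40000\gamma$. A dummy job spans the entire path $(0, 40000\gamma)$. The key observation is that any ``left'' job (one of the form $(0, t_{\js}, d_{\js})$, which includes $a_{X,i}, a_{Y,j}, a_{Z,k}, b_l$) passes through the very first edge $e_1 = (0,1)$, and any ``right'' job (one of the form $(s_{\js}, 40000\gamma, d_{\js})$, which includes the primed jobs and also $b'_l$) passes through the last edge $e_{40000\gamma}$. A dummy job passes through both. So, suppose a round contains a dummy job $\delta$ together with two jobs $\js_1, \js_2$ from $A \cup B$. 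Then all three pass through edge $e_1$, and the total demand on $e_1$ is at least $2997\gamma + 999\gamma + 999\gamma > 4000\gamma = \cs$, a contradiction. The same argument on $e_{40000\gamma}$ rules out a dummy together with two jobs from $A' \cup B'$. Finally, a dummy together with one job from $A\cup B$ and one job from $A'\cup B'$ already has total demand $> 2997\gamma + 999\gamma + 999\gamma > \cs$ on, say, $e_1$ — wait, that is also a contradiction, so I must double-check: actually since a job in $A'\cup B'$ need not pass through $e_1$, I should instead observe that the dummy alone has demand $2997\gamma$, and adding \emph{any} two non-dummy jobs (whether left or right) that share a common edge with it exceeds $\cs$; and since the dummy spans the whole path, every non-dummy job shares an edge with it, but two non-dummy jobs need not share an edge with each other. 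So the right statement is: at most one dummy (since $2 \cdot 2997\gamma > \cs$), and at most \emph{one} other job total would follow if any two non-dummy jobs shared an edge with each other, which they need not. Hence I instead argue: a round with a dummy has the dummy using $2997\gamma$ of the $4000\gamma$ capacity on \emph{every} edge, leaving only $1003\gamma < 2 \cdot 999\gamma$ free on every edge; so on any single edge at most one additional job fits; since any left job and any right job both individually fit but together on edge $e_1$ (through which the left job passes) the left job plus dummy already leave less than $999\gamma < $ demand of... hmm — the right job need not pass $e_1$. The clean finish: among the non-dummy jobs in the round, all the left ones pairwise share $e_1$ so at most one left job; all the right ones pairwise share the last edge so at most one right job; and $b'_l$ is a right job, so jobs from $A \cup B$ contribute at most one (all left), and jobs from $A' \cup B'$ contribute at most one (all right). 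That is exactly the claim.

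The main obstacle is purely bookkeeping: verifying that every additive correction term ($4x'_i$, $4y'_j$, $4z'_k$, $4\tau'_l$, and the $8$'s in the definitions of $x'_i, y'_j, z'_k, \tau'_l$, and the $15$ in $\gamma$) is genuinely negligible relative to $\gamma$, so that the three inequalities ``$d_{\js} > 999\gamma$ for $\js \in A \cup B$'', ``$d_{\js} > 1000\gamma$ for $\js \in A' \cup B'$'' (and in particular $> 999\gamma$), and ``$2 \cdot 999\gamma + 2997\gamma > 4000\gamma$'' hold with room to spare. Since $\rho = 32q$ and each of $x'_i, y'_j, z'_k$ is at most $q\rho^3 + 4 < \rho^4$ and each $\tau'_l < \rho^4$, we get $4x'_i, 4y'_j, 4z'_k, 4\tau'_l < 4\rho^4 \ll \gamma \cdot (\text{large constant})$ — more carefully one checks $8\rho^4 < \gamma$ is false but the demands $999\gamma + 4x'_i$ are compared against multiples of $\gamma$ where a slack of a full $\gamma$ is available, so the bound $999\gamma < d_{\js} < 1001\gamma$ is what is needed and it follows because $0 < 4x'_i + \text{const} < \gamma$ for all the relevant quantities (using $4 \cdot (q\rho^3+8) < \rho^4 < \gamma$ for $q$ large, which holds since $\rho = 32q$ gives $\rho^4 = (32q)^4 \gg 4(q \cdot (32q)^3) = 4q \cdot 32^3 q^3$, i.e. $32^4 q^4 \gg 4 \cdot 32^3 q^4$). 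Once these size estimates are in place, the edge-sharing argument above closes the proof immediately.
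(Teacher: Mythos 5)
Your proof follows essentially the same strategy as the paper's: lower-bound the demands of all non-dummy jobs, observe that every job in $A\cup B$ passes through the leftmost edge $e_1$ while every job in $A'\cup B'$ passes through the rightmost edge $e_m$, and that a dummy job (of demand $2997\gamma$) passes through both, so that a dummy plus any two jobs sharing an endpoint-edge already overloads that edge. The paper packages the demand bounds as a separate lemma (Lemma~\ref{lemma:inequalities}) giving $d_a>999\gamma$, $d_b>1001\gamma$, $d_{a'}>1000\gamma$, $d_{b'}>997\gamma$, then uses the common lower bound $997\gamma$ to conclude $2997\gamma + 2\cdot 997\gamma > 4000\gamma$ on the relevant edge. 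Your ``clean finish'' is the same computation.

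There is, however, a concrete error in your intermediate bookkeeping that you should fix even though it does not invalidate the final argument. You claim every job in $A\cup B$ has demand in $(999\gamma,1000\gamma)$ and every job in $A'\cup B'$ has demand in $(1000\gamma,1001\gamma)$. Both ranges are wrong for the $B$- and $B'$-jobs: the correction term for $b_l$ is $4\tau'_l$, and since $\tau'_l = \rho^4 - k\rho^3 - j\rho^2 - i\rho + 8$ is of order $\rho^4 \approx \gamma$ (not tiny), one has $4\tau'_l$ of order $4\gamma$. Thus $d_{b_l}=999\gamma + 4\tau'_l$ can exceed $1002\gamma$, and $d_{b'_l}=1001\gamma-4\tau'_l$ can be as low as roughly $997\gamma$ (which is exactly why the paper proves $d_{b'}>997\gamma$, not $>1000\gamma$). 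You partially noticed this (``$8\rho^4<\gamma$ is false'') but then still asserted a bound $999\gamma<d_{\js}<1001\gamma$ that remains false for $B$ and $B'$. Fortunately the lemma only needs a \emph{lower} bound of about $502\gamma$ on each non-dummy demand, so the true bound $d_{\js}>997\gamma$ is ample and the conclusion survives; but you should replace the incorrect two-sided estimates with the correct one-sided bounds before this is usable as a proof.
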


Let $n_{g}$ denote the number of nice rounds in our solution, $n_{d}$
the number of rounds with a dummy job, and $n_{b}$ the number of
remaining rounds. Note that each of the latter rounds can contain
at most 7 jobs each. Since all jobs in $A\cup A' \cup B \cup B'$ need to be assigned to a round,
we have that $8n_{g}+7n_{b}+2n_{d}\ge6q+2|\Ts|=10q$. Since the nice rounds correspond to a matching of the given instance
of~\sbmdm, we have that $n_{g}\le\alpha(q)$. Using this, we lower-bound
the number of used rounds in the following lemma.
\begin{lemma}
\label{lem:reduction-2}If $|OPT_{3DM}|< \alpha(q)$ then the number
of rounds in our solution is $n_{d}+n_{g}+n_{b}\ge(5q-3\beta(q))+\frac{1}{7}(\beta(q)-\alpha(q))$.
\end{lemma}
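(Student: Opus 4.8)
The plan is to prove the bound by combining three linear inequalities in the quantities $n_g,n_b,n_d$ and eliminating $n_b$. The one new ingredient I need is a lower bound on $n_d$: the instance contains $5q-4\beta(q)$ dummy jobs, every one of them must lie in some round, and by Lemma~\ref{lem:propertyofdummyrounds} any round that contains a dummy job contains at most one. Since a round containing a dummy job is by definition counted in $n_d$, this forces $n_d\ge 5q-4\beta(q)$.

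Next I would invoke the two inequalities already established in the text preceding the lemma. First, the packing inequality $8n_g+7n_b+2n_d\ge 10q$: each of the $6q+2|\Ts|=10q$ jobs of $A\cup A'\cup B\cup B'$ lies in some round, a nice round holds exactly $8$ of them, a round with a dummy job holds at most $2$ of them (at most one from $A\cup B$ and one from $A'\cup B'$, again by Lemma~\ref{lem:propertyofdummyrounds}), and any remaining round holds at most $7$ jobs. Second, $n_g\le\alpha(q)$: the triplets associated with two distinct nice rounds must disagree in every coordinate, since a job such as $a_{X,i}$ can belong to only one round, so by Lemma~\ref{lem:propertyofnicerounds} the nice rounds induce a matching in $\Ts$, whose size is at most $OPT_{3DM}<\alpha(q)$. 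Now rearranging the packing inequality gives $n_b\ge\tfrac{1}{7}(10q-8n_g-2n_d)$, hence
\[
n_d+n_g+n_b \;\ge\; n_d+n_g+\tfrac{1}{7}\bigl(10q-8n_g-2n_d\bigr) \;=\; \tfrac{1}{7}\bigl(5n_d-n_g+10q\bigr).
\]
The last expression is increasing in $n_d$ and decreasing in $n_g$, so plugging in $n_d\ge 5q-4\beta(q)$ and $n_g\le\alpha(q)$ yields
\[
n_d+n_g+n_b \;\ge\; \tfrac{1}{7}\bigl(5(5q-4\beta(q))-\alpha(q)+10q\bigr) \;=\; \tfrac{1}{7}\bigl(35q-20\beta(q)-\alpha(q)\bigr),
\]
and a short check shows $\tfrac{1}{7}(35q-20\beta(q)-\alpha(q))=(5q-3\beta(q))+\tfrac{1}{7}(\beta(q)-\alpha(q))$, which is exactly the claimed bound.

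There is essentially no real obstacle in this argument: the only piece requiring thought is the lower bound $n_d\ge 5q-4\beta(q)$, which is an immediate consequence of Lemma~\ref{lem:propertyofdummyrounds}, and the rest is the one-line elimination of $n_b$ plus bookkeeping. The points to be careful about are purely mechanical — substituting the lower bound for $n_d$ (which enters with a positive coefficient) and the upper bound for $n_g$ (which enters with a negative coefficient) in the right directions, and verifying that the resulting value coincides exactly with the target $(5q-3\beta(q))+\tfrac{1}{7}(\beta(q)-\alpha(q))$ rather than merely bounding it.
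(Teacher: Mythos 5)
Your proof is correct and follows essentially the same route as the paper's: combine the counting inequality $8n_g+7n_b+2n_d\ge 10q$, the bound $n_g\le\alpha(q)$, and the size of the dummy-job set, then eliminate $n_b$ and do the arithmetic. The only (cosmetic) difference is that the paper simply asserts $n_d=5q-4\beta(q)$ and substitutes, whereas you derive the one-sided bound $n_d\ge 5q-4\beta(q)$ from Lemma~\ref{lem:propertyofdummyrounds} and then exploit the monotonicity of $\tfrac{1}{7}(5n_d-n_g+10q)$; both yield the same final expression, and your version makes the reliance on Lemma~\ref{lem:propertyofdummyrounds} slightly more explicit.
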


\begin{proof}
Since $8n_{g}+7n_{b}+2n_{d}\ge6q+2|\Ts|=10q$ and $n_{d}=5q-4\beta(q)$,
we obtain that $8n_{g}+7n_{b}\ge8\beta(q)$. Thus $n_{g}+n_{b}\ge\frac{8}{7}\beta(q)-\frac{1}{7}n_{g}$.
Since $n_{g}\le\alpha(q)$ the number of rounds is at least $n_{d}+n_{g}+n_{b}\ge5q-4\beta(q)+\frac{8}{7}\beta(q)-\frac{1}{7}\alpha(q)=(5q-3\beta(q))+\frac{1}{7}(\beta(q)-\alpha(q))$.
\end{proof}
Now Lemmas~\ref{lem:reduction-1} and \ref{lem:reduction-2} yield
our main theorem.

\begin{theorem} \label{theorem:hardnessproofnew-1} There exists
a constant $\delta_{0}>1/1398$, such that it is NP-hard to approximate
\rufp~and \rsap~in the case of uniform edge capacities with an
asymptotic approximation ratio less than $1+\delta_{0}$. \end{theorem}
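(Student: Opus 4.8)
The plan is to combine Lemmas~\ref{lem:reduction-1} and~\ref{lem:reduction-2} to obtain a gap-preserving reduction from \sbmdm, and then plug in the hardness of Theorem~\ref{theorem:chlebikhardness_new} to extract the explicit constant $\delta_0$. First I would observe that the construction of the path $G$, the jobs $A\cup A'\cup B\cup B'\cup D$, and the uniform capacity $\cs=4000\gamma$ is entirely polynomial in $q$ (hence in the size of the given \sbmdm~instance $K$), since $\gamma=\rho^4+15$ with $\rho=32q$ is polynomially bounded; so the reduction runs in polynomial time and we may freely invoke it on the hard family from Theorem~\ref{theorem:chlebikhardness_new}.

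Next I would set up the two cases exactly as the preceding lemmas are phrased. In the \emph{yes}-case, $OPT_{3DM}(K)\ge\beta(q)$, so by Lemma~\ref{lem:reduction-1} the constructed \rsap/\rufp~instance admits a feasible solution with at most $5q-3\beta(q)$ rounds; hence $\opt\le 5q-3\beta(q)$ for that instance. In the \emph{no}-case, $OPT_{3DM}(K)<\alpha(q)$, and Lemma~\ref{lem:reduction-2} shows that \emph{every} feasible solution uses at least $(5q-3\beta(q))+\tfrac{1}{7}(\beta(q)-\alpha(q))$ rounds, so $\opt\ge (5q-3\beta(q))+\tfrac17(\beta(q)-\alpha(q))$. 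Since it is NP-hard to distinguish the two cases of Theorem~\ref{theorem:chlebikhardness_new}, it is NP-hard to distinguish instances of our problem with optimum at most $5q-3\beta(q)$ from those with optimum at least $(5q-3\beta(q))+\tfrac17(\beta(q)-\alpha(q))$; an approximation algorithm with ratio strictly below the ratio of these two thresholds would separate them, which is the desired contradiction.

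It then remains to compute the ratio
\[
1+\delta_0 \;=\; \frac{(5q-3\beta(q))+\tfrac17(\beta(q)-\alpha(q))}{5q-3\beta(q)} \;=\; 1+\frac{\beta(q)-\alpha(q)}{7\,(5q-3\beta(q))},
\]
and verify it exceeds $1+1/1398$ for all sufficiently large $q$. Plugging in $\alpha(q)=\lfloor 0.9690082645\,q\rfloor$ and $\beta(q)=\lceil 0.979338843\,q\rceil$ from Theorem~\ref{theorem:chlebikhardness_new}, the numerator $\beta(q)-\alpha(q)$ is roughly $(0.979338843-0.9690082645)q\approx 0.0103305785\,q$, while $5q-3\beta(q)\approx (5-3\cdot 0.979338843)q\approx 2.061983471\,q$, so $\delta_0\approx 0.0103305785/(7\cdot 2.061983471)\approx 7.156\times 10^{-4} > 1/1398\approx 7.153\times 10^{-4}$. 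I would carry out this estimate carefully with the floor/ceiling corrections (which contribute only $O(1/q)$ terms and therefore do not affect the limit), conclude that there is a fixed $\delta_0>1/1398$ for all large enough $q$, and note that the hardness persists even when restricted to uniform edge capacities and even to instances where $\opt$ is arbitrarily large (taking $q\to\infty$), which is precisely the asymptotic statement.

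The main obstacle is the \emph{tightness} of the numeric bound: the target $1/1398$ is extremely close to the value actually produced by the reduction, so the argument only works if the thresholds in Lemmas~\ref{lem:reduction-1} and~\ref{lem:reduction-2} are exactly $5q-3\beta(q)$ and $5q-3\beta(q)+\tfrac17(\beta(q)-\alpha(q))$ with no additional slack lost. In particular, one must be sure that the dummy-job count $|D|=5q-4\beta(q)$ is nonnegative (true since $4\beta(q)\le 5q$ for the relevant $q$) and that the counting inequality $8n_g+7n_b+2n_d\ge 10q$ together with $n_g\le\alpha(q)$ is used without any rounding loss; any weakening here (e.g.\ an extra additive constant or a worse coefficient than $1/7$) would push $\delta_0$ below $1/1398$. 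Hence the bulk of the care goes into confirming the exact constants rather than into any new conceptual step.
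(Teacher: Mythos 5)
Your proposal is correct and follows essentially the same route as the paper's proof: combine Lemmas~\ref{lem:reduction-1} and~\ref{lem:reduction-2} to obtain the gap between $5q-3\beta(q)$ and $5q-3\beta(q)+\tfrac17(\beta(q)-\alpha(q))$, invoke Theorem~\ref{theorem:chlebikhardness_new} for NP-hardness of distinguishing, and plug in the numerical values of $\alpha(q)$ and $\beta(q)$ to verify $\delta_0>1/1398$. Your observation about the tightness of the constant is a nice sanity check but not a distinct argument.
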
 

\section{Algorithms for Uniform Capacity Case}

\label{sec:Uniform_capacity} In this section, we provide asymptotic
$(2+\eps)$-approximation for \rsap\ and \rufp\ for the case of
uniform edge capacities. %
We distinguish two cases, depending on the value of $d_{\max} := \max_{\js \in J} d_{\js}$ compared to $L$.

\subsection{Case 1: $d_{\max}\le\eps^{7}L$.\label{subsubsec:easysubsec}}

First, we invoke an algorithm from \cite{BuchsbaumKKRT03} for the
dynamic storage allocation (DSA) problem. Recall that in DSA 
the input consists of a set of jobs like in \rsap\ and \rufp, but
without upper bounds of the edge capacities. Instead, we seek
to define a height $h_{\js}$ for each job $\js$ such that the resulting
rectangles for the jobs are non-overlapping and the \emph{makespan}
$\max_{\js} (h_{\js}+d_{\js})$ is minimized. The maximum load $L$ is
defined as in our setting. 

We invoke the following theorem on our input jobs $J$ with $\delta:=\eps$.

\begin{theorem} [\cite{BuchsbaumKKRT03}]\label{theorem:thorrup}
Assume that we are given a set of jobs $J'$ such that $d_{\js}\le\delta^{7}L$
for each job $\js\in J'$. Then there exists an algorithm that produces
a DSA packing of $J'$ with makespan at most $(1+\kappa \delta)L$, where $\kappa >0$ is some global constant independent of $\delta$. \end{theorem}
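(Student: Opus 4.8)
The plan is to recall the argument of \cite{BuchsbaumKKRT03} rather than reprove it; here is its shape. The guiding principle is that, because every job is \emph{polynomially} small compared to $L$, one can afford a bounded number of ``rounding'' and ``re-alignment'' passes, each costing only a $(1+O(\delta))$ multiplicative or $O(\delta)L$ additive loss in makespan, and the exponent $7$ is tuned precisely so that all these losses together amount to $O(\delta)L$ on top of the trivial lower bound $L$ (the maximum load). Since a job is never hurt by being placed into the packing, it is enough to exhibit one packing whose total wasted height above $L$ is $O(\delta)L$.

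Concretely, I would first reduce to few distinct demands per scale: round every demand $d_{\js}$ up to the nearest integer power of $(1+\delta)$, which inflates every edge load, in particular $L$, by at most a factor $(1+\delta)$, and group the rounded jobs by scale, where scale $i$ collects the demand classes in $(\delta^{i+1}L,\delta^{i}L]$; a single scale contains $O((1/\delta)\log(1/\delta))$ classes. For one class with common rounded demand $d$, placing its jobs into a horizontal band of height $kd$ is exactly an interval $k$-colouring of the path system $\{P_{\js}\}$, hence feasible as soon as no edge is crossed by more than $k$ of these jobs; a band of height $\lceil\Lambda/d\rceil d\le\Lambda+d$ therefore suffices, where $\Lambda$ is the maximum over edges of the load contributed by that class. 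Then one stacks classes into scales and scales on top of one another, and finally inserts the sub-$\delta^{c}L$ ``dust'' jobs greedily (first-fit) into the leftover slack; each such assembly step wastes only an $O(\delta)$-fraction of the relevant scale's height, so the wasted heights form a geometric-type series summing to $O(\delta)L$.

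The single non-routine point, and the one I expect to be the real obstacle, is precisely this last accounting: the naive ``give each demand class its own band'' scheme fails, because there can be unboundedly many classes and the per-class maxima $\Lambda$ can sum to far more than $L$. One must interleave the classes across scales, for instance by first computing a fractional (slice-allowed) packing of makespan exactly $L$ — which always exists since every edge has load $\le L$ — and then ``de-slicing'' it scale by scale, arguing that de-slicing jobs of demand at most $\delta^{7}L$ perturbs the height profile by only $O(\delta)L$ in total; it is here that polynomial smallness, not merely $d_{\js}\le\delta L$, is used. Carrying this out carefully, with the exponent $7$ absorbing the various polynomial-in-$\delta$ losses, yields a packing of makespan $(1+\kappa\delta)L$ for an absolute constant $\kappa$; in the paper I would simply invoke \cite{BuchsbaumKKRT03} for the full details.
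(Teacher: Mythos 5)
The paper does not prove this statement; it is a black-box citation of a known result of Buchsbaum, Karloff, Kenyon, Reingold, and Thorup \cite{BuchsbaumKKRT03}, and you correctly recognize this and close by deferring to the reference, which is exactly what the paper does. Your intermediate sketch of the BKKRT argument is only loosely aligned with the actual machinery in that paper (which is organized around a recursive \emph{boxing} scheme — grouping jobs of comparable demand into boxes, treating boxes as super-jobs, and iterating — together with a careful accounting of the waste at each level, rather than the round-to-powers-of-$(1+\delta)$, per-class interval-colouring, fractional-pack-then-de-slice pipeline you describe), and the specific claim that a single scale contains $O((1/\delta)\log(1/\delta))$ classes and that the class maxima telescope geometrically would need real work to justify; but since you explicitly flag this as the genuine obstacle and state that you would simply invoke \cite{BuchsbaumKKRT03} rather than carry the argument through, this is not a gap in your write-up so much as an honest acknowledgment that the citation is doing the work. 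In short: your treatment matches the paper's (cite and move on), and the speculative reconstruction of the BKKRT proof, while not entirely faithful, is clearly labelled as speculation and does no harm.
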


Let $\xi$ denote the makespan of the resulting solution to DSA and
let $\cs$ denote the (uniform) edge capacity. 
For each $h\in \mathbb{R}$, we define the horizontal line $\ell_h:=\mathbb{R}\times\{h\}$. 
A job $\js$ is said to be \textit{sliced} by $\ell_{h}$ if for the computed packing
of the jobs it holds that
$h_{\js}<h<h_{\js}+d_{\js}$.
Now we will transform this into \rsap~or \rufp~packing. 

We define a set of
rounds $\Gamma_{1}$. The set $\Gamma_{1}$ contains a round for each integer $i$
with $0\le i\le\lfloor\xi/\cs\rfloor$ and this round contains all
jobs lying between $\ell_{i\cs}$ and $\ell_{(i+1)\cs}$. Thus, 
$|\Gamma_{1}|\le\lfloor(1+\kappa\eps)L/\cs)\rfloor+1\le(1+\kappa\eps)r+1$.
There are two subcases.

{\em Subcase A: Assume that $r > {1}/{(2\kappa \eps)}$}. In this case $|\Gamma_1| \le (1+3\kappa\eps)r$. We define a set of rounds $\Gamma_{2}$ as follows. 
For each
integer $i$ with $1\le i\le\lfloor\xi/\cs\rfloor$,  $\Gamma_{2}$ has a round containing
all jobs that are sliced by $\ell_{i\cs}$. Thus $|\Gamma_2|\le \lfloor (1+\kappa \eps)L/c^* \rfloor \le (1+\kappa\eps)r$. Hence, the total number of rounds is bounded by $(2+4\kappa\eps)r \le (2+O(\eps))\opt_{UFP}\le (2+O(\eps))\opt_{SAP}$.

{\em Subcase B: Assume that $r\le {1}/{(2\kappa \eps)}$}. Now $\xi \le (1+\kappa\eps)L$, and therefore $\xi -L \le \kappa\eps L \le c^*/2$. Hence, we have $|\Gamma_1|\le r+1$ and the $(r+1)^{\text{th}}$ round is filled up to a capacity of at most $c^*/2$ on each edge. Now the total load of the set of jobs that are sliced by $(\ell_{ic^*})_{1\le i\le r}$ is at most $r\cdot \eps^7L$. We now invoke the following result on DSA to this set of jobs.

\begin{theorem}[\cite{Gergov99}]\label{thm:gergov'sdsaresult}
Let $J'$ be a set of jobs with load $L$. Then a DSA packing of $J'$ of makespan at most $3L$ can be computed in polynomial time.
\end{theorem}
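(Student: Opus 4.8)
This theorem is quoted from~\cite{Gergov99}, so the honest plan is to invoke it as a black box and not reprove it. Still, let me describe the shape of the argument behind a bound of this type, which also indicates why the constant is a small number like~$3$ rather than something that grows with the input. Normalize the instance so that $L=1$; then every demand lies in $(0,1]$, and the whole difficulty is in the jobs of large demand. If every demand were at most $\delta^{7}L$ we would already be done by Theorem~\ref{theorem:thorrup}; and for jobs of demand at least some fixed constant, whose interval conflict graph has constant clique number, one could colour that graph and stack the colour classes, but done class by class this already wastes a factor that grows with the number of distinct demand magnitudes. So the plan is to first rigidify the demands into a geometric (dyadic) structure and then pack that structure while keeping the wasted vertical space additively bounded.

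Concretely: (i)~round each demand $d_{\js}$ up to the next power of two $\hat d_{\js}:=2^{\lceil\log_2 d_{\js}\rceil}\le 1$, which at most doubles every demand and hence the load, so the rounded instance has load $\hat L<2$; (ii)~produce a packing of the rounded jobs and then drop each original rectangle into the bottom of the rectangle reserved for its rounded copy, which keeps all rectangles non-overlapping and does not increase the makespan, so it suffices to pack the rounded jobs with makespan below $3$. For step (ii) I would identify the vertical axis with the binary subdivision tree of an interval $[0,\hat L')$, where $\hat L'\le 2$ is the smallest power of two with $\hat L'\ge\hat L$: the blocks at consecutive levels halve in size, and any two blocks are either nested or disjoint. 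A rounded job of demand $2^{-i}$ must be placed in some level-$i$ block, and two jobs that share an edge of $G$ must land in non-nested (hence disjoint) blocks. Processing jobs by non-increasing demand, and within a demand class sweeping $G$ from left to right, one greedily assigns to each job the lowest block of the correct size that is free on its entire sub-path; the invariant $\sum_{\js\ni e}\hat d_{\js}\le\hat L$ is what bounds how high this greedy placement can climb.

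That last bound is the genuinely hard part, and it is where the small constant factor (here $3$) is spent. The issue is fragmentation: the dyadic block freed ``between'' two large jobs may be too small or misaligned for the next job, and a naive layered placement can waste a super-constant factor. The real work in~\cite{Gergov99} is to exploit the laminar structure of the dyadic blocks together with the non-increasing-demand processing order so that fragmentation costs only an additive $O(L)$ and does not compound across the demand classes. Since~\cite{Gergov99} carries this out in full (and~\cite{BuchsbaumKKRT03} later sharpens the same idea to a factor $2+\eps$), I would cite their analysis rather than redo it, and then finish the Subcase~B argument above by applying the theorem to the small-load set of sliced jobs, which thereby fit into $O(1)$ further rounds.
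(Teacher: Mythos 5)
You correctly identify that Theorem~\ref{thm:gergov'sdsaresult} is a black-box citation of~\cite{Gergov99}; the paper gives no proof of it, and none is expected. Your decision to invoke it as stated and move on is exactly what the paper does, so on the central question you are in agreement with the paper.

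One caveat about the optional heuristic sketch you append. The ``round demands to powers of two and place them in dyadic blocks'' picture is a plausible-sounding mechanism, but as you have laid it out the arithmetic does not close: rounding up to powers of two can double the load to $\hat L<2L$, and the buddy/dyadic allocation you describe would still need a separate argument that a load-$\hat L$ instance of power-of-two demands packs into makespan roughly $\tfrac{3}{2}\hat L$, which you do not supply (and which is not obviously true for a one-pass greedy over dyadic blocks --- fragmentation across a long path can climb). This is also, to my knowledge, not the route Gergov takes; his $3L$ bound comes from a different boxing/coloring decomposition. None of this affects the correctness of invoking the theorem, but you should be careful not to present the sketch as a reconstruction of the actual proof. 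If you want to keep some intuition in, it is safer to say only that the result is proved in~\cite{Gergov99} by a careful combinatorial allocation scheme whose analysis charges the makespan against the load with a constant-factor loss, and leave the mechanism unspecified.
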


Thus the makespan of the computed solution is at most $3r\cdot \eps^7L \le 3\cdot \frac{1}{2\kappa\eps}\cdot \eps^7\cdot \frac{c^*}{2\kappa\eps}\le c^*/2$, if $\eps$ is small enough. Hence these jobs can added to the $(r+1)^{\text{th}}$ round of $\Gamma_1$. Therefore, we get a packing of $J$ using at most $r+1\le \opt_{UFP}+1\le \opt_{SAP}+1$ rounds.

\subsection{Case 2: $d_{\max}>\eps^{7}L$\label{subsubsec:hardsubsubsec} }

For this case, we have $c^* \ge d_{\max}> \eps^7 L$ and therefore $r = \lceil L/c^* \rceil \le 1/\eps^7$. We partition the input jobs into \emph{large }and \emph{small} jobs
by defining $J_{\mathrm{large}}:=\{\js\in J|d_{\js}>\eps^{56}L\}$
and $J_{\mathrm{small}}:=\{\js\in J|d_{\js}\le\eps^{56}L\}$. 


We start with the small jobs $J_{\mathrm{small}}$. First, we apply
Theorem~\ref{theorem:thorrup} to them with $\delta:=\eps^{56}$ and
obtain a DSA packing $\Ps$ for them. We transform it into a solution
to \rsap\ with at most $r+1$ rounds as follows: we introduce a set $\Gamma_1$ consisting of $r+1$ rounds exactly as in the previous case (when $r\le {1}/{(2\kappa \eps)}$). The $(r+1)^{\text{th}}$ round would be filled up to a capacity of at most $\kappa\eps^8 L \le \kappa\eps c^*$. Again applying \Cref{thm:gergov'sdsaresult} to the remaining jobs, we get a DSA packing of makespan at most $3r \cdot \eps^{56}L \le 3\cdot ({1}/{\eps^7})\cdot \eps^{56}\cdot ({c^*}/{\eps^7})\le c^*/2$, and therefore these jobs can be packed inside the $(r+1)^{\mathrm{th}}$ round. Hence, there exists a packing of $J_{\text{small}}$ using at most $r+1\le \opt_{UFP}+1\le \opt_{SAP}+1$ rounds.

Now we consider the large jobs $J_{\mathrm{large}}$. Our strategy
is to compute an optimal solution for them via dynamic programming
(DP). Intuitively, our DP orders the jobs in $J_{\mathrm{large}}$
non-decreasingly by their respective source vertices and assigns them
to the rounds in this order. Since the jobs are large, each edge is
used by at most $1/\eps^{56}$ large jobs, and using interval coloring one can
show easily that at most $1/\eps^{56}=O_{\eps}(1)$ rounds
suffice (e.g., we can color the jobs with $1/\eps^{56}$ colors such that 
no two jobs with intersecting paths have the same color). 
In our DP we have a cell for each combination of an edge $e$ and the assignment
of all jobs passing through $e$ to the rounds. Given this, the corresponding subproblem is to assign additionally 
all jobs to the rounds whose paths lie completely on the right of $e$.

For \rsap\ we additionally want to bound the number of
possible heights $h_{\js}$. To this end, we restrict ourselves to
packings that are normalized which intuitively means that all jobs
are pushed up as much as possible. Formally, we say that a packing
for a set of jobs $J'$ inside a round is \textit{normalized} if for every $\js\in J'$,
either $h_{\js}+d_{\js}=\cs$ or $h_{\js}+d_{\js}=h_{\js'}$ for some
$\js'\in J'$ such that $P_{\js}\cap P_{\js'}\neq\emptyset$ (see \Cref{fig:chain} in \Cref{sec:normalizedpacking}). 

\begin{lemma}
\label{lem:normal}
Consider a valid packing of a set of jobs $J'\subseteq J_{\mathrm{large}}$
inside one round. Then there is also a packing for $J'$ that is normalized.
\end{lemma}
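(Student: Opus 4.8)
The plan is to obtain a normalized packing from the given one by raising the jobs upward, one at a time, processing them in a carefully chosen order so that termination is immediate and normalization holds by construction. The single structural fact I will use is this: in any valid \rsap\ packing inside one round, if two jobs $\js,\js'$ have $P_{\js}\cap P_{\js'}\neq\emptyset$, then their top coordinates $h_{\js}+d_{\js}$ and $h_{\js'}+d_{\js'}$ are distinct, since otherwise, over any common edge, $R_{\js}$ and $R_{\js'}$ would both occupy a short vertical strip just below that common top and hence intersect. Consequently, for two jobs with intersecting paths, exactly one lies entirely above the other, and it is the one with the larger top coordinate.

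Order the jobs of $J'$ as $\js_{1},\dots,\js_{k}$ so that $h_{\js_{1}}+d_{\js_{1}}\ge h_{\js_{2}}+d_{\js_{2}}\ge\cdots\ge h_{\js_{k}}+d_{\js_{k}}$ in the given packing, breaking ties arbitrarily, and process them in this order. When processing $\js_{i}$, raise it (keeping $s_{\js_{i}},t_{\js_{i}},d_{\js_{i}}$ fixed) to the largest height $h\ge h_{\js_{i}}$ such that $h+d_{\js_{i}}\le\cs$ and the rectangle $(s_{\js_{i}},t_{\js_{i}})\times(h,h+d_{\js_{i}})$ overlaps none of the already-repositioned jobs $\js_{1},\dots,\js_{i-1}$. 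By the structural fact, every job $\js_{l}$ with $l<i$ and $P_{\js_{l}}\cap P_{\js_{i}}\neq\emptyset$ lies strictly above $\js_{i}$ in the original packing; since so far it has only moved upward, it still lies above $\js_{i}$'s original position, so $h=h_{\js_{i}}$ is feasible and the (bounded) feasible set is nonempty, hence the maximum is attained. Symmetrically, every unprocessed job $\js_{j}$ with $j>i$ and $P_{\js_{j}}\cap P_{\js_{i}}\neq\emptyset$ lies strictly below $\js_{i}$ in the original packing and is still there, so raising $\js_{i}$ cannot hit it; and when $\js_{j}$ is itself processed later, its own constraint forbids overlap with $\js_{i}$. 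Thus the packing stays valid after every step, in particular after all $k$ steps, and the process clearly terminates.

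It remains to verify that the final packing is normalized, which is the only step needing care. Once $\js_{i}$ has been processed it is never moved again (subsequent steps move only $\js_{i+1},\dots,\js_{k}$), so its final position is exactly its position right after step $i$. By maximality of the chosen height, the top of $\js_{i}$ at that moment equals either $\cs$, or $h_{\js_{l}}$ for some $l<i$ with $P_{\js_{l}}\cap P_{\js_{i}}\neq\emptyset$ — namely the job whose bottom blocks the upward move, which necessarily shares an edge with $\js_{i}$. In the first case the ceiling alternative of the definition holds for $\js_{i}$. In the second case, since $\js_{l}$ was processed before $\js_{i}$ it is already in its final position, so $h_{\js_{i}}+d_{\js_{i}}=h_{\js_{l}}$ holds in the final packing too, giving the second alternative. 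As this covers every $\js_{i}\in J'$, the final packing is normalized.

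The main obstacle, and the reason for sorting by non-increasing top coordinate, is exactly the last paragraph: we need the job that witnesses normalization for each $\js_{i}$ to already be frozen in place when $\js_{i}$ is processed, so that the witness cannot later be raised and invalidated; processing jobs ``from the top down'' is what guarantees this. Everything else — feasibility of each move and preservation of non-overlap with not-yet-processed jobs — follows mechanically from the distinct-tops observation.
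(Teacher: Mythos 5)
Your proof is correct and follows essentially the same approach as the paper's: sort the jobs by non-increasing top coordinate $h_{\js}+d_{\js}$ and push each one upward as far as it will go, so that each job ends up touching either the capacity ceiling or the bottom of an already-frozen job sharing an edge with it. The paper states this in a single sentence; you have supplied the justifying details (distinct tops for jobs with intersecting paths, preservation of feasibility, and why the blocking witness is already in its final position), all of which are sound.
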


Now the important insight is that in a normalized packing of large
jobs, the height $h_{\js}$ of a job $\js$ is the difference of (the
top height level) $\cs$ and the sum of at most $1/\eps^{56}$
jobs in $J_{\mathrm{large}}$. Thus, the number of possible heights
is bounded by $n^{O(1/\eps^{56})}$ and we can compute all these
possible heights before starting our DP. 
\begin{lemma}
\label{lemma:heights}Given $J_{\mathrm{large}}$ we can compute a set $\mathcal{H}$ of $n^{O(1/\eps^{56})}$
values such that in any normalized packing of a set $J'\subseteq J_{\mathrm{large}}$
inside one round, the height $h_{\js}$ of each job $\js\in J'$ is
contained in $\mathcal{H}$.
\end{lemma}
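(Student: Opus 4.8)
The plan is to show that in a normalized packing every large job sits at a height of the form $c^{*}$ minus the sum of demands of a short \emph{support chain} of large jobs, and to let $\mathcal H$ consist of all values obtainable in this way.

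First I would observe that we may assume $c^{*}\le 3L$. Indeed, in any feasible solution each round is a set of jobs whose load on every edge is at most $L$, so by \Cref{thm:gergov'sdsaresult} these jobs can be repacked within makespan $3L$; thus replacing the uniform capacity by $\min\{c^{*},3L\}$ leaves $L$, $d_{\max}$, $r$, and the optimum of both \rufp\ and \rsap\ unchanged, and is therefore without loss of generality. This is the only place an \emph{upper} bound on $c^{*}/L$ is used.

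Next, fix a normalized packing of some $J'\subseteq J_{\mathrm{large}}$ inside one round and a job $\js\in J'$. Using the defining property of normalized packings (\Cref{lem:normal}) I would construct a sequence $\js=\js^{(0)},\js^{(1)},\dots,\js^{(t)}$ in $J'$ as follows: while $h_{\js^{(i)}}+d_{\js^{(i)}}\neq c^{*}$, normality provides some $\js^{(i+1)}\in J'$ with $P_{\js^{(i)}}\cap P_{\js^{(i+1)}}\neq\emptyset$ and $h_{\js^{(i)}}+d_{\js^{(i)}}=h_{\js^{(i+1)}}$; since each step increases the height by $d_{\js^{(i)}}>0$ this terminates with $h_{\js^{(t)}}+d_{\js^{(t)}}=c^{*}$. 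Summing the equalities telescopes to $h_{\js}=c^{*}-\sum_{i=0}^{t}d_{\js^{(i)}}$. Moreover the intervals $[h_{\js^{(i)}},h_{\js^{(i)}}+d_{\js^{(i)}}]$ partition $[h_{\js},c^{*}]\subseteq[0,c^{*}]$, so $\sum_{i=0}^{t}d_{\js^{(i)}}\le c^{*}\le 3L$; since each $\js^{(i)}\in J_{\mathrm{large}}$ has $d_{\js^{(i)}}>\eps^{56}L$, the chain consists of $t+1<3\eps^{-56}=O(\eps^{-56})$ jobs.

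Consequently $h_{\js}\in\mathcal H:=\bigl\{\,c^{*}-\sum_{\js'\in S}d_{\js'}\ :\ S\subseteq J_{\mathrm{large}},\ |S|\le\lfloor 3\eps^{-56}\rfloor\,\bigr\}$, and I would compute $\mathcal H$ by enumerating all such subsets $S$; since $|J_{\mathrm{large}}|\le n$ there are $\sum_{k\le 3\eps^{-56}}\binom{n}{k}=n^{O(1/\eps^{56})}$ of them, which bounds both the running time and $|\mathcal H|$. The one genuinely delicate point is the bound on the chain length: the support chain need not stay above any fixed edge — consecutive jobs overlap pairwise, but the chain can ``walk'' horizontally — so one cannot simply invoke the fact that at most $O(\eps^{-56})$ large jobs pass through a single edge; instead the telescoping identity together with the reduction $c^{*}=O(L)$ is what closes the argument. (In particular the informal ``$1/\eps^{56}$'' in the surrounding discussion is really $O(1/\eps^{56})$, which does not affect the stated bound $n^{O(1/\eps^{56})}$.)
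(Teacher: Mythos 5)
Your proof is correct and follows the same chain--telescoping strategy that the paper sketches in the prose immediately before the lemma (the paper supplies no formal proof of this lemma in the appendix): you build a support chain upward from $\js$ using the normalization property, telescope to write $h_{\js}=c^{*}-\sum_i d_{\js^{(i)}}$, bound the chain length, and enumerate subsets.

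You have also identified, and correctly patched, a real gap in the paper's sketch. The paper asserts the chain has at most $1/\eps^{56}$ jobs, which would follow from bounding the chain's total demand by $L$; but as you note, consecutive chain jobs need only pairwise-intersecting paths, so the chain may drift horizontally and its total demand is only bounded by $c^{*}$. In Case~2 the paper establishes $c^{*}>\eps^{7}L$, which is a \emph{lower} bound, and nothing rules out $c^{*}\gg L$ when $r=1$ (e.g.\ a staircase of width-two jobs of demand $d$ gives $L=2d$ while a normalized packing can have a chain of length $c^{*}/d$). Your preliminary reduction to $c^{*}\le 3L$ via \Cref{thm:gergov'sdsaresult} supplies the needed upper bound, and you correctly check it preserves $L$, $d_{\max}$, $r$, and both optima. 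The resulting chain-length bound $3/\eps^{56}$ rather than $1/\eps^{56}$ is immaterial for the claimed $n^{O(1/\eps^{56})}$.
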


Now we can compute the optimal packing via a dynamic program as described
above, which yields the following lemma. 

\begin{lemma} \label{lemma:maindynamicprogramming} Consider an instance
of \rufp\ or \rsap\ with a set of jobs $J'$ satisfying the following
conditions: 
\begin{enumerate}[(i)] 
\item The number of jobs using any edge is bounded by $\omega$. 
\item In the case of \rsap\ there is a given set $\mathcal{H}'$ of allowed
heights for the jobs. 
\end{enumerate}
Then we can compute an optimal solution to the given instance in time $(n|\mathcal{H}'|)^{O(\omega)}$. \end{lemma}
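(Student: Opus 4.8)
**

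The plan is to build a dynamic program that processes the edges of the path from left to right, where the state records exactly which jobs are "active" over the current edge together with how they are assigned to rounds (and, for \rsap, at which heights). The key structural facts that make this work are: (i) since at most $\omega$ jobs pass through any edge, the number of jobs active over any edge is at most $\omega$; (ii) a job is determined by its source, its sink, and its demand; and (iii) since the jobs' paths are intervals on a line, once we scan past a job's sink it never becomes relevant again, so the state at edge $e_i$ only needs to remember jobs that pass through $e_i$ (not jobs entirely to the left).

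First I would order the jobs by their left endpoints (sources) and, in parallel, process edges $e_1, e_2, \dots, e_m$ left to right. A DP cell is indexed by a pair $(e_i, \sigma)$ where $\sigma$ is a partial assignment describing, for the set $A_i$ of jobs passing through $e_i$, which round each such job is placed in (and, in the \rsap\ case, a height $h_{\js} \in \mathcal{H}'$ for each), subject to the local feasibility constraints on edge $e_i$: the total demand of jobs assigned to any single round is at most $c_{e_i}$, and in the \rsap\ case the corresponding rectangles over $e_i$ are pairwise non-overlapping. Since $|A_i|\le \omega$ and the number of distinct rounds that can be used is at most $\omega$ (by interval coloring, as remarked in the text — no two intersecting jobs can share a round, and an interval graph with clique number $\omega$ is $\omega$-colorable), the number of round-assignments for the jobs in $A_i$ is at most $\omega^{\omega}$, and the number of height-choices is at most $|\mathcal{H}'|^{\omega}$; hence the number of DP cells is $m \cdot (\omega \cdot |\mathcal{H}'|)^{O(\omega)} \le (n|\mathcal{H}'|)^{O(\omega)}$, using $m \le 2n-1$.

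The value stored in cell $(e_i,\sigma)$ is the minimum number of rounds needed to feasibly assign all jobs whose path lies in $\{e_1,\dots,e_i\}$ in a way consistent with $\sigma$ on the jobs crossing $e_i$. The transition from $e_i$ to $e_{i+1}$ enumerates all states $\sigma'$ at $e_{i+1}$ that are \emph{compatible} with $\sigma$: a job in $A_i \cap A_{i+1}$ must receive the same round (and height) in $\sigma$ and $\sigma'$; a job in $A_i \setminus A_{i+1}$ (one ending at $v_i$) simply drops out; and a job in $A_{i+1}\setminus A_i$ (one starting at $v_i$) gets a fresh round/height choice in $\sigma'$ consistent with $e_{i+1}$-feasibility. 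We set the DP value of $\sigma'$ by taking the best predecessor value and, when we introduce jobs, updating the count of rounds used so far (a round is "charged" the first time any job is placed in it; concretely, the state can also carry the set of round-indices used so far among the $\le \omega$ possible indices, which adds only a factor $2^{\omega}$). Base case: at $e_1$, every $e_1$-feasible $\sigma$ has value equal to the number of distinct rounds it uses. The answer is the minimum DP value over all states at $e_m$.

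The main thing to get right — and the only real subtlety — is the bookkeeping of \emph{how many distinct rounds have been opened so far}, because rounds are a global resource but the DP state is local to one edge. The clean fix is to fix once and for all a pool of $R := \omega$ abstract round-labels (valid since $\omega$ rounds always suffice), let each DP state record the subset $S \subseteq \{1,\dots,R\}$ of labels used up to the current edge together with the local assignment of $A_i$ into these labels, and report $\min |S|$ at the end; carrying $S$ multiplies the state count by $2^{\omega}$, absorbed into $(n|\mathcal{H}'|)^{O(\omega)}$. For \rsap, one also has to verify that the per-edge non-overlap constraints, checked edge-by-edge, together imply global non-overlap of the rectangles — this holds because two rectangles $R_{\js},R_{\js'}$ in the same round overlap iff their $x$-projections intersect and their $y$-projections intersect, and if the $x$-projections intersect there is a common edge $e$ on which we explicitly forbade the $y$-overlap. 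Correctness of the DP (that an optimal global solution induces a consistent chain of states, and conversely) is then a routine induction on $i$, and the running time bound follows by multiplying the number of states by the per-transition enumeration cost, both of which are $(n|\mathcal{H}'|)^{O(\omega)}$.
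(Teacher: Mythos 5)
The proposal is correct and takes essentially the same approach as the paper: a left-to-right DP over edges whose state records, for the jobs spanning the current edge, their round assignment (and, for \rsap, their heights from $\mathcal{H}'$), with transitions enforcing consistency between adjacent edges. The only cosmetic difference is that you fix a pool of $\omega$ abstract round labels and carry the used subset $S$ in the state to count rounds, whereas the paper instead guesses the target number of rounds $\kappa\in\{1,\dots,n\}$ and runs a feasibility DP; both variants give the claimed $(n|\mathcal{H}'|)^{O(\omega)}$ bound.
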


We invoke Lemma~\ref{lemma:maindynamicprogramming} with $J':=J_{\mathrm{large}}$,
$\omega=1/\eps^{56}$, and in the case of \rsap\ we define $\mathcal{H}'$
to be the set $\mathcal{H}$ due to Lemma~\ref{lemma:heights}. This
yields at most $OPT$ rounds in total for the large jobs $J_{\mathrm{large}}$. Hence,  we obtain a packing of $J$ using at most $2\cdot \opt +1$ rounds.

Case 1 and 2 together yields our main theorem for the case of uniform edge capacities.

\begin{theorem} \label{theorem:twoasymptoticapprox} For any $\eps>0$,
there exist asymptotic $(2+\eps)$-approximation algorithms for \rsap\ and
\rufp, assuming uniform edge capacities. \end{theorem}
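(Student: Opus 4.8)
The plan is to prove Theorem~\ref{theorem:twoasymptoticapprox} by simply assembling the two cases that have already been set up in the preceding subsections, and verifying that each produces a packing using at most $(2+\eps)\opt + O(1)$ rounds (in fact, $2\opt + O(1)$ suffices for the asymptotic guarantee, with the $\eps$ absorbed only in Case~1 Subcase~A where we lose $(2+O(\eps))r$). First I would fix $\eps>0$ and set up the dichotomy on $d_{\max}$ versus $\eps^7 L$, exactly matching the split into Section~\ref{subsubsec:easysubsec} and Section~\ref{subsubsec:hardsubsubsec}. The whole argument is: run both branches, output whichever is applicable, and track the round count against the lower bound $r \le \opt_{UFP} \le \opt_{SAP}$ established in Section~\ref{sec:prelim}.

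In Case~1 ($d_{\max}\le\eps^7 L$), I would invoke Theorem~\ref{theorem:thorrup} with $\delta:=\eps$ to get a DSA packing of makespan $\xi\le(1+\kappa\eps)L$, slice it into the rounds $\Gamma_1$ (jobs lying between consecutive lines $\ell_{i\cs}$) and, when needed, $\Gamma_2$ (jobs sliced by the lines $\ell_{i\cs}$). In Subcase~A ($r > 1/(2\kappa\eps)$) the additive $+1$'s get absorbed, giving $|\Gamma_1|+|\Gamma_2| \le (2+O(\eps))r \le (2+O(\eps))\opt_{UFP}$. In Subcase~B ($r \le 1/(2\kappa\eps)$) the overflow past the $r$-th round has total height at most $\kappa\eps L \le \cs/2$, and the sliced jobs have total load at most $r\eps^7 L$, so Theorem~\ref{thm:gergov'sdsaresult} repacks them with makespan $\le 3r\eps^7 L \le \cs/2$ for $\eps$ small; everything fits into $r+1$ rounds. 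Either way we use at most $(2+O(\eps))\opt_{UFP}+1 \le (2+O(\eps))\opt_{SAP}+1$ rounds, and rescaling $\eps$ gives the claimed $(2+\eps)$ asymptotic ratio.

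In Case~2 ($d_{\max}>\eps^7 L$, hence $r\le 1/\eps^7$), I would split $J$ into $J_{\mathrm{small}}$ (demand $\le\eps^{56}L$) and $J_{\mathrm{large}}$ (demand $>\eps^{56}L$). For $J_{\mathrm{small}}$: apply Theorem~\ref{theorem:thorrup} with $\delta:=\eps^{56}$, take the $r+1$ rounds as in Subcase~B, observe the overflow has height $\le\kappa\eps^{56}L\le\kappa\eps\cs$ and the sliced jobs have total load $\le r\eps^{56}L$, so Theorem~\ref{thm:gergov'sdsaresult} repacks them with makespan $\le 3r\eps^{56}L\le\cs/2$; all of $J_{\mathrm{small}}$ fits in $r+1\le\opt_{UFP}+1$ rounds. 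For $J_{\mathrm{large}}$: each edge carries at most $1/\eps^{56}$ large jobs, so by Lemma~\ref{lem:normal} (restrict to normalized packings) and Lemma~\ref{lemma:heights} (only $n^{O(1/\eps^{56})}$ heights arise) the hypotheses of Lemma~\ref{lemma:maindynamicprogramming} hold with $\omega=1/\eps^{56}$ and $\mathcal{H}'=\mathcal{H}$, yielding an \emph{optimal} packing of $J_{\mathrm{large}}$ in polynomial time, hence at most $\opt$ rounds. Combining, Case~2 uses at most $2\opt+1$ rounds.

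Putting the cases together, in all situations the algorithm outputs at most $(2+\eps)\opt + O(1)$ rounds, which is exactly an asymptotic $(2+\eps)$-approximation for both \rsap\ and \rufp\ under uniform edge capacities. The only genuinely substantive ingredients are the external DSA theorems (Theorem~\ref{theorem:thorrup} and Theorem~\ref{thm:gergov'sdsaresult}) and the DP of Lemma~\ref{lemma:maindynamicprogramming}; the main obstacle—already handled in the subsections—is ensuring the ``leftover'' jobs (those sliced by the grid lines, plus the small overflow past round $r$) can always be absorbed into a single extra round, which is why the case split on $r$ versus $1/(2\kappa\eps)$ is needed. The remaining work in the proof is just bookkeeping of these constants, so I would present it as a short paragraph citing Sections~\ref{subsubsec:easysubsec} and~\ref{subsubsec:hardsubsubsec}.
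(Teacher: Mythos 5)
Your proposal is correct and follows essentially the same approach as the paper: the dichotomy on $d_{\max}$ versus $\eps^{7}L$, the two subcases on $r$ versus $1/(2\kappa\eps)$ in Case~1, the small/large split and the DP for large jobs in Case~2, and the final bookkeeping bounding the total by $(2+O(\eps))\cdot\opt+O(1)$ via $r\le\opt_{UFP}\le\opt_{SAP}$ are exactly the paper's argument. One remark worth flagging (inherited from the paper's own text, not a flaw you introduced): invoking Theorem~\ref{theorem:thorrup} with $\delta:=\eps^{56}$ would technically require $d_{\js}\le\delta^{7}L=\eps^{392}L$, whereas $J_{\mathrm{small}}$ only guarantees $d_{\js}\le\eps^{56}L$; the intended parameter is $\delta:=\eps^{8}$ (so that $\delta^{7}=\eps^{56}$), which also reconciles the paper's stated overflow bound $\kappa\eps^{8}L$.
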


We now derive some bounds on the absolute approximation ratios. If $\opt_{SAP}= 1$, our algorithm would return a packing using at most $(2+\eps)\cdot 1 + 1$ rounds, and hence at most 3 rounds. If $\opt_{SAP}\ge 2$, then our algorithm uses at most $(2+\eps)\cdot OPT_{SAP}+ OPT_{SAP}/2 = (2.5+\eps)OPT_{SAP}$ rounds. Hence, we obain the following result.

\begin{theorem}
\label{thm:unisapabs}
There exists a polynomial time $3$-approximation algorithm for \rsap, assuming uniform edge capacities.
\end{theorem}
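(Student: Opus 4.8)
The plan is to derive Theorem~\ref{thm:unisapabs} as a direct consequence of Theorem~\ref{theorem:twoasymptoticapprox} together with the explicit round bounds obtained in the analyses of Case~1 (\Cref{subsubsec:easysubsec}) and Case~2 (\Cref{subsubsec:hardsubsubsec}), after instantiating $\eps$ by a suitable constant and turning the asymptotic guarantee into an absolute one.

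First I would fix $\eps$ to be a sufficiently small positive constant: small enough that every subroutine invoked in the two cases (the DSA algorithms of \Cref{theorem:thorrup} and \Cref{thm:gergov'sdsaresult} and the dynamic program of \Cref{lemma:maindynamicprogramming}) is applicable, and in addition small enough that $2.5+\eps\le 3$ even after absorbing the global constant $\kappa$ from \Cref{theorem:thorrup}; any $\eps$ below a fixed threshold (for instance $\eps\le 1/(8\kappa)$, which in particular forces $\eps<1/2$) works. For such a fixed $\eps$ the algorithm behind \Cref{theorem:twoasymptoticapprox} runs in polynomial time, since the only non-trivial running-time contribution is the dynamic program for the large jobs in Case~2, whose running time is $(n|\mathcal{H}|)^{O(1/\eps^{56})}=n^{O_{\eps}(1)}$, and all DSA invocations are polynomial.

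Next I would record that, in every case, the number of rounds produced is at most $(2+\eps)\,\opt_{SAP}+1$: Case~1B and Case~2 give $\opt_{SAP}+1$ and $2\,\opt_{SAP}+1$ rounds respectively, and Case~1A gives $(2+4\kappa\eps)\,\opt_{SAP}$ rounds, all of which are dominated by $(2+\eps)\,\opt_{SAP}+1$ after the harmless rescaling of $\eps$ above. Then I would split on the value of $\opt_{SAP}$. If $\opt_{SAP}=1$, the algorithm uses at most $(2+\eps)\cdot 1+1=3+\eps$ rounds, and since the number of rounds is a nonnegative integer and $\eps<1$, it is at most $3=3\,\opt_{SAP}$. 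If $\opt_{SAP}\ge 2$, then $1\le \opt_{SAP}/2$, so the number of rounds is at most $(2+\eps)\,\opt_{SAP}+\opt_{SAP}/2=(2.5+\eps)\,\opt_{SAP}\le 3\,\opt_{SAP}$. In either case the algorithm uses at most $3\,\opt_{SAP}$ rounds, giving the claimed absolute $3$-approximation.

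There is no real obstacle in this argument; the only point that needs care is the case $\opt_{SAP}=1$, where the additive $+1$ from the DSA rounding cannot be amortized against $\opt_{SAP}$ and one must instead appeal to integrality of the number of rounds. Note also that the algorithm does not need to detect whether $\opt_{SAP}=1$: the very same run is covered by both cases of the analysis, so no extra feasibility test is required.
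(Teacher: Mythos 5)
Your proposal is correct and follows essentially the same route as the paper: both derive the absolute bound from the $(2+\eps)\opt_{SAP}+1$ guarantee by splitting into $\opt_{SAP}=1$ (where integrality of the round count caps the bound at $3$) and $\opt_{SAP}\ge 2$ (where the additive $+1$ is absorbed into $\opt_{SAP}/2$ to give $(2.5+\eps)\opt_{SAP}\le 3\opt_{SAP}$). Your extra remarks about fixing $\eps$, polynomial running time, and not needing to test whether $\opt_{SAP}=1$ are sound but merely make explicit what the paper leaves implicit.
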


For \rufp, it is easy to check whether $\opt_{UFP}=1$ by checking whether $L \le c^*$. Otherwise, $\opt_{UFP}\ge 2$ and similar as above, the number of rounds used would be at most $(2.5+\eps)\cdot OPT_{UFP}$. This gives an improvement over the result of Pal \cite{pal2014approximation}.

\begin{theorem}
For any $\eps >0$, there exists a polynomial time $(2.5+\eps)$-approximation algorithm for \rufp, assuming uniform edge capacities.
\end{theorem}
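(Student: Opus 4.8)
The plan is to run the asymptotic $(2+\eps)$-approximation algorithm of Theorem~\ref{theorem:twoasymptoticapprox} essentially as a black box and to treat separately the only regime in which it can fail to give a $(2.5+\eps)$-approximation, namely $\opt_{UFP}=1$. First I would record that, inspecting the proof of Theorem~\ref{theorem:twoasymptoticapprox}, for \rufp\ that algorithm always outputs a feasible packing using at most $(2+\eps)\,\opt_{UFP}+1$ rounds: in Case~1, Subcase~A it uses at most $(2+O(\eps))r\le(2+O(\eps))\opt_{UFP}$ rounds with no additive term, whereas in Case~1, Subcase~B and in Case~2 it uses at most $r+1$ rounds for the small part together with at most $\opt_{UFP}$ rounds for the large part (via Lemma~\ref{lemma:maindynamicprogramming}), hence at most $2\,\opt_{UFP}+1$ in total; tuning the internal accuracy parameter turns the multiplicative factor into $2+\eps$. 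In every case the number of rounds is thus at most $(2+\eps)\,\opt_{UFP}+1$.

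Next I would use a feature specific to \rufp\ that does not hold for \rsap: the case $\opt_{UFP}=1$ can be recognized in polynomial time. Indeed, with uniform edge capacity $\cs$, all jobs of $J$ fit into a single round if and only if $l_e\le\cs$ for every edge $e$, that is, if and only if $L\le\cs$. One direction is immediate from the definition of a valid packing, and for the other, if $L\le\cs$ then placing all jobs in one round is already feasible. So the algorithm first checks whether $L\le\cs$; if so, it returns the single round $J$, which is an optimal solution.

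Otherwise $L>\cs$, so $\opt_{UFP}\ge 2$, and the bound from the first step gives, using $1\le\tfrac12\opt_{UFP}$,
\[
(2+\eps)\,\opt_{UFP}+1\ \le\ (2+\eps)\,\opt_{UFP}+\tfrac12\opt_{UFP}\ =\ (2.5+\eps)\,\opt_{UFP}
\]
rounds. Both branches clearly run in polynomial time, which establishes the claimed $(2.5+\eps)$-approximation (improving the $3$-approximation of Pal~\cite{pal2014approximation}). I do not expect any real obstacle here: the only point that needs care is confirming that the additive overhead of the asymptotic algorithm is just a single round rather than a larger constant, which is exactly what the case analysis in the proof of Theorem~\ref{theorem:twoasymptoticapprox} provides.
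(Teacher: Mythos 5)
Your proposal is correct and follows essentially the same route as the paper: verify that the asymptotic algorithm of Theorem~\ref{theorem:twoasymptoticapprox} uses at most $(2+\eps)\,\opt_{UFP}+1$ rounds, observe that $\opt_{UFP}=1$ is recognizable by the test $L\le c^*$, and absorb the additive~$1$ into $\opt_{UFP}/2$ when $\opt_{UFP}\ge 2$. The only thing you add beyond the paper's exposition is the explicit case-by-case confirmation that the additive overhead is exactly one round, which the paper states more tersely but relies on in the same way.
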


\section{General Case}
In this section, present our algorithms for the general cases of \rufp and \rsap. We begin with 
our $O(\log\log \min\{n,m\})$-approximation algorithms where we 
consider \rufp\ first and describe later how to extend our algorithm
to \rsap. We split the input jobs into large and small
jobs. We define $J_{\mathrm{large}}:=\{\js\in J|d_{\js}>\mathfrak{b}_{\js}/4\}$
and $J_{\mathrm{small}}:=\{\js\in J|d_{\js}\le \mathfrak{b}_{\js}/4\}$.
For the small jobs, we invoke a result by Elbassioni et al. \cite{ElbassioniGGKN12}
that yields a 16-approximation.
\begin{theorem}[\cite{ElbassioniGGKN12}]
\label{thm:Elbassioni-small} We are given an instance of \rufp\ with
a set of jobs $J'$ such that $d_{\js}\le\frac{1}{4}\mathfrak{b}_{\js}$ for
each job $\js\in J'$. Then there is a polynomial time algorithm that computes
a 16-approximate solution to $J'$. 
\end{theorem}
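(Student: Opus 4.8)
The plan is to establish the stronger, purely structural claim that the jobs $J'$ can be packed into $O(r)$ rounds, where $r=\max_e\lceil l_e/c_e\rceil$ is the maximum congestion defined in \Cref{sec:prelim}; since $r$ is a lower bound on $\opt_{UFP}$ and is computable in polynomial time, this gives a constant-factor approximation, and carefully tracking the constants will bring the factor down to $16$. The first step is to clean up the numeric data: round each capacity $c_e$ down to the nearest power of two and each demand $d_\js$ up to the nearest power of two. Both operations change the congestion by at most a constant factor, and because $d_\js\le\mathfrak b_\js/4$ one checks easily that afterwards every job still has demand at most half of the (rounded) capacity of every edge on its path — in particular it still fits on a single edge.

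Next I would group the jobs by their (now power-of-two) demand value. Scaling a group of common demand $2^k$ down by the factor $2^k$ turns it into a \emph{unit-demand} \rufp\ instance with integral edge capacities (every edge on a relevant path has capacity at least $2^k$). For unit demands on a path this is exactly a capacitated interval-colouring problem, whose optimum is $O(r)$ (essentially equal to the congestion) and is computable by a left-to-right greedy / flow argument. Hence each demand class on its own can be packed into $O(r)$ rounds.

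The main difficulty — where I expect essentially all of the work to lie — is that there may be $\Theta(\log c_{\max})$ distinct demand classes, so using disjoint rounds per class only gives an $O(r\log c_{\max})$ bound. To avoid this logarithmic loss, I would merge the classes into a single pool of $O(r)$ rounds, processing them from the largest demand to the smallest and, when inserting a class, reserving on each edge $e$ only a constant fraction of its residual capacity; this is feasible precisely because a job of the class currently being inserted occupies at most half of $c_e$. The crux is a potential/telescoping argument showing that, thanks to the geometric decay of the demands, the total load reserved on any fixed edge $e$ over all rounds never exceeds $O(r\cdot c_e)$, so the capacity constraint of every edge in every round stays satisfied while the pool remains of size $O(r)$. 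An equivalent route is to decompose the path into the laminar family of maximal intervals of capacity $\ge 2^t$ (the ``skyline'' of the capacity profile), route each job at the level of its bottleneck edge — where the capacity is within a factor two of uniform — pack each level near-optimally by the unit-demand subroutine above, and then combine levels up the laminar tree with only constant overhead, again exploiting the geometric decay.

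Finally I would account for the constant: roughly a factor $2$ is lost in rounding capacities down, another factor $2$ in rounding demands up, and the remaining factor comes from the greedy fill rate together with the cross-class merge; a careful bookkeeping keeps the product at $16$. Since every step is realized by explicit greedy and flow subroutines, the whole procedure runs in polynomial time, which completes the argument.
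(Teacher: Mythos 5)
This is a result the paper \emph{cites} from Elbassioni et al.\ rather than proves, so there is no in-paper proof to compare against directly. But the paper does give, in its Section~\ref{sec:generalroundufp} (for the NBA case) and in a commented-out earlier draft of that section, the flavor of the argument Elbassioni et al.\ use, and it is quite different from yours: they define the \emph{class} of an edge $e$ as $\lfloor\log c_e\rfloor$, define the \emph{critical edge} of a job as the leftmost edge of minimum class on its path, and then run a single left-to-right First-Fit pass over the jobs, placing a job in the first round where the load on its critical edge is below a constant fraction of $c_{\mathrm{crit}(\js)}$. Feasibility ($O(r)$ rounds suffice) is an averaging argument over critical-edge loads; validity for an arbitrary edge $e$ is proved by splitting the jobs on $e$ into those whose critical edge lies at or to the left of $e$ (bounded by the FF threshold plus one job's demand) and those whose critical edge lies strictly to the right (whose demands telescope geometrically across capacity classes). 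You instead group jobs by \emph{demand class}, solve each class as a unit-demand instance, and then try to merge the $\Theta(\log c_{\max})$ classes into $O(r)$ rounds by reserving a shrinking fraction of residual capacity per class.

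The genuine gap in your proposal is precisely the merging step you flag as ``where I expect essentially all of the work to lie.'' The ``potential/telescoping argument'' is asserted, not given, and as stated I do not think it closes. Two concrete problems. First, when you reserve a fixed fraction of residual capacity for each demand class in decreasing order, the capacity allotted to class $k$ on edge $e$ decays like $c_e/2^k$, but the load of class $k$ on $e$ need not decay at the same rate: jobs of class $k$ passing through $e$ need only satisfy $2^k\le\mathfrak b_\js/4$, and $\mathfrak b_\js$ can be far smaller than $c_e$, so the per-class congestion inside its reserved slice can blow up well past $O(r)$. Second, bounding the \emph{total} reserved load on $e$ over all rounds by $O(r\cdot c_e)$ does not give the per-round constraint you need; you would have to argue that the unit-demand subroutine spreads each class's load evenly, and that these even spreads align across classes, which is not automatic. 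Your alternative ``laminar skyline'' route (decompose by capacity level, pack each level near-uniformly, combine up the laminar tree) is much closer to the actual mechanism — it is essentially the critical-edge idea in disguise — but the crucial combining lemma is again only gestured at. The Elbassioni-style proof avoids all of this by never splitting into classes at all: it does one greedy pass and proves validity directly via the left-of-$e$/right-of-$e$ case split and a geometric sum over edge classes. To repair your proposal, I would replace the demand-class decomposition with the edge-class/critical-edge decomposition and prove the per-edge validity bound in the two-case form sketched above; that is where the constant $16$ (from $d_\js\le\mathfrak b_\js/4$) actually comes from.
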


Now consider the large jobs $J_{\mathrm{large}}$. For each job $\js\in J_{\mathrm{large}}$, 
we define a rectangle $R_{\js}=(s_{\js},\mathfrak{b}_{\js}-d_{\js})\times(t_{\js},\mathfrak{b}_{\js})$.
Note that $R_{\js}$ corresponds to the rectangle for $\js$ in \rsap
if we assign $\js$ the maximum possible height $h_{\js}$ (which is $h_{\js}:=\mathfrak{b}_{\js}-d_{\js}$). We say
that a set of jobs $J'\subseteq J_{\mathrm{large}}$ is \emph{top-drawn}
(underneath the capacity profile), if their rectangles are pairwise
non-overlapping, i.e., if $R_{\js}\cap R_{\js'}=\emptyset$ for any
$\js,\js'\in R_{\js}$. If a set of jobs $J'$ is top-drawn, then
it clearly forms a feasible round in \rufp. 
However, not every feasible round of \rufp is top-drawn. Nevertheless, we look for a solution to \rufp in which the
jobs in each round are top-drawn. The following lemma implies that
this costs only a factor of 8 in our approximation ratio.

\begin{lemma}[\cite{BonsmaSW11}] \label{lemma:itspacking}  Let $J' \subseteq J_{\mathrm{large}}$ 
be a set of jobs packed in  a feasible round for a given instance of \rufp. Then $J'$ can
be partitioned into at most 8 sets such that each of them is top-drawn.\end{lemma}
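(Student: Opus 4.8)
The plan is to phrase the statement as a graph–colouring problem and then exploit the very special shape of the rectangles $R_\js$. Define the \emph{overlap graph} $G$ on vertex set $J'$, with an edge between $\js$ and $\js'$ whenever $R_\js\cap R_{\js'}\neq\emptyset$. A partition of $J'$ into $k$ top‑drawn sets is exactly a proper $k$‑colouring of $G$, so it suffices to show $\chi(G)\le 8$. The two features I would use are: (i) each $R_\js$ is pushed to the ceiling at its bottleneck edge, i.e. its top side lies at height $\mathfrak{b}_\js=c_{e_\js}$, where $e_\js\in P_\js$ is a bottleneck edge of $\js$; and (ii) largeness: the height of $R_\js$ equals $d_\js>\mathfrak{b}_\js/4$.

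First I would record the structure of cliques. Since the $R_\js$ are axis‑parallel boxes, a family of them is pairwise intersecting if and only if it has a common point (Helly in one dimension applied to the $x$‑ and to the $y$‑projections). Hence a clique of $G$ consists of jobs all passing through a single edge $e$ and whose rectangles all contain a common height $y_0$; in particular $\mathfrak{b}_\js>y_0$ and $d_\js>\max\{\mathfrak{b}_\js/4,\ \mathfrak{b}_\js-y_0\}$ for every such $\js$. Classifying the clique jobs according to whether their bottleneck edge lies strictly to the left of $e$, equals $e$, or lies strictly to the right of $e$, each class has $O(1)$ size: for instance, among the jobs whose bottleneck edge is strictly left of $e$, all of them pass through the rightmost such bottleneck edge $e'$ (their paths contain both their own bottleneck edge and $e$, hence everything in between), the job owning $e'$ already consumes more than $c_{e'}-y_0$ of the capacity of $e'$, and each remaining such job consumes more than $y_0/4$, so feasibility on $e'$ leaves room for only a constant number of them. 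The analogous argument (on $e$ itself, respectively the leftmost bottleneck edge strictly right of $e$) bounds the other two classes, giving $\omega(G)=O(1)$.

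The remaining — and genuinely delicate — step is to turn a bound on $\omega(G)$ into the sharp constant $8$ for $\chi(G)$: arbitrary rectangle intersection graphs only satisfy $\chi=O(\omega\log\omega)$, so one must use the one‑dimensional (path) geometry. Following \cite{BonsmaSW11}, I would process the jobs in a carefully chosen order — by non‑increasing bottleneck capacity $\mathfrak{b}_\js$, breaking ties by left endpoint — and colour greedily, using for each newly processed job $\js$ that every earlier job conflicting with $\js$ either passes through $e_\js$ (of which, by (ii) and feasibility on $e_\js$, only a constant number share $\js$'s bottleneck level and can overlap $R_\js$) or meets $P_\js$ entirely on one side of $e_\js$; in the latter case largeness forces such a job's rectangle to reach up to within $\mathfrak{b}_\js$ of the ceiling, and a charging argument against a feasibility constraint on an extremal edge bounds how many colours the left‑ (resp. right‑) side jobs can jointly occupy. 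Balancing the three contributions yields at most $8$ colours. The main obstacle is exactly this last accounting: one has to rule out that jobs whose bottleneck edges lie far from $e_\js$ force too many distinct colours near $\js$, and this is precisely where the threshold $1/4$ in the definition of $J_{\mathrm{large}}$ and the interval structure of the path are needed.
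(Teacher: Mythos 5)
The paper cites this lemma from Bonsma, Schulz, and Wiese \cite{BonsmaSW11} and does not reproduce a proof, so there is nothing in the paper to compare against; your proposal has to stand on its own. Your framing as a colouring problem is the right one, and the Helly/extremal-bottleneck-edge argument bounding the clique number $\omega$ of the overlap graph by a constant is sound (done per class as you sketch it gives a bound somewhat larger than $8$; combining the three classes in the single feasibility constraint on the common edge tightens it to $\omega\le 8$, but in any case $\omega=O(1)$).

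The genuine gap is the step you yourself flag as ``the main obstacle'': converting $\omega=O(1)$ into $\chi\le 8$. For general rectangle intersection graphs $\chi$ can be much larger than $\omega$ --- that is exactly why the paper invokes Chalermsook--Walczak for the residual large jobs --- so some additional structure of top-drawn rectangles must be used, and your sketch does not supply it. You propose ordering by non-increasing $\mathfrak{b}_\js$ and colouring greedily, but a greedy colouring is bounded by the number of \emph{colours} already present among the earlier neighbours of $\js$, not by the number of earlier neighbours, and in your ordering the latter can be unbounded: take $\js$ spanning a long path with capacities decreasing to the right and bottleneck at the last edge $e_m$, and for each other edge $e_i$ a single-edge job $\js'_i$ with $\mathfrak{b}_{\js'_i}=c_{e_i}>\mathfrak{b}_\js$ and demand just above $\max\{c_{e_i}/4,\,c_{e_i}-\mathfrak{b}_\js\}$; then every $R_{\js'_i}$ dips into the top strip of $R_\js$ and precedes $\js$, so $\js$ has $m-1$ earlier neighbours. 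These $\js'_i$ have pairwise disjoint paths and so would reuse a colour, but that is precisely what your argument must \emph{prove}: a bound on the number of colours the left- and right-side earlier neighbours can jointly occupy. The phrase ``a charging argument against a feasibility constraint on an extremal edge'' names the missing lemma rather than establishing it, and the interval structure that supposedly makes the left-side colours collapse is never exploited in a verifiable way. Until that accounting is made precise, the constant $8$ does not follow, and the proposal is incomplete.
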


Let $\R_{\mathrm{large}}:=\{R_{\js}|\js\in J_{\mathrm{large}}\}$
denote the rectangles corresponding to the large jobs and let $\omega_{\mathrm{large}}$ be 
their clique number, i.e., the size of the largest set $\R'\subseteq\R_{\mathrm{large}}$
such that all rectangles in $\R'$ pairwise overlap. As a consequence of Helly's theorem (see \cite{helly22} for details), 
 for such a set of axis-parallel rectangles $\R'$ there must be a point in which all rectangles
in $\R'$ overlap. Note that we need at least $\omega_{\mathrm{large}}$
rounds since we seek a solution with only top-drawn jobs in
each round. 

We first reduce the original instance to the case where there are only $O(m^2)$ many distinct job demands. For each $i \in \{1,\ldots,m\}$, let $p_{i}$ denote the point $(i,c_{e_i})$. We draw a horizontal and a vertical line segment passing through $p_i$ and lying completely under the capacity profile (see \Cref{processingtopdrawn}(a)). This divides the region underneath the capacity profile into at most $m^2$ regions. Let $\Hs$ denote the set of horizontal lines and $\Vs$ denote the set of vertical lines drawn. Thus, the top edge of any rectangle corresponding to a large job must touch a line in $\Hs$. Now consider any rectangle $R_{\js}$ corresponding to a job $\js$. Let $h\in \Hs$ be the horizontal line segment lying just below the bottom edge of $R_{\js}$. We increase the value of $d_{\js}$ so that the the bottom edge of $R_{\js}$ now touches the line segment $h$ (see \Cref{processingtopdrawn}(b)). Since the rectangles were top-drawn, the clique number of this new set of large rectangles (denoted by $\R'_{\mathrm{large}}$) does not change. Also any feasible packing of $\R'_{\mathrm{large}}$ is a feasible packing of $\R_{\mathrm{large}}$.

\begin{figure}
\captionsetup{justification=centering} \captionsetup[subfigure]{justification=centering}
\begin{subfigure}[b]{.5\textwidth}
\centering \resizebox{4.4cm}{1.3cm}{ \begin{tikzpicture}
			\draw[thick] (1,0) -- (13,0);
			\foreach \x in {1,...,13}
				\draw[thick] (\x,0.1) -- (\x,-0.1);
			\foreach \x in {0,...,12}
				\draw[thick] (\x+1,-0.45) node {\Large \textbf{\x}};
			\draw[thick] (1,0) -- (1,2.5) -- (3,2.5) -- (3,0.5) -- (4,0.5) -- (4,1.5) -- (5,1.5) -- (5,2.5) -- (6,2.5) -- (6,2) -- (7,2) -- (7,2.5) -- (8,2.5) -- (8,1.5) -- (9,1.5) -- (9,2) -- (10,2) -- (10,1.5) -- (11,1.5) -- (11,2.5) -- (13,2.5) -- (13,0);
			
			\draw[dashed] (1,0.5) -- (13,0.5);
			\draw[dashed] (4,1.5) -- (13,1.5); 
			\draw[dashed] (5,2) -- (8,2);
			\draw[dashed] (3,0) -- (3,0.5);			
			\draw[dashed] (4,0) -- (4,0.5);
			\draw[dashed] (5,0) -- (5,1.5);
			\draw[dashed] (6,0) -- (6,2);
			\draw[dashed] (7,0) -- (7,2);
			\draw[dashed] (8,0) -- (8,1.5);
			\draw[dashed] (9,0) -- (9,2);
			\draw[dashed] (10,0) -- (10,2);
			\draw[dashed] (11,0) -- (11,2);
			
			\draw[thick, color=blue, pattern=north east lines, pattern color = blue] (1,1.5) rectangle (3,2.5);
			\draw[thick, color=red, pattern=north east lines, pattern color = red] (7,1) rectangle (11,1.5);
			\draw[thick, color=green, pattern=north east lines, pattern color = green] (11,1) rectangle (13,2.5);
			\draw[thick, color=yellow, pattern=north east lines, pattern color = yellow] (5,1.5) rectangle (6,2.5);
			\draw[thick, color=violet, pattern=north east lines, pattern color = violet] (4,1) rectangle (6,1.5);
		\end{tikzpicture}} \caption{}
\label{fig:zbend1} \end{subfigure}
\begin{subfigure}[b]{.5\textwidth}
\centering \resizebox{4.4cm}{1.3cm}{ \begin{tikzpicture}
			\draw[thick] (1,0) -- (13,0);
			\foreach \x in {1,...,13}
				\draw[thick] (\x,0.1) -- (\x,-0.1);
			\foreach \x in {0,...,12}
				\draw[thick] (\x+1,-0.45) node {\Large \textbf{\x}};
			\draw[thick] (1,0) -- (1,2.5) -- (3,2.5) -- (3,0.5) -- (4,0.5) -- (4,1.5) -- (5,1.5) -- (5,2.5) -- (6,2.5) -- (6,2) -- (7,2) -- (7,2.5) -- (8,2.5) -- (8,1.5) -- (9,1.5) -- (9,2) -- (10,2) -- (10,1.5) -- (11,1.5) -- (11,2.5) -- (13,2.5) -- (13,0);
			
			\draw[dashed] (1,0.5) -- (13,0.5);
			\draw[dashed] (4,1.5) -- (13,1.5); 
			\draw[dashed] (5,2) -- (8,2);
			\draw[dashed] (3,0) -- (3,0.5);			
			\draw[dashed] (4,0) -- (4,0.5);
			\draw[dashed] (5,0) -- (5,1.5);
			\draw[dashed] (6,0) -- (6,2);
			\draw[dashed] (7,0) -- (7,2);
			\draw[dashed] (8,0) -- (8,1.5);
			\draw[dashed] (9,0) -- (9,2);
			\draw[dashed] (10,0) -- (10,2);
			\draw[dashed] (11,0) -- (11,2);
			
			\draw[thick, color=blue, pattern=north east lines, pattern color = blue] (1,0.5) rectangle (3,2.5);
			\draw[thick, color=red, pattern=north east lines, pattern color = red] (7,0.5) rectangle (11,1.5);
			\draw[thick, color=green, pattern=north east lines, pattern color = green] (11,0.5) rectangle (13,2.5);
			\draw[thick, color=yellow, pattern=north east lines, pattern color = yellow] (5,1.5) rectangle (6,2.5);
			\draw[thick, color=violet, pattern=north east lines, pattern color = violet] (4,0.5) rectangle (6,1.5);
		\end{tikzpicture}} \caption{}
\label{fig:zbend1} \end{subfigure}

 \caption{\textbf{(a)} The capacity profile along with the sets of lines $\Hs$ and $\Vs$ and some top-drawn jobs; \textbf{(b)} The jobs after processing; \protect \\
 }
\label{processingtopdrawn} 
\end{figure}
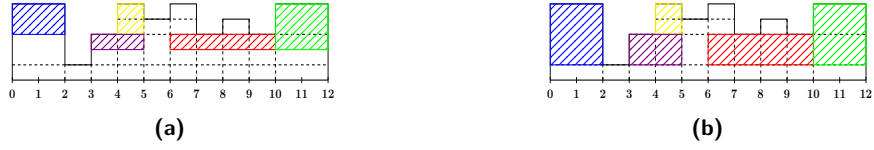

Note that $\R'_{\mathrm{large}}$ contains at most $m^4$ distinct types of jobs: the endpoints $s_{\js}$ and $t_{\js}$ can be chosen in ${m+1 \choose 2}\le m^2$ ways and the top and bottom edges of $R_{\js}$ must coincide with two lines from $\Hs$, which can be again chosen in ${m \choose 2}\le m^2$ ways. Let $U$ denote the number of types of job of the given instance and
let $\R'_{\mathrm{large}} = J_1\cup J_2 \cup \ldots \cup J_U$ be the decomposition of $\R'_{\mathrm{large}}$ into the $U$ distinct job types. 

We now formulate the configuration LP for this instance. Let $\Cs$ denote the set of all possible configurations of a round containing jobs from $\R'_{\mathrm{large}}$, drawn as top-drawn sets. For each $C \in \Cs$, we introduce a variable $x_C$, which stands for the number of rounds having configuration $C\in \Cs$. We write $J_k \triangleleft C$ if configuration $C$ contains a job from $J_k$ (note that $C$ can contain at most one job from $J_k$). Then the relaxed configuration LP and its dual (which contains a variable $y_k$ for each set $J_k$) are as follows.

\begin{align*}
\text{minimize} \quad \sum_{C\in \Cs} x_{C}&
&\text{maximize} \quad \sum_{k=1}^{U} |J_k|y_k& \\
\text{subject to} \quad \sum_{C:J_{k} \triangleleft C} x_C &\geq |J_k|, \quad k = 1,\ldots,U
&\text{subject to } \quad \sum_{k:J_{k} \triangleleft C} y_k &\leq 1, \quad C\in \Cs \\
x_{C} & \geq 0, \quad\, C\in \Cs
&y_k & \geq 0, \quad\, k =1, \dots, U
\end{align*}

The dual LP can be solved via the ellipsoid method with a suitable separation oracle. We interpret $y_k$ as the weight of each job in $J_k$. Given $(y_k)_{k\in \{1,\ldots,U\}}$, the separation problem asks whether there exists a configuration where jobs are drawn as top-drawn sets and the total weight of all the jobs in the configuration exceeds 1. For this, we invoke the following result of Bonsma et al.~\cite{BonsmaSW11}.

\begin{theorem} [\cite{BonsmaSW11}] \label{thm:dynamicprogramforufp} 
Given an instance of UFP with a set of jobs $J'$, the maximum-weight top-drawn subset of $J'$ can be computed in $O(nm^3)$ time.
\end{theorem}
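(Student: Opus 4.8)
The plan is to prove the theorem by a dynamic program over subpaths, exploiting that the rectangles $\{R_{\js} : \js \in J'\}$ are highly structured: each $R_{\js}$ is inscribed beneath the capacity profile and touches it from below along its bottleneck edge, so the top edge of $R_{\js}$ sits exactly at height $\mathfrak{b}_{\js}$. Finding a maximum-weight top-drawn subset is precisely computing a maximum-weight independent set in the intersection graph of these rectangles, and this restricted geometry is what makes the problem tractable.

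The key structural lemma I would establish is the following. For any subpath $[v_a,v_b]$, let $e_k$ be the leftmost edge of minimum capacity among $e_{a+1},\dots,e_b$. Then in any top-drawn set all of whose jobs have their paths inside $[v_a,v_b]$, \emph{at most one} job uses $e_k$: any such job $\js$ has $\mathfrak{b}_{\js}=c_{e_k}$ (its path lies in a range where every capacity is at least $c_{e_k}$ and it contains $e_k$), so the top of $R_{\js}$ lies at height $c_{e_k}$, and two such jobs would overlap both horizontally (at $e_k$) and vertically (both filling the band just under $c_{e_k}$). Moreover, if $\js$ is this unique job, every other chosen job that shares an edge with $\js$ has bottleneck at least $c_{e_k}$ and, in order to avoid $R_{\js}$, must lie entirely above height $c_{e_k}$; in particular it cannot use $e_k$ either. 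These two facts let the instance be peeled apart edge by edge along the ``Cartesian tree'' of the capacity profile.

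Concretely, I would maintain a table $T(a,b,h)$ equal to the maximum weight of a top-drawn subset of $J'$ whose jobs have paths inside $[v_a,v_b]$ and whose rectangles lie entirely above height $h$, where $h$ ranges only over the $O(m)$ distinct edge capacities (together with $0$). The recursion: if some edge of $[v_a,v_b]$ has capacity at most $h$, no chosen job can cross it, so split at the leftmost such edge. Otherwise let $e_k$ be the leftmost minimum-capacity edge of $[v_a,v_b]$, and branch on whether no chosen job uses $e_k$ (recurse on $[v_a,v_{k-1}]$ and $[v_k,v_b]$ with the same floor $h$) or exactly one job $\js$ does (enumerate the $O(n)$ candidates with $\mathfrak{b}_{\js}=c_{e_k}$ and $c_{e_k}-d_{\js}\ge h$, add the weight of $\js$, recurse on the parts of the path outside $P_{\js}$ with floor $h$, and on the two parts of $P_{\js}$ on either side of $e_k$ with the raised floor $c_{e_k}$). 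Since the raised floor is always some $c_{e_k}$, the table has $O(m^3)$ entries, each computable in $O(n)$ time once range minima over subpaths are precomputed, for a total of $O(nm^3)$; correctness of the bottom-up evaluation is exactly the structural lemma.

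I expect the main obstacle to be the bookkeeping around jobs whose paths \emph{straddle} an endpoint of the peeled-off job $\js$: such a job lies partly inside $P_{\js}$, where it is forced above $c_{e_k}$, and partly outside, where it is not, so it falls into neither of the naive halves $[v_a,v_{s_{\js}}]$ and $[v_{s_{\js}},v_{k-1}]$. Making the decomposition genuinely correct requires choosing the subproblems so that every job belongs to exactly one of them and no pairwise non-overlap constraint is silently dropped; the right formulation relies on the fact that along any fixed edge the jobs of a top-drawn set are linearly ordered by bottleneck (hence by height), which confines all cross-piece interactions to a single floor parameter. Verifying this carefully, and that $O(m)$ distinct floor values really do suffice, is the technical heart of the argument, after which the running-time accounting is routine.
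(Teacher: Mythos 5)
The paper does not prove this statement; it cites it from Bonsma et al.\ \cite{BonsmaSW11}. So the question is whether your blind proof works on its own terms, and it does not: the difficulty you flag in your last paragraph---jobs that straddle an endpoint of the peeled-off job $\js$---is not a side issue that ``the right formulation'' silently handles, it is a fatal flaw in the $T(a,b,h)$ decomposition as you have stated it. Concretely, take a path with four edges of capacities $10,10,1,10$, a job $\js$ with $P_{\js}=[v_1,v_4]$, $d_{\js}=1$, $\mathfrak{b}_{\js}=1$, $R_{\js}=(1,4)\times(0,1)$, and a job $\js'$ with $P_{\js'}=[v_0,v_2]$, $d_{\js'}=8$, $\mathfrak{b}_{\js'}=10$, $R_{\js'}=(0,2)\times(2,10)$. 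These two rectangles are disjoint, so $\{\js,\js'\}$ is top-drawn and the optimum is $w(\js)+w(\js')$. But in $T(0,4,0)$ the leftmost minimum-capacity edge is $e_3$. If you branch on $\js$ using $e_3$, the four subproblems are $[v_0,v_1]$ with floor $0$, $[v_1,v_2]$ with floor $1$, $[v_3,v_4]$ with floor $1$, and $[v_4,v_4]$ with floor $0$; $\js'$ has path $[v_0,v_2]$, which is contained in none of them, so it is dropped. If you instead branch on no job using $e_3$, then $\js$ is dropped. Either way your recursion returns $\max\{w(\js),w(\js')\}$, not the true optimum. The state $(a,b,h)$ with a single flat floor is simply too coarse: after peeling $\js$, the left-of-$e_k$ subproblem needs floor $h$ over $[v_a,v_{s_{\js}}]$ but floor $c_{e_k}$ over $[v_{s_{\js}},v_{k-1}]$, i.e.\ a two-step staircase, and your ``jobs over a fixed edge are linearly ordered by height'' observation does not collapse this back to one parameter.

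To make the Cartesian-tree idea go through you would have to either enrich the state so that the floor can be a (sufficiently restricted) staircase rather than a single value, or abandon the split at $v_{s_{\js}}$ and $v_{t_{\js}}$ and instead recurse only at $e_k$ while carrying the compatibility-with-$\js$ constraint into the child subproblems in some encoded form; in both cases the real work is showing that the resulting state space still has size $\mathrm{poly}(m)$, which is precisely the part you have left as ``the technical heart.'' The structural observations you do prove---that at most one chosen job crosses the minimum-capacity edge of a subrange, and that every other overlapping job must lie strictly above $c_{e_k}$---are correct and are indeed the right starting point, so this is a genuine gap rather than a wrong approach, but as written the algorithm is not correct.
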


Let $(x^*_C)_{C\in \Cs}$ be an optimal basic solution of the primal LP. 
By the rank lemma, there are at most
at most $U$ configurations $C$ for which $x^*_C$ is non-zero. 
For each non-zero $x^*_C$, we introduce $\lfloor x^*_C \rfloor$ rounds with configuration $C$, thus creating at most $8\cdot\opt_{UFP}$ rounds (due to \Cref{lemma:itspacking}). Now let $\R''_{\mathrm{large}}\subseteq \R'_{\mathrm{large}}$ be the large jobs that are yet to be packed and let $\omega''_{\mathrm{large}}$ be their clique number (and note that $\omega''_{\mathrm{large}}\le 8\cdot\opt_{UFP}$). In particular, a feasible solution to the configuration LP for the rectangles in $\R''_{\mathrm{large}}$ is to select one more round for each configuration $C$ with $x^*_C>0$. Therefore, we conclude that 
$\omega''_{\mathrm{large}} \le U \le m^4$ since there are at most $U$ configurations $C$ with $x^*_C>0$  and 
for each point, each configuration contains at most one rectangle covering this point.

Our strategy is to invoke the following theorem on $\R''_{\mathrm{large}}$.

\begin{theorem}[\cite{parinbar}] \label{thm:parinyathm}  Given a
set of rectangles with clique number $\omega$, in polynomial
time, we can compute a coloring of the rectangles using $O(\omega\log\omega)$
colors such that no two rectangles of the same color intersect. \end{theorem}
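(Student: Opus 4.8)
Although the paper invokes this as a black box from \cite{parinbar}, here is the approach I would take to establish it. The plan is a divide-and-conquer on the $x$-axis, built on the elementary fact that a set of rectangles that all cross one common vertical line is essentially an interval graph. First I would record, via Helly's theorem for axis-parallel boxes (already used above), that the clique number $\omega$ equals the maximum \emph{depth}: the largest number of rectangles containing a single common point. This lets us reason about local density rather than about abstract cliques.

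The key elementary observation is: if $\mathcal{R}'$ is a family of rectangles all crossing a fixed vertical line $\ell=\{x=c\}$, then for any $R,R'\in\mathcal{R}'$ the $x$-projections of $R$ and $R'$ both contain $c$, so $R\cap R'\neq\emptyset$ if and only if the vertical projections of $R$ and $R'$ overlap. Hence $\mathcal{R}'$ induces an interval graph with clique number at most $\omega$, which can be properly colored with $\omega$ colors in polynomial time by the greedy left-to-right rule. Given this, the natural recursion is: choose a vertical line $\ell$, let $\mathcal{R}_M$ be the rectangles crossing $\ell$ and $\mathcal{R}_L,\mathcal{R}_R$ the ones lying entirely to its left and right; color $\mathcal{R}_M$ with a fresh palette of $\omega$ colors and recurse on $\mathcal{R}_L$ and $\mathcal{R}_R$ with a single \emph{shared} palette (legal since no rectangle of $\mathcal{R}_L$ meets one of $\mathcal{R}_R$). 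Choosing each $\ell$ to balance the two sides gives recursion depth $O(\log n)$ and thus a coloring with $O(\omega\log n)$ colors.

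The main obstacle --- and the real content of the Chalermsook--Walczak argument --- is to replace the $\log n$ overhead by $\log\omega$: a balanced cut leaves the clique number of each side unchanged, so one cannot afford a fresh palette at every level of the recursion. The resolution is to stop the recursion once the local density drops, organizing the cuts along a dyadic scale of the $x$-coordinates and assigning each rectangle to the coarsest scale at which it straddles a grid line inside its cell; rectangles assigned to disjoint cells of the same scale are $x$-separated and can reuse colors, so only the scales that genuinely carry density contribute, and an amortization argument bounds their number by $O(\log\omega)$ (rectangles that survive to a fine scale are ``thin'' relative to the density they see and hence cheap to absorb). Carrying out this charging scheme carefully --- picking the potential that halves at each active scale and checking that each scale adds only $O(\omega)$ colors along any root-to-leaf branch --- is the technical crux, for which I would follow the construction in \cite{parinbar}. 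Since locating the dyadic cuts, computing the interval-graph colorings, and the palette bookkeeping are all polynomial, the overall procedure runs in polynomial time, giving the claimed $O(\omega\log\omega)$-coloring.
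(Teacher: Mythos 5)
The paper does not prove \Cref{thm:parinyathm}; it is cited as a black box from \cite{parinbar}, so there is no in-paper argument to compare against. Your warm-up steps are correct and standard: the Helly-style identification of clique number with point-depth for axis-parallel boxes, the observation that rectangles crossing a common vertical line induce an interval graph and so can be greedily colored with $\omega$ colors, and the balanced vertical-cut recursion giving an $O(\omega\log n)$-coloring.

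The gap, however, is exactly where you flag it: the entire content of the cited theorem is the improvement from $\log n$ to $\log\omega$, and your proposal does not carry this step out. The sketch of a dyadic $x$-scale with a ``potential that halves at each active scale'' is never made precise---what the potential is, why rectangles surviving to a fine scale are ``cheap,'' why only $O(\log\omega)$ scales contribute color cost along any root-to-leaf branch---and you explicitly defer the technical crux to ``the construction in \cite{parinbar}.'' As written this is a correct setup plus a pointer, not an independent proof of the statement. I would also caution against assuming the dyadic-grid picture is the right one: divide-and-conquer by vertical separators, as you yourself note, naturally tops out at $O(\omega\log n)$ because a balanced cut does not reduce $\omega$; to my recollection the Chalermsook--Walczak argument is organized differently, around a partition of the rectangle family into $O(\log\omega)$ levels defined by local clique participation together with structural reductions to corner-touching configurations, rather than a geometric recursion on the $x$-axis. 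If a self-contained proof were really wanted, the right move would be to reproduce that level decomposition; in the context of this paper, simply citing the theorem, as the authors do, is the correct choice.
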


Thus, if $\omega''_{\mathrm{large}}=O(\log m)$ then we obtain an $O(\log\log m)$-approximation
as desired. However, it might be that $\omega''_{\mathrm{large}}$ is
larger. In that case, we partition $\R''_{\mathrm{large}}$ into $\omega''_{\mathrm{large}}/\log m$
sets, such that each of them has a clique size of $O(\log m)$.
\begin{lemma}
There is a randomized polynomial time algorithm that w.h.p.~computes
a partition $\R''_{\mathrm{large}}=\R_{1}\dot{\cup}...\dot{\cup}\R_{\omega''_{\mathrm{large}}/\log m}$
such that for each set $\R_{i}$, the corresponding clique size is
at most $O(\log m)$.
\end{lemma}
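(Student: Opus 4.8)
The plan is to obtain the partition by coloring the rectangles of $\R''_{\mathrm{large}}$ \emph{uniformly at random}. Set $k:=\lceil\omega''_{\mathrm{large}}/\log m\rceil$; if $\omega''_{\mathrm{large}}\le\log m$ then $k=1$ and $\R''_{\mathrm{large}}$ itself is the desired (single) part, so assume $\omega''_{\mathrm{large}}>\log m$, in which case $k=\Theta(\omega''_{\mathrm{large}}/\log m)$. Assign to each rectangle of $\R''_{\mathrm{large}}$ an independent, uniformly random color from $\{1,\dots,k\}$, and let $\R_i$ be the set of rectangles receiving color $i$. It then remains to show that with probability $1-1/\mathrm{poly}(m)$ the clique number of every $\R_i$ is $O(\log m)$, which in turn lets us apply \Cref{thm:parinyathm} to each $\R_i$ separately.

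Each $\R_i$ is again a family of axis-parallel rectangles, so by the consequence of Helly's theorem used above, its clique number equals the maximum \emph{depth} of a point, i.e.\ the largest number of rectangles of $\R_i$ through a common point. This depth function is constant on every cell of the grid formed by the vertical lines through the left/right sides and the horizontal lines through the top/bottom sides of the rectangles of $\R''_{\mathrm{large}}$ (and every rectangle is a union of such cells), so it suffices to control, for one representative point $p$ per cell and every color $i$, the number $X_{p,i}$ of color-$i$ rectangles covering $p$. The key observation is that the arrangement has only $\mathrm{poly}(m)$ cells: there are at most $U\le m^4$ distinct job types, and each type occurs with multiplicity at most $\omega''_{\mathrm{large}}\le m^4$ (identical rectangles form a clique), so $|\R''_{\mathrm{large}}|\le m^8$ and the number of cells is $O(m^{16})$.

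Now fix a representative point $p$ of depth $d_p\le\omega''_{\mathrm{large}}$ in $\R''_{\mathrm{large}}$ and a color $i$. Then $X_{p,i}$ is a sum of $d_p$ independent Bernoulli$(1/k)$ variables, hence $\mathbb{E}[X_{p,i}]=d_p/k\le\omega''_{\mathrm{large}}/k\le\log m$. A standard multiplicative Chernoff bound gives $\Pr[X_{p,i}\ge c_0\log m]\le m^{-c_1}$, where $c_1$ can be made as large as we wish by taking the constant $c_0$ large enough. Taking a union bound over the $O(m^{16})$ cells and the $k\le\omega''_{\mathrm{large}}\le m^4$ colors, and fixing $c_1>30$, we conclude that with probability $1-m^{-\Omega(1)}$ every $\R_i$ has maximum depth, and therefore clique number, at most $c_0\log m=O(\log m)$. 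Since the clique number of a family of axis-parallel rectangles can be computed in polynomial time by a plane sweep over these cells, we can verify whether the coloring succeeded and repeat otherwise, which makes the algorithm run in expected polynomial time and proves the lemma. (For the finitely many values of $m$ below the constant threshold implied by the choice of $c_0$, the set $\R''_{\mathrm{large}}$ has constant size and we simply enumerate all colorings; one of them is good by the same probabilistic argument.)

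\textbf{Main obstacle.} The delicate point is the union bound. Because the expected number of color-$i$ rectangles through a point is only $O(\log m)$, the per-cell failure probability is merely $m^{-\Omega(1)}$, which is far too weak if the number of cells is estimated by the generic $\Theta(|\R''_{\mathrm{large}}|^2)=\Theta(n^2)$. Overcoming this requires exploiting that here the number of distinct rectangles — and hence the number of arrangement cells — is itself polynomial in $m$, since each of the at most $m^4$ job types has multiplicity bounded by the clique number $\omega''_{\mathrm{large}}\le m^4$. Once this is in place, the Chernoff tail dominates the union bound and the argument goes through for all $m$.
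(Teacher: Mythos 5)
Your proof is correct and follows essentially the same approach as the paper: uniformly random coloring, a Chernoff bound at each arrangement cell, and a union bound over polynomially many cells and color classes. The only notable difference is in bounding the number of cells: the paper observes that the rectangles of $\R''_{\mathrm{large}}\subseteq\R'_{\mathrm{large}}$ all align with the grid $\Hs\times\Vs$, giving at most $m^2$ cells and a Chernoff threshold of $8\log m$, whereas you first bound $|\R''_{\mathrm{large}}|\le m^8$ (via $U\le m^4$ types, each with multiplicity at most $\omega''_{\mathrm{large}}\le m^4$) to obtain $O(m^{16})$ cells and compensate with a larger constant in the Chernoff threshold; both routes yield the needed $1/\mathrm{poly}(m)$ failure probability.
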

\begin{proof}
We split the rectangles $\R''_{\mathrm{large}}$ uniformly at random
into $\omega''_{\mathrm{large}}/\log m$ sets $\R_{1},\ldots,\R_{\omega''_{\mathrm{large}}/\log m}$. Thus
the expected clique size in each set $\R_{i}$ at any point $p$ under
the profile is at most $\log m$. Using the Chernoff bound, the probability
that the clique size at $p$ is more than $8\log m$ is at most $2^{-8\log m}=1/m^{8}$. As before, we draw the set of horizontal and vertical lines $\Hs$ and $\Vs$, respectively,  under the capacity profile, dividing the region underneath the profile into at most $m^2$ regions. Clearly, the clique number must be the same at all points inside any such region. Thus the probability that there exists a point $p$ under the capacity profile where the clique size is more than $8\log m$ is at most $m^{2}/m^{8}\le1/m^{6}$.
Hence using union bound, probability that clique size is more than
$8\log m$ at some point in some set $\R_{i}$ is at most $1/m^{2}$ (since $\omega''_{\mathrm{large}}\le m^4$).
\end{proof}
We apply Theorem~\ref{thm:parinyathm} to each set $\R_{i}$
separately and thus obtain a coloring with $O(\log m\log\log m)$
colors. Thus, for all sets $\R_{i}$ together we use at most $\frac{\omega''_{\mathrm{large}}}{\log m}O(\log m\log\log m)=O(\omega''_{\mathrm{large}}\log\log m)$
colors. We pack the jobs from each color class to a separate round
for our solution to \rufp. This yields
an $O(\log\log m)$-approximation, together with Theorem~\ref{thm:Elbassioni-small}. 
Since $m\le 2n-1$ after our preprocessing, our algorithms are also $O(\log \log n)$-approximation algorithms.

\begin{theorem} There exists a randomized $O(\log\log \min\{n,m\})$-approximation
algorithm for \rufp\ for general edge capacities. \end{theorem}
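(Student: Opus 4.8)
The plan is to treat large and small jobs separately and then combine the two solutions. Set $J_{\mathrm{large}}:=\{\js\in J\mid d_{\js}>\mathfrak{b}_{\js}/4\}$ and $J_{\mathrm{small}}:=\{\js\in J\mid d_{\js}\le\mathfrak{b}_{\js}/4\}$. For $J_{\mathrm{small}}$ I would directly invoke Theorem~\ref{thm:Elbassioni-small} to obtain a $16$-approximate solution, which already uses only $O(\opt_{UFP})$ rounds. All remaining effort goes into $J_{\mathrm{large}}$, where the target is an $O(\opt_{UFP}\log\log m)$-round solution; adding the two solutions and using $m\le 2n-1$ (from the preprocessing in Section~\ref{sec:prelim}) then gives the claimed $O(\log\log\min\{n,m\})$ ratio.

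For the large jobs I would first restrict attention to \emph{top-drawn} solutions, i.e.\ those in which the rectangles $R_{\js}=(s_{\js},\mathfrak{b}_{\js}-d_{\js})\times(t_{\js},\mathfrak{b}_{\js})$ are pairwise disjoint within each round. By Lemma~\ref{lemma:itspacking} any feasible \rufp\ round decomposes into at most $8$ top-drawn sets, so an optimal top-drawn solution uses at most $8\opt_{UFP}$ rounds. Next I would apply the preprocessing: draw the horizontal and vertical segments through the points $p_i=(i,c_{e_i})$ under the capacity profile (Figure~\ref{processingtopdrawn}) and push the bottom edge of each $R_{\js}$ down to the nearest horizontal segment. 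This keeps the clique number unchanged and every feasible packing feasible, and leaves only $U\le m^4$ distinct job \emph{types} $J_1,\dots,J_U$. Then I would write the configuration LP with a variable $x_C$ per top-drawn configuration $C$ and constraints $\sum_{C:J_k\triangleleft C}x_C\ge|J_k|$, and solve it by the ellipsoid method: the dual separation problem is exactly ``find a maximum-weight top-drawn subset'', which is solved in polynomial time by Theorem~\ref{thm:dynamicprogramforufp}.

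Let $(x_C^*)$ be an optimal basic primal solution; by the rank lemma it has at most $U$ nonzero entries, and its value is at most $8\opt_{UFP}$. I open $\lfloor x_C^*\rfloor$ rounds of type $C$ for each nonzero $C$, using at most $8\opt_{UFP}$ rounds in total and leaving a residual rectangle set $\R''_{\mathrm{large}}$. The crucial observation is that taking one \emph{extra} round for each surviving configuration is a feasible integral solution for $\R''_{\mathrm{large}}$, and since every configuration covers any fixed point at most once, the clique number $\omega''_{\mathrm{large}}$ of $\R''_{\mathrm{large}}$ satisfies $\omega''_{\mathrm{large}}\le U\le m^4$; moreover $\omega''_{\mathrm{large}}\le 8\opt_{UFP}$, because any clique of rectangles is (by Helly's theorem) a family of jobs sharing a common point, hence pairwise conflicting in every top-drawn solution. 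Finally I would color $\R''_{\mathrm{large}}$ via Theorem~\ref{thm:parinyathm}: if $\omega''_{\mathrm{large}}=O(\log m)$ this gives $O(\log m\log\log m)=O(\opt_{UFP}\log\log m)$ rounds directly; otherwise randomly split $\R''_{\mathrm{large}}$ into $\omega''_{\mathrm{large}}/\log m$ classes, use a Chernoff bound together with a union bound over the at most $m^2$ regions cut out by $\Hs,\Vs$ and the at most $m^4$ classes to argue every class has clique number $O(\log m)$ at every point w.h.p., and apply Theorem~\ref{thm:parinyathm} to each class for a total of $O(\omega''_{\mathrm{large}}\log\log m)=O(\opt_{UFP}\log\log m)$ colors, each color becoming a round. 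Combined with the $16$-approximation for $J_{\mathrm{small}}$, this yields the randomized $O(\log\log\min\{n,m\})$-approximation. I expect the main subtlety to be the bound $\omega''_{\mathrm{large}}\le U$ and $\omega''_{\mathrm{large}}\le 8\opt_{UFP}$: this is precisely what lets us absorb the $\Theta(\log\log m)$ coloring overhead on top of the constant-factor LP rounding, and it rests on the structural fact that the residual instance remains LP-feasible using just one round per surviving configuration.
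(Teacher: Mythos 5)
Your proposal follows the paper's own proof step for step: the same $\mathfrak{b}_{\js}/4$ split, Theorem~\ref{thm:Elbassioni-small} for $J_{\mathrm{small}}$, the top-drawn relaxation via Lemma~\ref{lemma:itspacking}, the $\Hs,\Vs$ grid to reduce to $U\le m^4$ types, the configuration LP solved via Theorem~\ref{thm:dynamicprogramforufp}, floor-rounding plus the rank-lemma bound $\omega''_{\mathrm{large}}\le U$ together with $\omega''_{\mathrm{large}}\le 8\opt_{UFP}$, and finally random partitioning with Chernoff plus Theorem~\ref{thm:parinyathm}. The only nitpick is the phrase ``$O(\log m\log\log m)=O(\opt_{UFP}\log\log m)$'' in the $\omega''_{\mathrm{large}}=O(\log m)$ case, which should instead read $O(\omega''_{\mathrm{large}}\log\omega''_{\mathrm{large}})=O(\opt_{UFP}\log\log m)$; the conclusion is still correct.
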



In order to obtain an algorithm for \rsap, we invoke the following
lemma due to \cite{MomkeW15} to each round of the computed solution
to \rufp.
\begin{lemma}[\cite{MomkeW15}]
\label{lem:UFP2SAP}Let $J'$ be the set of jobs packed in a feasible round for a given instance
of \rufp. Then in polynomial time we can partition $J'$ into  $O(1)$ sets and compute a height $h_{\js}$ for each job $\js\in J'$
such that each set yields a feasible round of \rsap.
\end{lemma}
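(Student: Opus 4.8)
This is, up to the preprocessing carried out in this section, a restatement of a structural lemma of~\cite{MomkeW15} --- the same one invoked in the introduction to bound the loss by the constant $80$ --- so the plan is to reproduce that argument. First I would split $J'$ into \emph{large} jobs $J_{\mathrm{L}}:=\{\js\in J': d_{\js}>\mathfrak{b}_{\js}/N\}$ and \emph{small} jobs $J_{\mathrm{S}}:=J'\setminus J_{\mathrm{L}}$ for a suitable constant $N$. The large jobs are handled directly by (the appropriate form of) \Cref{lemma:itspacking}: the round restricted to $J_{\mathrm{L}}$ is partitioned into $O(1)$ top-drawn sets, and a top-drawn set, with each job $\js$ placed at its maximal height $h_{\js}=\mathfrak{b}_{\js}-d_{\js}$, is by definition a non-overlapping family of rectangles lying below the capacity profile, i.e.\ a feasible \rsap\ round.

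The real work is with the small jobs. Group them into dyadic classes by bottleneck capacity, $C_{k}:=\{\js\in J_{\mathrm{S}}: 2^{k}\le\mathfrak{b}_{\js}<2^{k+1}\}$, assigning $\js$ to the class of its bottleneck edge. The first key claim is that in our feasible \rufp\ round, for every edge $e$ the total demand of $C_{k}$-jobs passing through $e$ is less than $2^{k+2}$: split those jobs into the ones whose bottleneck edge lies weakly left of $e$ and the ones whose bottleneck edge lies strictly right of $e$; in the first group all jobs also pass through the rightmost such bottleneck edge $e^{\ell}$, and since $c_{e^{\ell}}<2^{k+1}$ and the round is \rufp-feasible, the total demand of the first group is below $2^{k+1}$, and symmetrically for the second. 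Because small jobs satisfy $d_{\js}\le 2^{k+1}/N$, a left-to-right First-Fit by source vertex then partitions each $C_{k}$ into $O(1)$ sub-instances, each of load at most $\mu\, 2^{k}$ for a prescribed small constant $\mu$ (if some job fit into none of the $O(1)$ sub-instances, summing their loads at the leftmost edge of its path would contradict the $2^{k+2}$ bound). To each sub-instance I apply \Cref{thm:gergov'sdsaresult} to obtain a DSA packing of makespan at most $3\mu\, 2^{k}$.

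It remains to assemble the small jobs into $O(1)$ \rsap\ rounds. The point is that every $C_{k}$-job has capacity at least $\mathfrak{b}_{\js}\ge 2^{k}$ along its whole span, so any packing of $C_{k}$-jobs of makespan below $2^{k}$ is automatically feasible under the profile. I would partition the classes by the parity of $k$; within one parity, consecutive classes differ by a factor at least $4$ in scale, so stacking, inside a single round, the $i$-th DSA sub-packing of each even class $C_{k}$ on top of those of all smaller even classes places the slab of $C_{k}$ in the height window $\big[\,\sum_{k'<k,\ k'\text{ even}}3\mu\, 2^{k'},\ \sum_{k'\le k,\ k'\text{ even}}3\mu\, 2^{k'}\,\big)\subseteq[0,2^{k})$ once $\mu$ is a small enough constant; the windows of distinct even classes are disjoint, so the round is feasible, and likewise for the odd classes. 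Ranging over the $O(1)$ sub-instances per class and the two parities uses $O(1)$ rounds for $J_{\mathrm{S}}$; together with the $O(1)$ rounds for $J_{\mathrm{L}}$ this proves the lemma, and every step is polynomial. I expect the assembling step to be the main obstacle: naive global stacking fails because high-index classes need a large amount of vertical room whereas low-index classes have tiny bottleneck capacities, and the resolution is exactly to combine the per-class load bound (hence the geometrically decaying makespans) with the parity split, so that the cumulative height underneath any $C_{k}$-job never reaches its bottleneck.
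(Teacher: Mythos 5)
The paper does not prove this lemma; it is imported from the full version of \cite{MomkeW15} and used as a black box (the introduction quotes the concrete constant $80$). There is thus no internal proof to compare your attempt against, so I will assess your reconstruction on its own terms. It is sound. The per-edge bound $<2^{k+2}$ for each dyadic class $C_{k}$ is correct: every $C_{k}$-job through $e$ with bottleneck edge weakly left of $e$ also traverses the rightmost such bottleneck edge $e^{\ell}$, whose capacity is $<2^{k+1}$, so UFP-feasibility of the round caps that group's demand at $2^{k+1}$, and symmetrically on the right. The source-ordered First-Fit into sub-instances of per-edge load $\le\mu 2^{k}$ then yields $O(1)$ sub-instances precisely because every already-placed job that hits any edge of the new job's path also hits its leftmost edge, so the per-edge cap can be charged there against the $2^{k+2}$ bound. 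Gergov's bound (\Cref{thm:gergov'sdsaresult}) gives makespan $\le 3\mu 2^{k}$ per sub-instance, and the stacking keeps every $C_{k}$-slab below height $2^{k}\le\mathfrak{b}_{\js}$ once $\mu$ is small enough; note the parity split is a convenience rather than a necessity, since without it the cumulative height is $\sum_{k'\le k}3\mu 2^{k'}<6\mu 2^{k}$, which is still $<2^{k}$ for $\mu<1/6$. Handling the large jobs requires the generalization of \Cref{lemma:itspacking} to threshold $\mathfrak{b}_{\js}/N$ with $O(N)$ top-drawn sets, which you correctly flag; this is standard. The explicit constant your parameters give is worse than $80$, but the lemma asserts only $O(1)$, so that is immaterial.
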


This yields a solution to \rsap~with only $O(OPT_{UFP}\log\log m) \le O(OPT_{SAP}\log\log m)$
many rounds.

\begin{theorem} There exists a randomized $O(\log\log \min\{n,m\})$-approximation
algorithm for \rsap\ for general edge capacities.\end{theorem}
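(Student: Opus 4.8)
The plan is to bootstrap from the \rufp algorithm of the previous theorem together with the structural result of Lemma~\ref{lem:UFP2SAP}. First I would observe that any feasible \rsap\ packing is in particular a feasible \rufp\ packing (the non-overlapping rectangle condition implies the per-edge capacity condition), so $\opt_{UFP}\le\opt_{SAP}$ on the given instance. Hence it suffices to produce a \rsap\ solution whose number of rounds is $O(\opt_{UFP}\cdot\log\log\min\{n,m\})$.

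Next I would run the randomized $O(\log\log\min\{n,m\})$-approximation for \rufp\ on the instance, viewing it purely as a \rufp\ instance (ignoring the rectangle requirement). With high probability this returns a valid \rufp\ packing using $t=O(\opt_{UFP}\cdot\log\log\min\{n,m\})$ rounds $J_1,\dots,J_t$, each of which obeys all edge capacities.

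Then I would apply Lemma~\ref{lem:UFP2SAP} independently to each round $J_i$: in polynomial time it partitions $J_i$ into $O(1)$ subsets and assigns a height $h_{\js}$ to every job so that each subset, with these heights, forms a valid \rsap\ round (rectangles non-overlapping and below the capacity profile). Collecting all the resulting subsets over all $i$ gives a valid \rsap\ packing using at most $O(1)\cdot t = O(\opt_{UFP}\cdot\log\log\min\{n,m\}) = O(\opt_{SAP}\cdot\log\log\min\{n,m\})$ rounds. All steps run in (randomized) polynomial time, and since after preprocessing $m\le 2n-1$, the bound can equivalently be written in terms of $\min\{n,m\}$, which yields the claim.

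There is no real obstacle here beyond correctly tracking constants: the heavy lifting is entirely contained in the \rufp\ result (configuration LP, Theorem~\ref{thm:parinyathm}, the random splitting lemma) and in the black-box reduction of Lemma~\ref{lem:UFP2SAP} from \cite{MomkeW15}. The one point that needs a line of justification is the inequality $\opt_{UFP}\le\opt_{SAP}$, which is immediate, and the fact that applying Lemma~\ref{lem:UFP2SAP} round-by-round only multiplies the round count by an absolute constant.
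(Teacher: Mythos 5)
Your proposal matches the paper's proof essentially verbatim: take the output of the randomized $O(\log\log\min\{n,m\})$-approximation for \rufp, apply Lemma~\ref{lem:UFP2SAP} round-by-round to lose only an $O(1)$ factor, and conclude via $\opt_{UFP}\le\opt_{SAP}$. The argument and the bookkeeping are both correct and identical to what appears in the paper.
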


\subsection{An $O(\log \log \frac{1}{\delta})$-approximation algorithm with $(1+\delta)$-resource augmentation}

We show that if we are allowed a resource augmentation of a factor of $1+\delta$ for some $\delta>0$, we can get an $O(\log \log \frac{1}{\delta})$-approximation for both \rsap\ and \rufp. 
Consider \rufp\ first.

\begin{lemma}
\label{lem:Jpack}
Let $J^{(i)} := \{\js \in J  \mid \mathfrak{b}_{\js}\in [1/\delta^i,1/\delta^{i+1})\}$.  For packing jobs in $J^{(i)}$, it can be assumed that the capacity of each edge lies in the range $[1/\delta^i,2/\delta^{i+1})$.
\end{lemma}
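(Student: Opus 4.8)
The plan is to show that, for the purpose of packing $J^{(i)}$, we may replace each edge capacity $c_e$ by a value $\hat c_e\in[1/\delta^i,2/\delta^{i+1})$ such that (i) every feasible packing of $J^{(i)}$ under $(\hat c_e)_e$ is feasible under the original capacities $(c_e)_e$, and (ii) the maximum congestion of $J^{(i)}$ under $(\hat c_e)_e$ is still at most $\opt_{UFP}$, so that the downstream $O(\log\log(\cdot))$-guarantee (which is driven by this congestion lower bound) is unaffected. Concretely I would set $\hat c_e:=\lceil 1/\delta^i\rceil$ when $c_e<1/\delta^i$, $\hat c_e:=c_e$ when $c_e\in[1/\delta^i,2/\delta^{i+1})$, and $\hat c_e:=\lceil 2/\delta^{i+1}\rceil-1$ when $c_e\ge 2/\delta^{i+1}$.

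For the low-capacity edges the argument is immediate: if $c_e<1/\delta^i$ then no job $\js\in J^{(i)}$ uses $e$, since otherwise $\mathfrak b_\js\le c_e<1/\delta^i$, contradicting $\mathfrak b_\js\ge 1/\delta^i$. Hence the capacity constraint of such an edge is vacuous for $J^{(i)}$ and raising its capacity to (the integer nearest to) $1/\delta^i$ changes feasibility of no packing of $J^{(i)}$. After this step every edge has capacity at least $1/\delta^i$.

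The substance is in the high-capacity edges. Fix an edge $e$ with $c_e\ge 2/\delta^{i+1}$; the goal is to bound the total demand $l^{(i)}_e$ of jobs of $J^{(i)}$ crossing $e$. Every such job $\js$ has a bottleneck edge $e^*_\js\in P_\js$ with $c_{e^*_\js}=\mathfrak b_\js<1/\delta^{i+1}\le c_e$, so $e^*_\js\ne e$ and lies strictly to one side of $e$. Split the jobs of $J^{(i)}$ crossing $e$ into those whose bottleneck edge is to the left of $e$ and those whose bottleneck edge is to the right. For the right group let $f$ be the leftmost of their bottleneck edges; then $f$ lies between $e$ and $e^*_\js$ for every job $\js$ in the group, so every such job crosses $f$ as well, and therefore the total demand of the group is at most $l_f\le r_f\,c_f\le \opt_{UFP}\cdot c_f<\opt_{UFP}/\delta^{i+1}$ (using $r_f\le r\le\opt_{UFP}$ and $c_f<1/\delta^{i+1}$, since $f$ is a bottleneck edge of some job in $J^{(i)}$). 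The left group is bounded symmetrically, so $l^{(i)}_e<2\,\opt_{UFP}/\delta^{i+1}=\opt_{UFP}\cdot(2/\delta^{i+1})$, whence the congestion of $e$ after capping, $\lceil l^{(i)}_e/\hat c_e\rceil$, is at most $\opt_{UFP}$. Combined with the trivial facts that $\hat c_e\le c_e$ everywhere (so feasible packings of the capped instance remain feasible for the original), and that the congestion of untouched edges is unchanged, this establishes the claim; a final rescaling of all capacities by the factor $\delta^i$ then moves them into $[1,2/\delta)$, as needed for the subsequent invocation of the algorithm of Jahanjou et al.

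The main obstacle is precisely this charging step: a priori, capping a huge edge down to $\Theta(1/\delta^{i+1})$ could inflate the number of rounds by a factor of $\Theta(\delta^{i+1}c_e)$, and ruling that out hinges on the structural observation that the bottleneck edges of the many jobs crossing a huge edge are themselves low-capacity edges that are shared by large sub-families of those jobs. I also need to be slightly careful that the downstream guarantee is indeed stated against the congestion (or else argue directly that the optimal number of rounds for the capped $J^{(i)}$-instance is $O(\opt_{UFP})$), and to deal with integrality of capacities, but these are routine.
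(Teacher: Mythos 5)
Your structural observation is the right one — every job of $J^{(i)}$ that crosses a high-capacity edge $e$ must also cross a nearby edge of capacity at most $1/\delta^{i+1}$ (your $f$, the paper's $e_L$/$e_R$), and the low-capacity edges are vacuous for $J^{(i)}$.  But the aggregation you do with it is the wrong one, and the conclusion you reach does not prove the lemma.

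You bound the \emph{total} $J^{(i)}$-demand through $e$ by $2\opt_{UFP}/\delta^{i+1}$, hence the \emph{congestion} of the capped instance by $\opt_{UFP}$.  Congestion, however, is only a \emph{lower} bound on the optimal number of rounds, and the theorem of Jahanjou et al.\ that the paper invokes is an $O(\log\log(c_{\max}/c_{\min}))$-\emph{approximation}, i.e.\ it is benchmarked against $\opt'(J^{(i)},\mathrm{capped})$, not against congestion.  So to push through the final ``$\le O(\opt_{UFP}\log\log\frac1\delta)$ rounds'' one needs $\opt'(J^{(i)},\mathrm{capped})\le \opt_{UFP}$, and your argument does not give that.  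Your direction of feasibility, ``$\hat c_e\le c_e$, so feasible packings of the capped instance remain feasible for the original,'' is also exactly backwards for this purpose: decreasing capacities can only \emph{increase} $\opt'$.  You flag the issue yourself (``or else argue directly that the optimal number of rounds for the capped $J^{(i)}$-instance is $O(\opt_{UFP})$\dots these are routine''), but this is not a routine integrality/bookkeeping detail — it is the content of the lemma.

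The fix is to make your own charging per round, which is exactly what the paper does: fix any feasible round $\Rs$ of the original optimum and any edge $e$ with $c_e\ge 2/\delta^{i+1}$; every $J^{(i)}$-job of $\Rs$ through $e$ also crosses $e_L$ or $e_R$, the nearest edges to the left/right with capacity at most $1/\delta^{i+1}$, so the $J^{(i)}$-load of $\Rs$ on $e$ is at most $c_{e_L}+c_{e_R}\le 2/\delta^{i+1}$ (using only that $\Rs$ is feasible for the \emph{original} profile).  Hence the original optimal rounds, restricted to $J^{(i)}$, remain feasible for the capped profile, giving $\opt'(J^{(i)},\mathrm{capped})\le\opt_{UFP}$ directly.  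Your total-load bound is then a corollary (summing the per-round bound over $\opt_{UFP}$ rounds), not a replacement for it.
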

Hence using the following theorem, we get a $O(\log \log \frac{1}{\delta})$-approximate solutions for each $J^{(i)}$,
which in particular uses at most $O(\opt_{UFP}\log\log\frac{1}{\delta})$ rounds.

\begin{theorem} [\cite{JahanjouKR17}]
There is a polynomial time $O(\log \log \frac{c_{\max}}{c_{\min}})$-approximation algorithm for \rufp.
\end{theorem}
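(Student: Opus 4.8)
The plan is to reduce \rufp\ to a sequence of sub-instances whose capacity range roughly halves on the logarithmic scale at each step, so that after only $O(\log\log(c_{\max}/c_{\min}))$ steps every sub-instance has a constant capacity range, which is solved by a direct $O(1)$-approximation. First I would normalize so that $c_{\min}=1$, put $C:=c_{\max}$ and $K:=\lceil\log C\rceil$, and round every edge capacity down to the nearest power of two; this only doubles $\opt_{UFP}$ and the congestion $r$, and leaves at most $K+1$ distinct capacities. Then I would split $J=J_{\mathrm{small}}\,\dot{\cup}\,J_{\mathrm{large}}$ with $J_{\mathrm{small}}:=\{\js: d_{\js}\le \mathfrak{b}_{\js}/4\}$ and $J_{\mathrm{large}}:=\{\js: d_{\js}>\mathfrak{b}_{\js}/4\}$, pack $J_{\mathrm{small}}$ in $O(1)\cdot\opt_{UFP}$ rounds via Theorem~\ref{thm:Elbassioni-small}, and (using Lemma~\ref{lemma:itspacking}, at the cost of a factor $8$) restrict attention to solutions in which the large jobs of each round are top-drawn. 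It remains to pack $J_{\mathrm{large}}$ in $O(\log\log C)\cdot\opt_{UFP}$ rounds.

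The core is a recursion whose parameter $k$ upper-bounds $\log$ of the ratio of the largest to the smallest capacity present in the current sub-instance. Split its jobs at the (relative) midpoint capacity $2^{k/2}$. A \emph{high} job, with bottleneck capacity above $2^{k/2}$, has \emph{every} edge of its path above $2^{k/2}$, hence lies inside one maximal subpath of above-threshold edges; these subpaths are pairwise disjoint, each is a sub-instance of log-range at most $k/2$, so recursion packs all high jobs in $A(k/2)$ rounds (taking the maximum over these disjoint subpaths, which is a valid round count for the whole path). A \emph{low} job has bottleneck capacity at most $2^{k/2}$; among these, the \emph{self-contained} low jobs, all of whose edges have capacity at most $2^{k/2+3}$, live inside maximal subpaths of log-range at most $k/2+3$ and are handled recursively in $A(k/2+3)$ rounds, while the \emph{far-reaching} low jobs additionally cross an edge of capacity at least $2^{k/2+3}$ and hence use less than a $1/8$-fraction of the capacity on every such high bridge edge. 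The base case $k=O(1)$ is $O(1)$-approximable outright: every large job crossing an edge $e$ uses $\Omega(c_{\min})=\Omega(c_{e})$ of it, so $O(1)$ of them share $e$ per round and $O(r)$ in total, and interval colouring gives $O(r)=O(\opt_{UFP})$ rounds. With the far-reaching low jobs also absorbed into $O(1)$ rounds (see below), the recursion $A(k)\le\max\{A(k/2),\,A(k/2+3)\}+O(1)$ with $A(O(1))=O(1)$ solves to $A(K)=O(\log K)=O(\log\log(c_{\max}/c_{\min}))$; combined with the small jobs this gives the claimed ratio, and everything is polynomial since the recursion has depth $O(\log\log C)$ and at each level the sub-instances partition the jobs and occupy disjoint subpaths.

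The main obstacle is the $O(1)$ additive term in the recursion, namely packing the \emph{far-reaching} low jobs in a constant number of rounds. Such a job is anchored at a low-capacity bottleneck edge but may thread through bridge edges of arbitrarily large capacity, so its congestion on low-capacity edges need not be $O(r)$, and simply treating the high bridges as unconstrained can overload a bridge. The resolution must exploit that each far-reaching low job uses less than a $1/8$-fraction of every bridge it crosses: round each such job up to use exactly that fraction on each bridge, which turns every bridge into an edge admitting only $O(1)$ such jobs per round, and then argue — in the spirit of the no-bottleneck regime, by a flow/LP selection over top-drawn rectangles using Theorem~\ref{thm:dynamicprogramforufp} as a separation oracle and Theorem~\ref{thm:parinyathm} for the residual colouring — that $O(1)$ rounds suffice. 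Making this reduction lose only a constant factor while staying algorithmic is the delicate point; the remainder is bookkeeping on the recursion and combining it with the small-job subroutine.
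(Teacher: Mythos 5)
First, a remark on scope: the paper does not prove this statement at all — it is imported as a black box from Jahanjou et al.~\cite{JahanjouKR17} — so there is no in-paper proof to compare against; what you have written is an attempt to reprove an external result, and it has two genuine gaps.

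The central one is the recursion $A(k)\le\max\{A(k/2),A(k/2+3)\}+O(1)$. Taking the \emph{maximum} of the two recursive calls presupposes that one round produced by the high-job recursion and one round produced by the self-contained-low-job recursion can be merged into a single feasible round. But these sub-instances are not edge-disjoint: any edge $e$ with $c_e\in(2^{k/2},2^{k/2+3}]$ lies both in a maximal above-$2^{k/2}$ subpath (hosting high jobs) and in a maximal below-$2^{k/2+3}$ subpath (hosting self-contained low jobs), and each recursive solution may legitimately load $e$ up to its full capacity, so the merged round overloads $e$. This is exactly the obstruction that the present paper circumvents only by granting itself $(1+\delta)$-resource augmentation (Lemma~\ref{lem:combine}); without augmentation no such merge is available. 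If you instead keep the two sets of rounds separate, the recurrence becomes $A(k)\le A(k/2)+A(k/2+3)+O(1)$, which solves to $2^{\Theta(\log k)}=\Theta(k)=\Theta(\log\frac{c_{\max}}{c_{\min}})$ — you recover only the older logarithmic bound. Reserving a constant fraction of each overlap edge for each branch does not help either, since the reserved fraction decays geometrically over the $\Theta(\log\log\frac{c_{\max}}{c_{\min}})$ recursion levels.

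The second gap is the additive $O(1)$ term itself, i.e., packing the far-reaching low jobs in $O(1)\cdot\opt_{UFP}$ rounds, which you correctly flag as the delicate point but do not close. Knowing that each such job occupies less than a $1/8$ fraction of every bridge edge does not bound the number of rounds: the binding constraints sit at the low-capacity edges near these jobs' bottlenecks, where they are $1/4$-large, and one must schedule them in $O(r)$ rounds while simultaneously respecting every high bridge they thread through. The tools you invoke do not yield a constant: Theorem~\ref{thm:parinyathm} colours with $O(\omega\log\omega)$ colours, which is precisely the source of the $\log\log$ loss in the paper's general-case algorithm, and the configuration-LP rounding of Section~5 likewise leaves a residual instance needing $\omega(\cdot)\log\log m$ rounds, not $O(1)$. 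So the recursion's additive term is unproven, and even granting it, the recursion does not compose to $O(\log\log\frac{c_{\max}}{c_{\min}})$.
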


Next we argue that we can combine the rounds computed for the sets $J^{(i)}$.
More precisely, we show that if we take one round from each set $J^{(0)},J^{(2)},J^{(4)},...$
and form their union, then they form a feasible round for the given
instance under $(1+\delta)$-resource augmentation. The same holds
if we take one round from each set $J^{(1)},J^{(3)},J^{(5)}, \dots$.

\begin{lemma}
\label{lem:combine}Take one computed round for each set $J^{(2k)}$
with $k\in\mathbb{N}$ or one computed round from each set $J^{(2k+1)}$
with $k\in\mathbb{N}$, and let $J'$ be their union. Then $J'$ is
a feasible round for the given instance of \rufp\ under $(1+\delta)$-resource
augmentation.
\end{lemma}

Thus, due to Lemma~\ref{lem:combine} we obtain a solution
with at most $O(\opt_{UFP}\log\log\frac{1}{\delta})$ rounds for the overall
instance. As earlier, we  take the given \rufp~solution 
 and apply Lemma~\ref{lem:UFP2SAP} to it, which yields a solution
to \rsap\ with at most $O(\opt_{SAP}\log\log\frac{1}{\delta})$ rounds.

\begin{theorem}
\label{thm:loglogapproxwithresourceaugmentation}
There exists an $O(\log\log\frac{1}{\delta})$-approximation algorithm
for \rsap\ and \rufp\ for general edge capacities and $(1+\delta)$-resource
augmentation. 
\end{theorem}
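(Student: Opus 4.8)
The plan is to decompose the instance into geometric capacity classes, solve each class with the algorithm of~\cite{JahanjouKR17} (for which a bounded capacity spread is exactly what is needed), and then merge the per-class solutions, spending the resource augmentation entirely on the merging step. Concretely, for $i=0,1,2,\dots$ set $J^{(i)}:=\{\js\in J\mid \mathfrak{b}_{\js}\in[1/\delta^{i},1/\delta^{i+1})\}$; these sets partition $J$ by bottleneck capacity. For each $i$, Lemma~\ref{lem:Jpack} lets us assume that every edge capacity lies in $[1/\delta^{i},2/\delta^{i+1})$ for the subinstance induced by $J^{(i)}$, so the ratio $c_{\max}/c_{\min}$ there is $O(1/\delta)$. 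Running the $O(\log\log\frac{c_{\max}}{c_{\min}})$-approximation of~\cite{JahanjouKR17} on this subinstance gives a packing of $J^{(i)}$ into $R_{i}=O(\opt^{(i)}\log\log\frac{1}{\delta})$ rounds, where $\opt^{(i)}$ is the optimum for $J^{(i)}$; since $J^{(i)}\subseteq J$ we have $\opt^{(i)}\le\opt_{UFP}$, hence $R_{i}\le O(\opt_{UFP}\log\log\frac{1}{\delta})$ for every $i$.

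Next I would glue the per-class solutions together. Separate the classes by the parity of their index; among the even classes, build a combined solution by taking, for each $k$, at most one round of the $J^{(2k)}$-solution and overlaying all of these into one round, repeating as long as some $J^{(2k)}$-solution still has an unused round; do the analogous thing for the odd classes. By Lemma~\ref{lem:combine}, every round obtained this way is feasible for the original instance under $(1+\delta)$-resource augmentation — intuitively because consecutive classes within a parity group have assumed capacities differing by a factor $\delta^{-2}$, so on any edge the contributions of the classes lying below the top class that reaches that edge form a geometric series summing to only an $O(\delta)$-fraction of the capacity, which the augmentation absorbs. The number of combined even rounds is $\max_{k}R_{2k}$ and the number of combined odd rounds is $\max_{k}R_{2k+1}$, so the total is at most $\max_{k}R_{2k}+\max_{k}R_{2k+1}=O(\opt_{UFP}\log\log\frac{1}{\delta})$; note that the (possibly large) number of classes is irrelevant here, only the per-class maxima matter. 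This proves the statement for \rufp.

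For \rsap, I would apply Lemma~\ref{lem:UFP2SAP} to each of the $O(\opt_{UFP}\log\log\frac{1}{\delta})$ combined \rufp\ rounds, replacing each by $O(1)$ feasible \rsap\ rounds with explicit job heights; the resource augmentation is unaffected by this step, and since $\opt_{UFP}\le\opt_{SAP}$ we obtain an \rsap\ solution using $O(\opt_{SAP}\log\log\frac{1}{\delta})$ rounds. The main obstacle is Lemma~\ref{lem:combine}: establishing that overlaying one round per class within a parity group stays within $(1+\delta)c_{e}$ on every edge $e$. This is where both the $\delta^{-2}$ separation produced by the parity grouping and the capacity control of Lemma~\ref{lem:Jpack} are essential — without the parity gap the low-order contributions would not be a summable geometric series, and without the capacity bound the top contributing class would not fit within the (augmented) capacity. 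Everything else is routine bookkeeping.
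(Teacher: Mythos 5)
Your proposal is correct and follows essentially the same route as the paper: decompose by bottleneck-capacity classes $J^{(i)}$, reduce each class to bounded capacity spread via Lemma~\ref{lem:Jpack}, apply the $O(\log\log(c_{\max}/c_{\min}))$-approximation of~\cite{JahanjouKR17} per class, merge even- and odd-indexed classes separately using Lemma~\ref{lem:combine} under resource augmentation, and finish for \rsap\ via Lemma~\ref{lem:UFP2SAP}. The parity-gap intuition you give for why Lemma~\ref{lem:combine} holds matches the paper's proof of that lemma.
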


\section{Algorithms for the no-bottleneck-assumption}
\label{sec:roundsapandufpwithnba}

In this section, we present a $(16+\eps)$-approximation algorithm
for \rsap\  and a $12$-approximation
 for \rufp, both under the no-bottleneck-assumption (NBA).  

\subsection{Algorithm for \rsap}

For our algorithm for \rsap\ under NBA, we first
scale down all job demands and edge capacities so that $c_{\min}:=\min_{e\in E}c_{e}=1$.
Since the NBA holds, this implies that $d_{\js}\le1$ for each job
$\js\in J$. Then, we scale down all edge capacities to the nearest
power of 2, i.e., for each edge $e\in E$ we define a new rounded
capacity $c'_{e}:=2^{\left\lfloor \log c_{e}\right\rfloor }$. We
introduce horizontal lines whose $y$-coordinates are integral powers
of 2 (in contrast to the lines with uniform spacing that we used in Section~\ref{sec:Uniform_capacity}). Formally,
we define a set of lines $\mathcal{L}:=\{\ell_{2^k}|k\in\mathbb{N}\}$.
Let $OPT'_{SAP}$ denote the optimal solution for the rounded down capacities
$(c'_{e})_{e\in E}$ under the additional constraint that there must be no job
whose rectangle intersects a line in $\mathcal{L}$. 


We now show that given a valid \rsap\ packing $\Ps$ of a set of jobs $J'$ for the
edge capacities $(c_{e})_{e\in E}$, there exists a valid packing
of $J'$ into 4 rounds $\Rs^{1},\Rs^{2},\Rs^{3},\Rs^{4},$ under profile
$(c'_{e})_{e\in E}$ such that no job is sliced by a line in $\mathcal{L}$ (see \Cref{closedcor2_newnew} in \Cref{sec:figforrsap}). Let $J^{(i)}$ be the set of jobs with bottleneck capacity in  $[2^i,2^{i+1})$. For $i\ge 1$, we pack $J^{(i)}$ as follows:
\begin{enumerate}
\item Place each job $\js$ lying completely below $\ell_{2^{i}}$ into
$\Rs^{1}$ at height $h'_{\js}:=h_{\js}$.
(Note that here $h_{\js}$ denotes the height of the bottom edge
of $\js$ in $\Ps$.) 
\item Place each job $\js$ lying completely between $\ell_{2^{i}}$ and
$\ell_{3\cdot2^{i-1}}$ into $\Rs^{2}$ with $h'_{\js}:=h_{\js}-2^{i-1}$. 
\item Place each job $\js$ lying completely between $\ell_{3\cdot2^{i-1}}$
and $\ell_{2^{i+1}}$ into $\Rs^{3}$ with $h'_{\js}:=h_{\js}-2^{i}$. 
\item Place each job $\js$ sliced by $\ell_{2^{i}}$ into $\Rs^{4}$ with
$h'_{\js}:=2^{i}-d_{\js}$; and place each job $\js$ sliced by $\ell_{3\cdot2^{i-1}}$ into $\Rs^{4}$
with $h'_{\js}:=3\cdot2^{i-2}-d_{\js}$ for $i\ge2$ and $h'_{\js}:=1-d_{\js}$
for $i=1$.
\end{enumerate}
\noindent Finally, we pack $J^{(0)}$ as follows:
\begin{enumerate}
\item Place each job $\js$ lying completely below $\ell_{1}$ into $\Rs^{1}$
with $h'_{\js}:=h_{\js}$. 
\item Place each job $\js$ lying completely between $\ell_{1}$ and $\ell_{2}$
into $\Rs^{2}$ with $h'_{\js}:=h_{\js}-1$. 
\item Place each job $\js$ sliced by $\ell_{1}$ into $\Rs^{3}$ with $h'_{\js}:=1-d_{\js}$. 
\end{enumerate}
It can be checked easily that the above algorithm yields a feasible packing of $J'$ in which no job is sliced by a line in $\mathcal{L}$. Thus by losing at most a factor of 4 in our approximation ratio, we can convert any valid \rsap\ packing to a packing under the profile $(c'_{e})_{e\in E}$ that satisfies the latter constraint as well.

\begin{lemma} \label{lemma:factor4} We have that $OPT'_{SAP}\le4\cdot OPT_{SAP}$
\end{lemma}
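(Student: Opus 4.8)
The plan is to establish Lemma~\ref{lemma:factor4} by exhibiting, for any valid \rsap\ packing $\Ps$ of a job set $J'$ under the original capacities $(c_e)_{e\in E}$, a packing of $J'$ using exactly four rounds $\Rs^1,\Rs^2,\Rs^3,\Rs^4$ under the rounded-down capacities $(c'_e)_{e\in E}$, with the extra property that no job's rectangle is sliced by a line $\ell_{2^k}\in\mathcal L$. Once such a conversion is shown to always be possible, the lemma follows immediately: take an optimal \rsap\ solution realizing $OPT_{SAP}$, apply the conversion to each of its rounds, and the resulting packing of all of $J$ uses at most $4\cdot OPT_{SAP}$ rounds while respecting the rounded capacities $(c'_e)_{e\in E}$ and avoiding all lines in $\mathcal L$; hence $OPT'_{SAP}\le 4\cdot OPT_{SAP}$.

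**First I would** verify that the four-round construction given in the text (the per-$J^{(i)}$ placement rules, plus the separate handling of $J^{(0)}$) indeed produces a feasible packing. The key observations are: (a) the new capacity $c'_e=2^{\lfloor \log c_e\rfloor}$ satisfies $c'_e > c_e/2$, so rounding loses less than a factor of $2$ on each edge; (b) for a job $\js\in J^{(i)}$ with $i\ge 1$, its bottleneck capacity $\mathfrak b_\js\in[2^i,2^{i+1})$ rounds down to exactly $2^i$, so along $P_\js$ every rounded capacity is at least $2^i$, and in $\Ps$ the rectangle $R_\js$ lies in the horizontal strip $[0,2^{i+1})$ (since $h_\js+d_\js\le \mathfrak b_\js < 2^{i+1}$); (c) the three lines $\ell_{2^i}$, $\ell_{3\cdot 2^{i-1}}$, $\ell_{2^{i+1}}$ cut this strip into three sub-strips of height $2^{i-1}$, so each job of $J^{(i)}$ is either entirely inside one sub-strip or sliced by one of the two interior lines $\ell_{2^i}$ or $\ell_{3\cdot 2^{i-1}}$ — it cannot be sliced by both, because a job sliced by both would have $d_\js > 2^{i-1}\ge \mathfrak b_\js/4 \ge 1/4$... actually the cleaner point is that $d_\js \le 1 \le 2^{i-1}$ for $i\ge 1$, wait for $i=1$ we have $2^{i-1}=1$ so $d_\js\le 1$ means a job can't span two interior gaps of width $2^{i-1}$ each unless it exactly has demand... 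I would state it as: $d_\js\le c_{\min}=1\le 2^{i-1}$ for $i\ge 1$, hence a job cannot be sliced by two lines of $\mathcal L$ that are $2^{i-1}$ apart, so Step~4 is well-defined (each sliced job goes to exactly one of the two described positions). The shift amounts in Steps 2,3 ($-2^{i-1}$, $-2^i$) bring the respective sub-strips down to the band $[0,2^{i-1})$, which lies under $c'_e\ge 2^i$ for every $e\in P_\js$; the sliced jobs in $\Rs^4$ are placed flush under a line of $\mathcal L$ at height $2^i$ (or $3\cdot 2^{i-2}$, resp. $1$), which is $\le \mathfrak b_\js$ after rounding, so they fit. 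Non-overlap within each $\Rs^j$ is inherited from $\Ps$ because all jobs going into a given round are translated by the same vertical amount on any region where their paths could overlap — more precisely, two jobs landing in the same round came from the same horizontal sub-strip (for $\Rs^1,\Rs^2,\Rs^3$) or are packed with heights forced by their own demands just under a common line (for $\Rs^4$), and in the latter case two slicing jobs that overlap horizontally would together have demand $>2^{i-1}\ge 1$, contradicting that in $\Ps$ they both cross the line $\ell_{2^i}$ and hence their combined demand on the shared edge is at most $c'_e$... I would need to argue the $\Rs^4$ non-overlap carefully: jobs sliced by $\ell_{2^i}$ all get $h'_\js=2^i-d_\js$, so their tops all align at $2^i$; two such jobs with overlapping paths, in $\Ps$, each occupied an interval straddling height $2^i$ on a shared edge $e$, so their combined demand is at most $c_e$, but in the new packing pushing both to hang from $2^i$ keeps them within $[2^i-\max d,2^i)$ and they could overlap — the saving grace is that the jobs sliced by $\ell_{2^i}$ and those sliced by $\ell_{3\cdot 2^{i-1}}$ go into the \emph{same} round $\Rs^4$ but at heights near $2^i$ versus near $3\cdot 2^{i-2}=3\cdot 2^{i-1}/2$, which differ by $2^{i-2}$; since $d_\js\le 1$ and for $i\ge 2$ we have $2^{i-2}\ge 1$, the two families don't interfere. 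Within the single family "sliced by $\ell_{2^i}$", two horizontally-overlapping jobs cannot both exist because... hmm, this actually does need the NBA-type bound: in $\Ps$ their rectangles both contain height $2^i$ on the common edge, so $d_\js+d_{\js'}\le c_e$ but that doesn't prevent overlap after realignment. I think the correct claim is: \emph{at most one} job is sliced by $\ell_{2^i}$ over any given edge (since $2\cdot(\text{anything crossing } \ell_{2^i})$ would need each to have positive extent on both sides, and... no). I would resolve this by noting the intended argument: a job sliced by $\ell_{2^i}$ has $h_\js<2^i<h_\js+d_\js$, and in $\Ps$ on any edge $e\in P_\js$ the jobs sliced by $\ell_{2^i}$ stack consecutively through that height, so there is at most a total demand of $c_e$ there — but for the packing into $\Rs^4$ to work we actually reassign \emph{all} of them to hang from $2^i$, which works iff on each edge there is at most one of them; and indeed if two jobs $\js,\js'$ are both sliced by $\ell_{2^i}$ and share edge $e$, then on $e$ one lies strictly above the other (say $h_\js<h_{\js'}$), but both contain $2^i$ strictly in their interior, forcing $h_{\js'}<2^i<h_\js+d_\js$, i.e. they overlap on $e$ in $\Ps$ — contradiction. **This "at most one sliced job per edge per line" observation is the linchpin** and I expect it to be the main obstacle to write cleanly, since it is exactly what makes Step~4 legitimate and it must be combined with the $d_\js\le 1$ bound to separate the two interior lines' worth of sliced jobs inside the single round $\Rs^4$.

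**Then I would** handle the base case $J^{(0)}$ (bottleneck in $[1,2)$, rounding to $1$) analogously but with only two sub-strips $[0,1)$, $[1,2)$ and one interior line $\ell_1$, using rounds $\Rs^1,\Rs^2,\Rs^3$ (so even three rounds suffice here, but four is a uniform bound). **Finally I would** collect everything: each original round of an optimal \rsap\ solution is replaced by four rounds valid for $(c'_e)_{e\in E}$ and $\mathcal L$-avoiding, these new rounds are mutually capacity-feasible (they only contain translates of jobs from one original feasible round), so stacking the conversions of all $OPT_{SAP}$ original rounds gives a global solution with $\le 4\cdot OPT_{SAP}$ rounds that is feasible for the constrained problem; therefore $OPT'_{SAP}\le 4\cdot OPT_{SAP}$, as claimed. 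I would close by noting that the whole conversion is constructive and runs in polynomial time, which is what is needed for the downstream $(16+\eps)$-approximation.
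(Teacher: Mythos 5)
Your proposal reconstructs essentially the same four-round conversion that the paper gives (Appendix B.9), and your key observation — that at most one job per edge can be sliced by a fixed line $\ell_{2^i}$, since two such jobs sharing an edge would overlap in $\Ps$ — is exactly what makes the placement in $\Rs^4$ legitimate, combined with the NBA bound $d_{\js}\le 1$ to separate the two sliced families and the bands for distinct $i$. Despite some false starts and a minor slip about the sub-strip heights (the strip $[0,2^{i+1})$ is cut into pieces of heights $2^i,2^{i-1},2^{i-1}$, not three of height $2^{i-1}$), the argument you land on matches the paper's and is correct.
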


We shall now obtain a valid packing of $J$ for the edge capacities
$(c'_{e})_{e\in E}$. Let $c'_{\max}:=\max_{e\in E}c'_{e}$ and for
each $i\in\{0,1,\ldots,\log c'_{\max}\}$ let $J^{(i)}\subseteq J$
denote the set of jobs with bottleneck capacity $2^{i}$ according
to $(c'_{e})_{e\in E}$. For each set $J^{(i)}$ we create a new (artificial)
instance with uniform edge capacities: in the instance for $J^{(0)}$
all edges have capacity 1, and for each $i\in\{1,2,\ldots,\log c'_{\max}\}$
in the instance for $J^{(i)}$ all edges have capacity $2^{i-1}$. For
each $i\in\{0,1,2,\ldots,\log c'_{\max}\}$ denote by $OPT^{(i)}$ the
number of rounds needed in the optimal solution to the instance for
$J^{(i)}$. Since in the solution $OPT'_{SAP}$ no rectangle is intersected
by a line in $\mathcal{L}$, for each set $J^{(i)}$ we can easily rearrange
the jobs in $J^{(i)}$ in $OPT'_{SAP}$ such that we obtain a solution
for $J^{(i)}$ with at most $2\cdot OPT'_{SAP}$ rounds.

\begin{lemma} \label{lemma:factor2loss} For each $i\in\{0,1,2,\ldots,\log c'_{\max}\}$
it holds that $OPT^{(i)}\le2\cdot OPT'_{SAP}$. \end{lemma}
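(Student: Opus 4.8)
The plan is to take the solution realizing $OPT'_{SAP}$ and, for each round $R$ of that solution, extract from $R$ the jobs belonging to a fixed class $J^{(i)}$ and show that they can be repacked into at most two rounds of the artificial uniform-capacity instance for $J^{(i)}$. Summing over all rounds of $OPT'_{SAP}$ then gives $OPT^{(i)}\le 2\cdot OPT'_{SAP}$. The key structural fact we exploit is that, by definition of $OPT'_{SAP}$, no rectangle in that packing is sliced by any line $\ell_{2^k}\in\mathcal{L}$; in particular, every job $\js\in J^{(i)}$ has bottleneck capacity exactly $2^{i}$ under $(c'_e)_{e\in E}$, so in the optimal packing the rectangle $R_{\js}$ lies strictly between two consecutive lines of $\mathcal{L}$, and since it is not sliced and lies below the capacity profile, it must lie entirely in one of the horizontal strips $(2^{i-1},2^{i})$ or (for $i\ge 1$) within the strip immediately below $2^{i}$. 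Concretely, since $\mathfrak{b}_{\js}=2^{i}$ and $R_{\js}$ is unsliced, $R_{\js}$ lies either in the strip $[2^{i-1},2^{i}]$ or in the strip $[0,2^{i-1}]$ (for $i\ge 1$; for $i=0$ it lies in $[0,1]$).

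First I would fix a round $R$ of the $OPT'_{SAP}$-solution and let $J^{(i)}_R := \{\js\in J^{(i)} : \js\in R\}$. Partition $J^{(i)}_R$ into $J^{\mathrm{hi}}_R$, the jobs whose rectangle in $R$ lies in the strip $[2^{i-1},2^{i}]$, and $J^{\mathrm{lo}}_R$, the jobs whose rectangle lies below $\ell_{2^{i-1}}$. For $J^{\mathrm{hi}}_R$: translate every such rectangle down by $2^{i-1}$; the result is a non-overlapping packing inside the strip $[0,2^{i-1}]$, which is exactly a feasible round of the artificial instance for $J^{(i)}$ (whose uniform capacity is $2^{i-1}$, and $\mathfrak b_\js = 2^i \ge 2^{i-1}$ so the NBA-scaled demands $d_\js\le 1\le 2^{i-1}$ fit). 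For $J^{\mathrm{lo}}_R$: these rectangles already lie in $[0,2^{i-1}]$ and are pairwise non-overlapping (being a subset of the feasible round $R$), so they directly form a feasible round of the artificial instance. This gives at most $2$ rounds of the artificial instance per round $R$. For the case $i=0$, the artificial capacity is $1$ and every job in $J^{(0)}$ has $R_\js$ lying in $[0,1]$ already, so a single round per $R$ suffices (in particular at most $2$).

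Carrying this out for all $OPT'_{SAP}$ rounds and concatenating produces a feasible solution for the artificial instance using at most $2\cdot OPT'_{SAP}$ rounds, whence $OPT^{(i)}\le 2\cdot OPT'_{SAP}$. The main obstacle I anticipate is making precise the claim that an unsliced rectangle of a job with $\mathfrak b_\js = 2^i$ must lie entirely in one of the two strips $[0,2^{i-1}]$ or $[2^{i-1},2^i]$: one must argue that it cannot straddle $\ell_{2^{i-1}}$ (impossible, since it is unsliced) and cannot extend above $\ell_{2^i}$ (impossible, since $\mathfrak b_\js = 2^i$ forces $h_\js + d_\js \le 2^i$), and also confirm that no job of $J^{(i)}$ has any portion above $\ell_{2^i}$ or that such a portion can likewise be handled. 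Everything else is a routine translation-and-disjointness check, relying only on the non-slicing property built into the definition of $OPT'_{SAP}$ and on the NBA-normalization $d_\js\le 1$.
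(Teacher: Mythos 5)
Your proof is correct and follows essentially the same route as the paper's: observe that every job in $J^{(i)}$ is packed by $OPT'_{SAP}$ entirely within $[0,2^i]$ (since $\mathfrak{b}_{\js}=2^i$), use that $\ell_{2^{i-1}}\in\mathcal{L}$ to split each round into the jobs above and below that line, and translate the upper part down by $2^{i-1}$ to obtain two rounds of the artificial uniform-capacity-$2^{i-1}$ instance per round of $OPT'_{SAP}$. The concern you flag at the end (that a rectangle cannot straddle $\ell_{2^{i-1}}$ or extend above $\ell_{2^i}$) is exactly resolved by the non-slicing property built into $OPT'_{SAP}$ and the bottleneck bound, as you note, so no gap remains.
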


For each set $J^{(i)}$ we invoke our asymptotic $(2+\eps)$-approximation
algorithm for \rsap\ for uniform edge capacities (see Section~\ref{sec:Uniform_capacity})
and obtain a solution $\Gamma^{(i)}$ which hence uses $\Gamma^{(i)}\le (2+\eps)\cdot OPT^{(i)}+O(1)\le (2+\eps)\cdot OPT'_{SAP}+O(1)$
rounds. Finally, we combine the solutions $\Gamma^{(i)}$ for all $i\in\{0,1,2,\ldots,\log c'_{\max}\}$
to one global solution of $J$. The key insight for this is that if
some edge $e$ has a (rounded) capacity of $c'_{e}=2^{k}$, then in
one round we can place the solution of one round of each of the solutions
$\Gamma^{(0)},\Gamma^{(1)},...,\Gamma^{(k-1)}$. 
\begin{lemma}
\label{lemma:mainpackinglemma}Given a solution $\Gamma^{(i)}$ for
the set $J^{(i)}$ for each $i\in\{0,1,2,\ldots,\log c'_{\max}\}$, in polynomial time we can construct a solution for $J$ and the edge
capacities $(c'_{e})_{e\in E}$ with at most $\max^{(i)}\{\Gamma^{(i)}\}$
many rounds.
\end{lemma}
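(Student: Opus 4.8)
The plan is to stack the given solutions on top of one another \emph{within each round}: if $\Gamma^{(i)}=(R^{(i)}_{1},\dots,R^{(i)}_{|\Gamma^{(i)}|})$, then the $r$-th output round will consist of the $r$-th round of every $\Gamma^{(i)}$, each shifted up into its own horizontal band, where the bands are chosen so that they telescope to exactly the rounded capacity of every edge. This is exactly where the dyadic choice of the artificial capacities (capacity $1$ for the instance of $J^{(0)}$ and $2^{i-1}$ for that of $J^{(i)}$, $i\ge1$) pays off: for $i=0$ I would use the band $[0,1)$ and for $i\ge1$ the band $[2^{i-1},2^{i})$. These bands are pairwise disjoint, the band for $J^{(i)}$ has height exactly equal to the artificial capacity of the instance of $J^{(i)}$, and the top of the band for $J^{(i)}$ equals $2^{i}$ in all cases.

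Concretely, I would first set $N:=\max_{i}|\Gamma^{(i)}|$ and pad every $\Gamma^{(i)}$ with empty rounds so that all of them have exactly $N$ rounds. Then, for each $r\in\{1,\dots,N\}$, I would build the combined round $\hat R_{r}$ by taking, for every $i\in\{0,1,\dots,\log c'_{\max}\}$, the packing of the $r$-th round of $\Gamma^{(i)}$ and shifting every job of it up by the lower endpoint of the band of $J^{(i)}$, i.e.\ setting $h'_{\js}:=h_{\js}$ for $\js$ in a round of $\Gamma^{(0)}$ and $h'_{\js}:=h_{\js}+2^{i-1}$ for $\js$ in a round of $\Gamma^{(i)}$ with $i\ge1$. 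The algorithm outputs the $N$ rounds $\hat R_{1},\dots,\hat R_{N}$; since there are only $O(\log c'_{\max})$ indices $i$ and each step is a vertical translation, this runs in polynomial time.

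The remaining work is to verify that each $\hat R_{r}$ is a valid \rsap\ round under the capacities $(c'_{e})_{e\in E}$, and this is the part that needs care. For the non-overlap property: within a fixed $i$, the $r$-th round of $\Gamma^{(i)}$ is already a non-overlapping packing of makespan at most its artificial capacity, so after the common vertical shift all its jobs stay inside the band of $J^{(i)}$; since the bands for distinct $i$ are disjoint, jobs coming from different indices cannot overlap either, and jobs from the same round of $\Gamma^{(i)}$ are untouched relative to each other. For feasibility under the profile: a job $\js\in J^{(i)}$ passes only through edges $e$ with $c'_{e}\ge\mathfrak{b}_{\js}=2^{i}$, and after the shift the top of $\js$ is at most $2^{i}$ (the top of the band of $J^{(i)}$), hence at most $c'_{e}$ for every $e\in P_{\js}$. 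Thus $\hat R_{r}$ is feasible, and since the index attaining $\max_{i}|\Gamma^{(i)}|$ contributes a (non-empty) round to each of $\hat R_{1},\dots,\hat R_{N}$, the solution uses at most $\max_{i}\{\Gamma^{(i)}\}$ rounds. The only genuine obstacle is the bookkeeping in these two checks — making sure the bands are sized so that they telescope to $2^{i}$ and that no job of $J^{(i)}$ ever sees an edge of rounded capacity below $2^{i}$ — which is precisely what rounding capacities to powers of two buys us.
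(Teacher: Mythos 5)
Your proof is correct and takes essentially the same approach as the paper's: both stack one round from each $\Gamma^{(i)}$ into dyadic bands, placing the $\Gamma^{(i)}$-round between $\ell_{2^{i-1}}$ and $\ell_{2^{i}}$ (and the $\Gamma^{(0)}$-round in $[0,1)$), and both rely on the same key observation that a job in $J^{(i)}$ only traverses edges of rounded capacity at least $2^{i}$. Your write-up is in fact a bit more explicit than the paper's about the disjointness of the bands and the feasibility check.
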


Using the asymptotic $(2+\eps)$- (resp. absolute 3-) approximation for \rsap under uniform edge capacities (from Theorems \ref{theorem:twoasymptoticapprox} and \ref{thm:unisapabs}),  this yields the following theorem.

\begin{theorem} \label{theorem:16approxroundsap} For any $\eps>0$,
there exists an asymptotic $(16+\eps)$-approximation and an absolute 24-approximation algorithm for
\rsap under the NBA. \end{theorem}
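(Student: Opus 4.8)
The plan is to chain together the three structural lemmas (Lemmas~\ref{lemma:factor4}, \ref{lemma:factor2loss}, and \ref{lemma:mainpackinglemma}) with the uniform-capacity algorithms from Section~\ref{sec:Uniform_capacity}. First I would set up the instance: scale so that $c_{\min}=1$ (hence $d_{\js}\le 1$ for all jobs by the NBA), round each capacity down to $c'_e := 2^{\lfloor \log c_e\rfloor}$, and introduce the dyadic line set $\mathcal{L}=\{\ell_{2^k}\}$. Lemma~\ref{lemma:factor4} gives $OPT'_{SAP}\le 4\cdot OPT_{SAP}$, where $OPT'_{SAP}$ is the optimum for the rounded capacities with the extra constraint that no rectangle is sliced by a line in $\mathcal{L}$; this is the step that pays the factor $4$ and is justified by the explicit four-round repacking already described before the lemma.

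Next I would decompose $J$ into the classes $J^{(i)}$ of jobs whose bottleneck capacity equals $2^i$ under $(c'_e)_{e\in E}$, and for each $i$ build the artificial uniform-capacity instance (capacity $1$ for $i=0$, capacity $2^{i-1}$ for $i\ge 1$). Lemma~\ref{lemma:factor2loss} gives $OPT^{(i)}\le 2\cdot OPT'_{SAP}$, using that in the sliced-free solution $OPT'_{SAP}$ the jobs of $J^{(i)}$ live in the two dyadic strips just below their bottleneck level and can be pushed into at most two uniform-capacity rounds. Then I apply the asymptotic $(2+\eps)$-approximation for \rsap\ with uniform capacities (Theorem~\ref{theorem:twoasymptoticapprox}) to each $J^{(i)}$ to obtain a solution $\Gamma^{(i)}$ of size at most $(2+\eps)\cdot OPT^{(i)}+O(1)\le (4+\eps')\cdot OPT'_{SAP}+O(1)$. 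Finally, Lemma~\ref{lemma:mainpackinglemma} combines all the $\Gamma^{(i)}$ into one global solution for $(c'_e)_{e\in E}$ with at most $\max_i \{\Gamma^{(i)}\}$ rounds, since an edge of rounded capacity $2^k$ can simultaneously host one round from each of $\Gamma^{(0)},\ldots,\Gamma^{(k-1)}$ (their demands sum to at most $1+2^0+\cdots+2^{k-2}\le 2^{k}$). A feasible packing for $(c'_e)$ is of course feasible for $(c_e)$, so no further loss occurs when un-rounding. Multiplying the factors $4$ (Lemma~\ref{lemma:factor4}), $2$ (Lemma~\ref{lemma:factor2loss}), and $(2+\eps)$ (Theorem~\ref{theorem:twoasymptoticapprox}) yields the asymptotic $(16+\eps)$ bound; replacing the last factor with the absolute $3$ from Theorem~\ref{thm:unisapabs} yields the absolute $24$ bound.

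The main obstacle to watch is the bookkeeping of additive constants and the asymptotic-versus-absolute distinction in the final step: applying an asymptotic $(2+\eps)$-approximation to each of the $\Theta(\log c'_{\max})$ classes $J^{(i)}$ introduces an additive $O(1)$ per class, and after the $\max_i$ combination in Lemma~\ref{lemma:mainpackinglemma} this is only an additive $O(1)$ overall, which is harmless for an asymptotic ratio; but to get the clean absolute $24$ one must instead feed the absolute $3$-approximation (Theorem~\ref{thm:unisapabs}) into each class so that $\Gamma^{(i)}\le 3\cdot OPT^{(i)}\le 24\cdot OPT_{SAP}$ with no additive term, and then note $\max_i\{\Gamma^{(i)}\}\le 24\cdot OPT_{SAP}$. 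I would state both computations explicitly to make the two claimed ratios transparent. The rest is routine chaining of inequalities already supplied by the cited lemmas.
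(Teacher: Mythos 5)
Your proposal is correct and follows essentially the same route as the paper: scale to $c_{\min}=1$, round capacities down to powers of two, invoke Lemma~\ref{lemma:factor4} (factor $4$), Lemma~\ref{lemma:factor2loss} (factor $2$), apply the uniform-capacity algorithm per class and recombine via Lemma~\ref{lemma:mainpackinglemma}, giving $4\cdot 2\cdot(2+\eps)=16+O(\eps)$ asymptotically and $4\cdot 2\cdot 3=24$ absolutely. Your attention to the additive-constant issue and the substitution of Theorem~\ref{thm:unisapabs} in the absolute case is exactly what is needed.
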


\subsection{Algorithm for \rufp}

\label{sec:generalroundufp} In this section, we present a 12-approximation
for \rufp\ under NBA. In \rufp, it is not clear how to bootstrap
the algorithm for the uniform case as we did for \rsap, since in
the optimal solution it might not be possible to draw the jobs as non-overlapping
rectangles. Instead, our algorithm refines combinatorial properties
from \cite{ElbassioniGGKN12} to obtain an improved approximation
ratio. We will use the following algorithm due to~\cite{NomikosPZ01}
as a subroutine. 
\begin{theorem}[\cite{NomikosPZ01}]
\label{theorem:nomikostheorem}Given an instance of \rufp\ in which
all job demands are equal to 1 and all edge capacities are integral.
Then $OPT_{UFP}=r$ and in polynomial time we can compute a packing
into $OPT_{UFP}$ many rounds.
\end{theorem}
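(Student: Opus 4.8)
The plan is to prove the two halves of the statement separately: that $r$ rounds always suffice (the matching lower bound $\opt_{UFP}\ge r$ was already observed in Section~\ref{sec:prelim}), and that such a packing is computable in polynomial time. Throughout, for a set $J'$ of unit-demand jobs write $d_e(J'):=|\{\js\in J' : e\in P_{\js}\}|$ for the number of jobs of $J'$ passing through edge $e$; by the definition of $r$ we have $d_e(J)\le r\cdot c_e$ for every edge $e$ (since all demands are $1$, so $l_e=d_e(J)$).

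The core of the argument is a single peeling step: given a set $J'$ of unit-demand jobs and an integer $R\ge 1$ with $d_e(J')\le R\cdot c_e$ for all $e$, one can compute in polynomial time a subset $S\subseteq J'$ such that (i) $d_e(S)\le c_e$ for every $e$ (so $S$ is a valid single round) and (ii) $d_e(J'\setminus S)\le (R-1)\cdot c_e$ for every $e$. To see this, let $A$ be the job–edge incidence matrix, with a row for each edge, a column for each job $\js\in J'$, and a $1$ in position $(e,\js)$ iff $e\in P_{\js}$. Since every $P_{\js}$ is a subpath, each column of $A$ has its ones in consecutive positions, so $A$ is an interval matrix and hence totally unimodular. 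Consider the polytope $P=\{x\in[0,1]^{J'} : L_e\le (Ax)_e\le U_e \text{ for all } e\}$, where $U_e:=\min\{d_e(J'),\,c_e\}$ and $L_e:=\max\{0,\,d_e(J')-(R-1)c_e\}$. The fractional point $x_{\js}:=1/R$ lies in $P$: indeed $(Ax)_e=d_e(J')/R$, and one checks directly from $d_e(J')\le R c_e$ and $R\ge 1$ that $L_e\le d_e(J')/R\le U_e$. As $A$ is totally unimodular and all four bounds $0,1,L_e,U_e$ are integers, every vertex of the nonempty polytope $P$ is integral; taking any such vertex (e.g.\ via linear programming) gives an integral $x=\mathbb{1}_S$, and (i) follows from $(Ax)_e\le U_e\le c_e$ while (ii) follows from $(Ax)_e\ge L_e\ge d_e(J')-(R-1)c_e$.

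Given the peeling step, I iterate it $r$ times starting from $J'=J$, $R=r$: each iteration outputs one round $S$ and, since $d_e(J\setminus S)\le(R-1)c_e$, restores the invariant $d_e(\cdot)\le R\cdot c_e$ with $R$ decremented. The rounds produced are pairwise disjoint, and after $r$ iterations we reach $R=0$, so $d_e(\text{remaining})\le 0$ and every job has been placed; this is a valid \rufp\ packing into exactly $r$ rounds, hence $\opt_{UFP}\le r$. Combined with the lower bound we get $\opt_{UFP}=r$, and the whole procedure is polynomial (a bounded number of linear programs over a totally unimodular system, or equivalently a sequence of integral flow computations).

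I expect the only delicate point to be the bookkeeping in the peeling step, namely verifying that the upper bound $U_e$ and lower bound $L_e$ are simultaneously met by the uniform fractional assignment $x\equiv 1/R$ — equivalently, that pulling out one round drops the depth at \emph{every} edge by enough to restore the invariant; this short calculation relies crucially on the hypothesis $d_e(J')\le R c_e$ holding \emph{before} the step, so the iteration must be set up so this invariant is genuinely maintained. Everything else is routine interval-matrix / total-unimodularity machinery. (An alternative, which I believe is closer to the approach of \cite{NomikosPZ01}, is a direct combinatorial sweep over the jobs sorted by left endpoint; the flow-based argument above is less ad hoc and makes the polynomial running time immediate.)
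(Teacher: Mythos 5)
The paper offers no proof of this theorem --- it is quoted verbatim as an imported result of Nomikos, Pagourtzis and Zachos \cite{NomikosPZ01} and used as a black box --- so there is nothing of the paper's own to compare against. Your proof, however, is correct and self-contained. The invariant $d_e(J')\le R\cdot c_e$ is indeed the right thing to peel against, and the verification that the uniform point $x\equiv 1/R$ satisfies both the upper bounds $U_e=\min\{d_e(J'),c_e\}$ and the lower bounds $L_e=\max\{0,\,d_e(J')-(R-1)c_e\}$ is the only nontrivial calculation and you carry it out correctly (for $R\ge 2$ it reduces to $d_e(J')\le Rc_e$, and for $R=1$ the bounds collapse to $L_e=U_e=d_e(J')$, forcing $S=J'$, which is fine since $d_e(J')\le c_e$ then). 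The interval structure of the columns of $A$ gives total unimodularity, the box constraints are integral, so the nonempty polytope has an integral vertex, and iterating $r$ times exhausts $J$. Combined with the trivial lower bound $\opt_{UFP}\ge r$ from Section~\ref{sec:prelim}, this gives $\opt_{UFP}=r$ with a polynomial-time construction. Your parenthetical instinct is also right that the original reference uses a more direct combinatorial argument (a sweep/greedy coloring of unit-demand subpaths against integral capacities rather than repeated LP solves); the TU route you chose is less efficient in practice but arguably cleaner to verify and makes polynomiality immediate, and either suffices for the way the theorem is invoked in Section~\ref{sec:generalroundufp}.
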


\noindent Via scaling, we assume that $c_{\min}=1$ and the demand of each
job is at most 1. Let $J_{\mathrm{large}}:=\{\js\in J\mid d_{\js}>1/2\}$
and  $J_{\mathrm{small}}:=J\setminus J_{\mathrm{large}}$. For
each job $\js\in J_{\mathrm{small}}$ we round up its demand to the
next larger power of $1/2$, i.e., we define its rounded demand $d'_{\js}:=2^{\left\lceil \log d_{\js}\right\rceil }$.
For each $i\in\mathbb{N}$, let $J^{(i)}$ denote the set of jobs whose
demands after rounding equal $\frac{1}{2^{i}}$, i.e., $J^{(i)}=\{\js\in J|d'_{\js}=\frac{1}{2^{i}}\}$.
For each edge $e$ and each $i\in\mathbb{N}$, we count how many jobs
in $J^{(i)}$ use $e$ and we define $n_{e,i}:=|\{\js\in J^{(i)}\mid e\in P_{\js}\}|$.
We partition each set $J^{(i)}$ into the sets $J'^{(i)}=\{\js\in J^{(i)}\mid\exists e\in P_{\js}:n_{e,i}<2r\}$
and $J''^{(i)}=J\setminus J'^{(i)}$. Let $n'_{e,i}:=|\{\js\in J'^{(i)}\mid e\in P_{\js}\}|$
and $n''_{e,i}:=|\{\js\in J''^{(i)}\mid e\in P_{\js}\}|$. Clearly, $n'_{e,i}<4r$ for each edge $e$ and each $i$.

First, we compute a packing for $J_{\mathrm{small}}$. For the (small)
jobs in the sets $J'^{(i)}$, we use a packing method that ensures
that inside each round we have at most one job from each set $J'^{(i)}$. 
Since $n'_{e,i}<4r$ for each edge $e$ and each $i$, this needs at most $4r$ rounds.  
Moreover, by a geometric sum argument this yields a valid packing inside each round (the job demands sum up to at most $1=c_{\min}$). For the
jobs in $J''^{(i)}$, we partition the available capacity inside each
round among the sets $J''^{(i)}$ and then invoke the algorithm due to
Theorem~\ref{theorem:nomikostheorem} for each set $J''^{(i)}$ separately, which also needs at most $4r$ rounds, and thus at most $8r$ rounds in total.

\begin{lemma} For the jobs in $J_{\mathrm{small}}$ we can find a packing into at most $8r$ rounds. \end{lemma}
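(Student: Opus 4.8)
The plan is to pack $\bigcup_i J'^{(i)}$ into $4r$ rounds and $\bigcup_i J''^{(i)}$ into a further $4r$ rounds, giving $8r$ in total.

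First I would handle the light part $\bigcup_i J'^{(i)}$. Since $n'_{e,i}<4r$ for every edge $e$ and every level $i$, the intersection graph of the subpaths $\{P_{\js}:\js\in J'^{(i)}\}$ is an interval graph of clique number less than $4r$ (a clique forces a common point, hence a common edge carrying all these jobs, of which there are fewer than $4r$); being perfect, it admits a proper coloring with $4r$ colors, so fix a partition $J'^{(i)}=C^{(i)}_1\dot{\cup}\cdots\dot{\cup}C^{(i)}_{4r}$ in which each class consists of jobs with pairwise edge-disjoint subpaths. For $t\in\{1,\dots,4r\}$, let round $t$ be $\bigcup_i C^{(i)}_t$. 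On any fixed edge $e$ this round contains at most one job from each level $i\ge 1$ (two jobs of the same $C^{(i)}_t$ through $e$ would overlap), so the total rounded demand on $e$ is at most $\sum_{i\ge 1}2^{-i}=1\le c_{\min}\le c_e$; as the true demand of each job is at most its rounded demand, every such round is feasible. This packs $\bigcup_i J'^{(i)}$ into $4r$ rounds.

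Next I would handle the heavy part $\bigcup_i J''^{(i)}$ by reserving, for each level $i$ and each edge $e$, the capacity $c^{(i)}_e:=2^{-i}\lceil n''_{e,i}/(4r)\rceil$. Scaling the rounded demands of $J''^{(i)}$ up to $1$ turns this reserved profile into the integral capacities $\bigl(2^i c^{(i)}_e\bigr)_e=\bigl(\lceil n''_{e,i}/(4r)\rceil\bigr)_e$, and the congestion of the resulting unit-demand instance is $\max_e\lceil n''_{e,i}/(2^i c^{(i)}_e)\rceil\le 4r$; hence Theorem~\ref{theorem:nomikostheorem} packs $J''^{(i)}$ into at most $4r$ rounds, each using at most $c^{(i)}_e$ capacity on every edge $e$. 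I would pad each level to exactly $4r$ rounds and merge: round $t\in\{1,\dots,4r\}$ of the global solution is the union of the $t$-th rounds over all levels. The point to verify is $\sum_i c^{(i)}_e\le c_e$ for every $e$. If $n''_{e,i}>0$ then, by definition of $J''^{(i)}$, $n_{e,i}\ge 2r$; since every job of $J^{(i)}$ has true demand greater than $2^{-i-1}$, the jobs of $J^{(i)}$ through $e$ contribute more than $n_{e,i}2^{-i-1}$ to $l_e$. When $n''_{e,i}\le 4r$ we have $c^{(i)}_e=2^{-i}$ and $n_{e,i}2^{-i-1}\ge 2r\cdot 2^{-i-1}=r\,c^{(i)}_e$; when $n''_{e,i}>4r$ we have $c^{(i)}_e<2^{-i}n''_{e,i}/(2r)$ and $n_{e,i}2^{-i-1}\ge n''_{e,i}2^{-i-1}>r\,c^{(i)}_e$. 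In both cases the level-$i$ contribution to $l_e$ exceeds $r\,c^{(i)}_e$, and since the contributions of distinct levels come from disjoint job sets, $r\sum_i c^{(i)}_e<l_e\le r\,c_e$, i.e.\ $\sum_i c^{(i)}_e<c_e$. Thus every merged round is feasible (with the rounded, hence also the true, demands), and $\bigcup_i J''^{(i)}$ is packed into $4r$ rounds; together with the light part this gives a packing of $J_{\mathrm{small}}$ into $8r$ rounds.

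The step I expect to be the main obstacle is the capacity reservation for the heavy part: one must split each edge's budget $c_e$ among possibly many demand levels and, after rounding each reserved amount up to the grid $2^{-i}\mathbb{Z}$ forced by the reduction to unit demands in Theorem~\ref{theorem:nomikostheorem}, still keep the totals within $c_e$. This is precisely where the definition of $J''^{(i)}$ is used — the property that every edge on a job's path carries at least $2r$ jobs of $J^{(i)}$ is what makes each present level pay for its reserved slice against the load bound $l_e\le r\,c_e$. By contrast, the light part is routine interval coloring together with the geometric-series feasibility check.
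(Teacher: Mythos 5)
Your proof is correct and follows essentially the same route as the paper's: greedy interval-colouring of $\bigcup_i J'^{(i)}$ into $4r$ rounds with the geometric-series feasibility check, and a per-level, per-edge capacity reservation that reduces each $J''^{(i)}$ to a unit-demand integral-capacity instance handled by Theorem~\ref{theorem:nomikostheorem}. The only deviation is cosmetic: you reserve $2^{-i}\lceil n''_{e,i}/(4r)\rceil$, so feasibility needs the case split you give but the congestion bound is immediate, whereas the paper reserves $2^{-i}\lfloor n_{e,i}/(2r)\rfloor$, so feasibility is immediate from $\lfloor x\rfloor\le x$ while the congestion bound is where $n_{e,i}\ge 2r$ is used.
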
 
\begin{proof}
For packing jobs in $\bigcup_{i}J'^{(i)}$, we introduce a set
$\Ts$ of $4r$ rounds. For each $i$, we consider the jobs in $J'^{(i)}$
in non-decreasing order of their left endpoints and assign a job to
the first round in $\Ts$ where it does not overlap with any job from
$J'^{(i)}$ that have been assigned till now. Since $n'_{e,i}\le4r$,
such an assignment is always possible. Thus over any edge $e$ inside
any round of $\Ts$, at most 1 job from each set $J^{(i)}$ can be present.
Thus the load on edge $e$ is at most $\sum\limits_{i}\frac{1}{2^{i}}\le1\le c_{e}$, 
and hence this is a valid packing.

For packing the jobs in $\bigcup_{i}J''^{(i)}$, we introduce
a set $\Ss$ of $4r$ rounds. Consider a set $J''^{(i)}$. Inside each
of these rounds, to each edge $e$ we assign a capacity of $\frac{1}{2^{i}}\cdot\lfloor\frac{n_{e,i}}{2r}\rfloor$
to $J''^{(i)}$. The resulting congestion of any edge having non-zero
capacity is $\frac{\frac{1}{2^{i}}n''_{e,i}}{\frac{1}{2^{i}}\lfloor\frac{n_{e,i}}{2r}\rfloor}\le\frac{\frac{n_{e,i}}{2r}}{\lfloor\frac{n_{e,i}}{2r}\rfloor}\cdot2r\le4r$.
Also since each job in $J''^{(i)}$ has demand equal to $\frac{1}{2^{i}}$,
the assigned capacity of each edge is an integral multiple of the
demand. Thus using \Cref{theorem:nomikostheorem}, jobs in $J''^{(i)}$
can be packed into at most $4r$ rounds with these capacities. When
we do this procedure for each set $J''^{(i)}$, we obtain that the total
load on each edge $e$ inside any round is at most $\sum\limits_{i}\frac{1}{2^{i}}\lfloor\frac{n_{e,i}}{2r}\rfloor\le\sum\limits_{i}\frac{1}{2^{i+1}}\frac{n_{e,i}}{r}\le\frac{1}{r}\sum\limits _{\js\colon e\in P_{j}}d_{j}\le c_{e}$.
Thus this is a valid packing.

Therefore, we pack all jobs in $J_{\mathrm{small}}$ into $8r$ rounds. 
\end{proof}
For the jobs in $J_{\mathrm{large}}$ we round up their demands to
1 and the edge capacities to the respective nearest integer. This increases the congestion $r$
by at most a factor of 4. Then we invoke Theorem~\ref{theorem:nomikostheorem}.

\begin{lemma} 
\label{lem:ufp4r}
The jobs in $J_{\mathrm{large}}$ can be packed into
$4r$ rounds. \end{lemma}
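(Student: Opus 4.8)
The plan is to reduce the large jobs to an instance with unit demands and integral capacities, so that Theorem~\ref{theorem:nomikostheorem} applies directly. Recall that every job $\js \in J_{\mathrm{large}}$ has $1/2 < d_{\js} \le 1$, and the edge capacities satisfy $c_{\min} = 1$ after scaling. First I would round up the demand of each job in $J_{\mathrm{large}}$ to exactly $1$; since $d_{\js} > 1/2$, this at most doubles each demand. Next I would round each edge capacity $c_e$ down to $\lfloor c_e \rfloor$ (so that capacities become integral), which, because $c_e \ge 1$, loses at most a factor of $2$ (as $\lfloor c_e \rfloor \ge c_e/2$). A feasible \rufp{} packing of this rounded instance is also a feasible packing of the original large jobs, since we only increased demands and decreased capacities.

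The key step is to control how much the congestion parameter grows under these two rounding operations. For a fixed edge $e$, let $r_e = \lceil l_e / c_e \rceil$ be the original congestion, where $l_e = \sum_{\js \colon e \in P_{\js}} d_{\js}$ is the load. After rounding demands up to $1$, the new load on $e$ is the number $n_e$ of large jobs through $e$, and since each original demand was more than $1/2$ we get $n_e < 2 l_e$. After rounding the capacity down, the new capacity is $\lfloor c_e \rfloor \ge c_e / 2$. Hence the new congestion of $e$ is at most $\lceil n_e / \lfloor c_e \rfloor \rceil \le \lceil 2 l_e / (c_e/2) \rceil = \lceil 4 l_e/c_e \rceil \le 4 \lceil l_e/c_e\rceil = 4 r_e \le 4r$. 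Thus the maximum congestion of the rounded instance is at most $4r$.

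Now I would apply Theorem~\ref{theorem:nomikostheorem} to the rounded instance: all demands equal $1$ and all capacities are integral, so the optimum equals the congestion, which we just bounded by $4r$, and a packing into $4r$ rounds is computable in polynomial time. Pulling this packing back to the original large jobs (replacing the rounded demand $1$ by the true demand $d_{\js} \le 1$ and the rounded capacity $\lfloor c_e\rfloor$ by the true $c_e \ge \lfloor c_e\rfloor$) keeps it feasible, giving a packing of $J_{\mathrm{large}}$ into at most $4r$ rounds. The only mild subtlety is making sure both rounding steps are accounted for simultaneously in the congestion bound — doing them one after the other and tracking the factor of $2$ from each, as above, resolves this cleanly.
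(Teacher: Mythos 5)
Your proposal is correct and is essentially the paper's own proof, just spelled out in more detail: the paper likewise rounds large demands up to $1$, rounds capacities down to the nearest integer, observes the congestion grows by at most a factor of $4$, and applies Theorem~\ref{theorem:nomikostheorem}. The only thing the paper leaves implicit that you make explicit is the ceiling arithmetic $\lceil n_e/\lfloor c_e\rfloor\rceil \le 4\lceil l_e/c_e\rceil$, which is a fine thing to justify.
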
 
%
Overall, this yields a 12-approximation algorithm for \rufp\ under
the NBA.

\begin{theorem} \label{theorem:16approxroundufp-1}There is a polynomial
time 12-approximation algorithm for \rufp\ under the NBA. \end{theorem}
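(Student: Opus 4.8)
The plan is to assemble the three pieces established above—the packing of $J_{\mathrm{small}}$ into $8r$ rounds and the packing of $J_{\mathrm{large}}$ into $4r$ rounds—and to observe that the congestion lower bound $r \le \opt_{UFP}$ (noted in Section \ref{sec:prelim}) closes the argument. Concretely, after the scaling so that $c_{\min}=1$ and $d_{\js}\le 1$ for all $\js$, we split $J = J_{\mathrm{large}}\,\dot\cup\, J_{\mathrm{small}}$ as in the construction. The lemma preceding this theorem gives a valid \rufp\ packing of $J_{\mathrm{small}}$ using at most $8r$ rounds, and Lemma~\ref{lem:ufp4r} gives a valid packing of $J_{\mathrm{large}}$ using at most $4r$ rounds. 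Concatenating these two families of rounds (the small-job rounds and the large-job rounds are completely independent, since the two job sets are disjoint and each family is individually feasible) yields a feasible solution to the original instance with at most $8r + 4r = 12r$ rounds.

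It then remains only to relate $12r$ to $\opt_{UFP}$. Since $r = \max_e \lceil l_e / c_e\rceil$ and in any feasible packing every round can route a total demand of at most $c_e$ across edge $e$, we need at least $\lceil l_e/c_e\rceil$ rounds to handle the jobs through $e$; taking the maximum over $e$ gives $r \le \opt_{UFP}$. Hence our solution uses at most $12r \le 12\,\opt_{UFP}$ rounds, which is the claimed absolute $12$-approximation. All steps run in polynomial time: the scaling and the partition into the sets $J^{(i)}, J'^{(i)}, J''^{(i)}$ are trivial, the first-fit interval coloring in the small-job lemma is polynomial, and the invocations of Theorem~\ref{theorem:nomikostheorem} are polynomial by that theorem's statement.

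The only subtlety worth spelling out is that the congestion parameter $r$ used inside the two packing lemmas must be the \emph{same} quantity $r$ (the max congestion of the original, unrounded instance), so that the bounds $8r$ and $4r$ add to $12r$ and compare against the same $\opt_{UFP}$. In Lemma~\ref{lem:ufp4r} the demands of $J_{\mathrm{large}}$ and the edge capacities are rounded, which inflates the \emph{local} congestion by a factor at most $4$ (rounding demands up to $1$ from $>1/2$ costs a factor $<2$, and rounding capacities down to the nearest integer costs another factor $<2$), but the statement of that lemma already absorbs this into the bound $4r$ with $r$ the original congestion; likewise the small-job lemma is stated in terms of the original $r$. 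So no further work is needed here—this theorem is essentially a one-line corollary, and the main obstacle (such as it is) was already handled in the two preceding lemmas. I would therefore present the proof as: invoke the two lemmas, concatenate the rounds, and bound $12r \le 12\,\opt_{UFP}$.
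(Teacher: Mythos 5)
Your proposal is correct and matches the paper's own argument: the theorem really is a one-line corollary of the preceding small-job lemma ($8r$ rounds) and Lemma~\ref{lem:ufp4r} ($4r$ rounds), together with the lower bound $r \le \opt_{UFP}$ from Section~\ref{sec:prelim}. Your remark that the two packings can simply be concatenated (since the job sets are disjoint and each family of rounds is individually feasible) and that both lemmas are stated in terms of the same original congestion $r$ is exactly the bookkeeping the paper relies on.
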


\section{Algorithms for \rtree}
Extending the results for \rufp, using results on path coloring \cite{erlebach2001complexity}
and multicommodity demand flow \cite{ChekuriMS07}, we obtain the
following results for \rtree~(see App. \ref{sec:roundufpontree}).

\begin{theorem} \label{thm:rtreetheorem} For \rtree, there exists
a polynomial-time asymptotic (resp. absolute) 5.1- (resp. 5.5-) approximation algorithm for uniform edge capacities
and an asymptotic (resp. absolute) 49- (resp. 55-) approximation algorithm for the general case under the NBA. \end{theorem}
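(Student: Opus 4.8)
The plan is to transfer our \rufp\ algorithms from the path to the tree, replacing the two path‑specific subroutines by tree analogues. Wherever the path algorithm colours a family of pairwise edge‑overlapping constraints optimally by interval colouring (using exactly the clique number many colours), on a tree I would instead invoke \emph{path colouring}: by the results of Erlebach et al.~\cite{erlebach2001complexity}, any family of paths in a tree in which every edge carries at most $L$ of them can be coloured in polynomial time with $O(L)$ colours (with an explicit small constant, essentially $\tfrac{5}{3}$ up to rounding) so that edge‑sharing paths receive different colours. Wherever the path algorithm uses the exact unit‑demand packing of Theorem~\ref{theorem:nomikostheorem} or the small‑demand DSA shortcuts (Theorems~\ref{theorem:thorrup} and~\ref{thm:gergov'sdsaresult}), on a tree I would instead use the multicommodity‑demand‑flow machinery of Chekuri, Mydlarz and Shepherd~\cite{ChekuriMS07}: under the NBA it turns a fractionally feasible family of demands of congestion $\rho$ into $O(\rho)$ integral feasible rounds (and its unit‑demand, capacitated specialisation also follows from~\cite{erlebach2001complexity} by bundling colours).

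For the uniform‑capacity case, with edge capacity $c$ everywhere, I would split $J=J_{\mathrm{large}}\dot{\cup}J_{\mathrm{small}}$ with $J_{\mathrm{large}}=\{\js:d_{\js}>c/2\}$. Every round carries at most one large job through any edge, so the feasible packings of $J_{\mathrm{large}}$ into rounds are exactly the proper path colourings of $\{P_{\js}:\js\in J_{\mathrm{large}}\}$; applying~\cite{erlebach2001complexity} packs $J_{\mathrm{large}}$ into $O(L_{\mathrm{large}})$ rounds, where $L_{\mathrm{large}}$ is the largest number of large jobs through an edge, and since $OPT\ge L_{\mathrm{large}}$ this costs only a small constant factor. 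For $J_{\mathrm{small}}$, every demand is at most $c/2<c=c_{\min}$, so the NBA holds and splitting each job evenly over $r$ rounds is a fractionally feasible packing of congestion $r$; applying the rounding of~\cite{ChekuriMS07} (possibly after a dyadic bucketing of the demands, exactly as with the sets $J^{(i)}$ of Section~\ref{sec:generalroundufp}, to keep the geometric sums clean) gives an integral packing of $J_{\mathrm{small}}$ into $O(r)\le O(OPT)$ rounds. Summing the two contributions and tuning the large/small threshold gives at most $5.1\cdot OPT+O(1)$ rounds; as the additive term is dominated once $OPT$ is large, and the finitely many small‑$OPT$ instances can be solved exactly, this yields the asymptotic $5.1$‑ and absolute $5.5$‑approximation.

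For the general case under the NBA I would run the same scheme, following the \rufp‑on‑a‑path algorithm of Section~\ref{sec:generalroundufp} after scaling so that $c_{\min}=1$ (hence every demand is $\le1$). Put $J_{\mathrm{large}}=\{\js:d_{\js}>\tfrac{1}{2}\mathfrak{b}_{\js}\}$; after rounding such a demand up to $\mathfrak{b}_{\js}$ and the relevant capacities to integers (which blows the congestion up by only a constant), the large jobs become a unit‑demand, integrally‑capacitated tree instance and are packed into $O(r)$ rounds by the tree analogue of Theorem~\ref{theorem:nomikostheorem}. The small jobs ($d_{\js}\le\tfrac{1}{2}\mathfrak{b}_{\js}$) again obey the NBA and are packed into $O(r)$ rounds by the rounding of~\cite{ChekuriMS07}, combined with the $J'^{(i)}/J''^{(i)}$ decomposition and capacity‑splitting idea of Section~\ref{sec:generalroundufp} to control the per‑round loads. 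Because the capacities now vary, the rounding and bucketing steps lose more than in the uniform case; propagating and optimising these constants yields the asymptotic $49$‑ and absolute $55$‑approximation, improving the $64$ of~\cite{ElbassioniGGKN12}.

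The main obstacle is that these replacements are strictly lossier than the path ingredients they stand in for: interval colouring is exact on a path, whereas path colouring on a tree can cost a full constant factor larger than the clique number and this is tight, and the DSA‑based shortcuts that let us pack the small jobs of a path instance into $r+O(1)$ rounds have no tree counterpart, so on a tree the small jobs already cost $\Theta(r)$ rounds through a demand‑flow rounding. Consequently every step that was tight in the \rufp\ analysis now loses an extra constant, and the delicate work is to apportion the large/small threshold — and, in the NBA case, the dyadic demand and capacity buckets — so that the compounded losses add up to exactly the claimed constants rather than to something larger; the remaining arguments are routine adaptations of the corresponding path proofs.
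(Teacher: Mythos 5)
Your high-level plan — split jobs into large and small, colour the large-job paths, and handle the small jobs by a congestion-based argument — matches the paper's structure, but at the level where the constants are actually extracted you diverge substantially, and I don't think your ingredients reach the claimed numbers.

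\textbf{Uniform case.} You handle $J_{\mathrm{large}}=\{\js:d_{\js}>\cs/2\}$ by path colouring, which is what the paper does (citing \cite{erlebach2001complexity} for the asymptotic ratio 1.1 and \cite{raghavan1994efficient} for the absolute ratio 1.5). However the constant you mention, ``essentially $5/3$,'' is the wrong one: combined with $4r$ for the small jobs it gives $5/3+4\approx5.67$, which already overshoots the claimed absolute $5.5$, so you need the stronger 1.1/1.5 colouring bounds, not the first-fit $5/3$. More importantly, for $\Zs_S$ the paper does \emph{not} use Chekuri--Mydlarz--Shepherd rounding or dyadic bucketing; it uses a direct and much tighter argument (Lemma~\ref{lemma:badfactorfour}): root the tree, order jobs by the level of the least common ancestor $\theta_{\js}$, and run First-Fit. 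Because of this ordering, the only edges that ever matter when placing $\js$ are the two edges incident to $\theta_{\js}$ on $P_{\js}$, and a volume argument on those two edges yields \emph{exactly} $|\Gamma|\le 4r$ rounds. Your approach via fractional feasibility, Chekuri rounding, and a dyadic geometric sum would give an unspecified $O(r)$; the $4r$ in the $5.1=1.1+4$ and $5.5=1.5+4$ splits is essential, and I do not see how your chain of estimates recovers it. This is a concrete gap: as written your uniform bound is not 5.1/5.5.

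\textbf{NBA case.} Here your route is entirely different from the paper's and, again, the constants are asserted rather than derived. The paper does not use the threshold $1/2$, nor the $J'^{(i)}/J''^{(i)}$ decomposition of Section~\ref{sec:generalroundufp}, nor a ``tree analogue of Theorem~\ref{theorem:nomikostheorem}.'' Instead it uses $\Js_L=\{\js:d_{\js}>\mathfrak{b}_{\js}/5\}$, a geometric rounding lemma (Lemma~\ref{lemma:treelemma}) giving $\frac{4\eta_1(\eta_2+1)}{\eta_2^2}r+4$ rounds for jobs with demand in $(\frac{c_{\min}}{\eta_1},\frac{c_{\min}}{\eta_2}]$ (built on the Chekuri--Mydlarz--Shepherd $4r$ unit-demand theorem), applied with $(\eta_1,\eta_2)=(5,2)$ and $(2,1)$ to get $31r+6$ rounds for $\Js_L$; for $\Js_S$ it runs a custom greedy over the tree with edge ``classes'' at base $5/2$ and critical edges, with an 18-fold round budget and a two-edge failure count to show feasibility, giving $18r$ rounds. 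These choices — the threshold $1/5$, the base $5/2$, the capacity fractions $1/9$ — are tuned together to give $49r+6$ asymptotically and $55r$ absolutely; they are not obtained by ``propagating constants'' through the path algorithm. With a $1/2$ threshold and the path's dyadic small-jobs machinery transplanted to the tree you would get some constant, but you haven't shown it is $\le 49$ resp.\ $\le 55$, and you would still need a tree substitute for the left-endpoint-based first-fit used on $J'^{(i)}$ (which on a path exploits that conflicts are only at the left endpoint). The missing content is precisely the careful constant-chasing and the two tree-specific lemmas (\ref{lemma:badfactorfour} and~\ref{lemma:18approx}); without them the claimed ratios are unjustified.
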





\bibliography{biblio2}

\begin{thebibliography}{10}

\bibitem{helly22}
Proving fractional helly's theorem for boxes and rectangles.
\newblock {\em stackexchange}.
\newblock URL:
  \url{https://math.stackexchange.com/questions/3451613/proving-fractional-hellys-theorem-for-boxes-and-rectangles}.

\bibitem{AHW19}
Anna Adamaszek, Sariel Har{-}Peled, and Andreas Wiese.
\newblock Approximation schemes for independent set and sparse subsets of
  polygons.
\newblock {\em J. {ACM}}, 66(4):29:1--29:40, 2019.

\bibitem{adamy2003online}
Udo Adamy and Thomas Erlebach.
\newblock Online coloring of intervals with bandwidth.
\newblock In {\em WAOA}, pages 1--12. Springer, 2003.

\bibitem{AnagnostopoulosGLW14}
Aris Anagnostopoulos, Fabrizio Grandoni, Stefano Leonardi, and Andreas Wiese.
\newblock A mazing 2+epsilon approximation for unsplittable flow on a path.
\newblock In {\em SODA}, pages 26--41, 2014.

\bibitem{andrews2005bounds}
Matthew Andrews and Lisa Zhang.
\newblock Bounds on fiber minimization in optical networks with fixed fiber
  capacity.
\newblock In {\em Proceedings IEEE 24th Annual Joint Conference of the IEEE
  Computer and Communications Societies.}, volume~1, pages 409--419. IEEE,
  2005.

\bibitem{azar2006improved}
Yossi Azar, Amos Fiat, Meital Levy, and NS~Narayanaswamy.
\newblock An improved algorithm for online coloring of intervals with
  bandwidth.
\newblock {\em Theoretical Computer Science}, 363(1):18--27, 2006.

\bibitem{bansal2006bin}
Nikhil Bansal, Jos{\'e}~R Correa, Claire Kenyon, and Maxim Sviridenko.
\newblock Bin packing in multiple dimensions: inapproximability results and
  approximation schemes.
\newblock {\em Mathematics of operations research}, 31(1):31--49, 2006.

\bibitem{bansal2014logarithmic}
Nikhil Bansal, Zachary Friggstad, Rohit Khandekar, and Mohammad~R Salavatipour.
\newblock A logarithmic approximation for unsplittable flow on line graphs.
\newblock {\em ACM Transactions on Algorithms (TALG)}, 10(1):1--15, 2014.

\bibitem{bansal2014binpacking}
Nikhil Bansal and Arindam Khan.
\newblock Improved approximation algorithm for two-dimensional bin packing.
\newblock In {\em SODA}, pages 13--25, 2014.

\bibitem{BatraG0MW15}
Jatin Batra, Naveen Garg, Amit Kumar, Tobias M{\"{o}}mke, and Andreas Wiese.
\newblock New approximation schemes for unsplittable flow on a path.
\newblock In {\em SODA}, pages 47--58, 2015.

\bibitem{BonsmaSW11}
Paul~S. Bonsma, Jens Schulz, and Andreas Wiese.
\newblock A constant factor approximation algorithm for unsplittable flow on
  paths.
\newblock In {\em FOCS}, pages 47--56, 2011.

\bibitem{BuchsbaumKKRT03}
Adam~L. Buchsbaum, Howard~J. Karloff, Claire Kenyon, Nick Reingold, and Mikkel
  Thorup.
\newblock {OPT} versus {LOAD} in dynamic storage allocation.
\newblock In {\em STOC}, pages 556--564, 2003.

\bibitem{parinbar}
Parinya Chalermsook and Bartosz Walczak.
\newblock Coloring and maximum weight independent set of rectangles.
\newblock In {\em SODA}, pages 860--868, 2021.

\bibitem{chekuri2005polynomial}
Chandra Chekuri and Sanjeev Khanna.
\newblock A polynomial time approximation scheme for the multiple knapsack
  problem.
\newblock {\em SIAM Journal on Computing}, 35(3):713--728, 2005.

\bibitem{ChekuriMS07}
Chandra Chekuri, Marcelo Mydlarz, and F.~Bruce Shepherd.
\newblock Multicommodity demand flow in a tree and packing integer programs.
\newblock {\em {ACM} Trans. Algorithms}, 3(3):27, 2007.

\bibitem{chlebik2006complexity}
Miroslav Chleb{\'\i}k and Janka Chleb{\'\i}kov{\'a}.
\newblock Complexity of approximating bounded variants of optimization
  problems.
\newblock {\em Theoretical Computer Science}, 354(3):320--338, 2006.

\bibitem{chlebik2009hardness}
Miroslav Chleb{\'\i}k and Janka Chleb{\'\i}kov{\'a}.
\newblock Hardness of approximation for orthogonal rectangle packing and
  covering problems.
\newblock {\em Journal of Discrete Algorithms}, 7(3):291--305, 2009.

\bibitem{CKPT17}
Henrik~I. Christensen, Arindam Khan, Sebastian Pokutta, and Prasad Tetali.
\newblock Approximation and online algorithms for multidimensional bin packing:
  A survey.
\newblock {\em Computer Science Review}, 24:63--79, 2017.

\bibitem{chuzhoy2016polylogarithmic}
Julia Chuzhoy and Shi Li.
\newblock A polylogarithmic approximation algorithm for edge-disjoint paths
  with congestion 2.
\newblock {\em Journal of the ACM}, 63(5):1--51, 2016.

\bibitem{de1981bin}
W~Fernandez De~La~Vega and George~S. Lueker.
\newblock Bin packing can be solved within 1+ $\varepsilon$ in linear time.
\newblock {\em Combinatorica}, 1(4):349--355, 1981.

\bibitem{DeppertJ0RT21}
Max~A Deppert, Klaus Jansen, Arindam Khan, Malin Rau, and Malte Tutas.
\newblock Peak demand minimization via sliced strip packing.
\newblock In {\em APPROX/RANDOM}, volume 207, pages 21:1--21:24, 2021.

\bibitem{ElbassioniGGKN12}
Khaled~M. Elbassioni, Naveen Garg, Divya Gupta, Amit Kumar, Vishal Narula, and
  Arindam Pal.
\newblock Approximation algorithms for the unsplittable flow problem on paths
  and trees.
\newblock In {\em FSTTCS}, pages 267--275, 2012.

\bibitem{epstein2009online}
Leah Epstein, Thomas Erlebach, and Asaf Levin.
\newblock Online capacitated interval coloring.
\newblock {\em SIAM Journal on Discrete Mathematics}, 23(2):822--841, 2009.

\bibitem{erlebach2001complexity}
Thomas Erlebach and Klaus Jansen.
\newblock The complexity of path coloring and call scheduling.
\newblock {\em Theoretical Computer Science}, 255(1-2):33--50, 2001.

\bibitem{Galvez0AJ0R20}
Waldo G{\'{a}}lvez, Fabrizio Grandoni, Afrouz~Jabal Ameli, Klaus Jansen,
  Arindam Khan, and Malin Rau.
\newblock A tight (3/2+{\(\epsilon\)}) approximation for skewed strip packing.
\newblock In {\em APPROX/RANDOM}, pages 44:1--44:18, 2020.

\bibitem{GalvezGHI0W17}
Waldo G{\'{a}}lvez, Fabrizio Grandoni, Sandy Heydrich, Salvatore Ingala,
  Arindam Khan, and Andreas Wiese.
\newblock Approximating geometric knapsack via l-packings.
\newblock In {\em FOCS}, pages 260--271, 2017.

\bibitem{GGIK16}
Waldo G{\'{a}}lvez, Fabrizio Grandoni, Salvatore Ingala, and Arindam Khan.
\newblock Improved pseudo-polynomial-time approximation for strip packing.
\newblock In {\em FSTTCS}, pages 9:1--9:14, 2016.

\bibitem{Galvez00RW21}
Waldo G{\'{a}}lvez, Fabrizio Grandoni, Arindam Khan, Diego
  Ram{\'{\i}}rez{-}Romero, and Andreas Wiese.
\newblock Improved approximation algorithms for 2-dimensional knapsack: Packing
  into multiple l-shapes, spirals, and more.
\newblock In {\em SoCG}, volume 189, pages 39:1--39:17, 2021.

\bibitem{Galvez22}
Waldo G{\'{a}}lvez, Arindam Khan, Mathieu Mari, Tobias M{\"{o}}mke,
  Madhusudhan~Reddy Pittu, and Andreas Wiese.
\newblock A (2+$\epsilon$)-approximation algorithm for maximum independent set
  of rectangles.
\newblock {\em CoRR}, abs/2106.00623, 2021.
\newblock URL: \url{https://arxiv.org/abs/2106.00623}.

\bibitem{Gergov99}
Jordan Gergov.
\newblock Algorithms for compile-time memory optimization.
\newblock In {\em SODA}, pages 907--908, 1999.

\bibitem{GMW22b}
Fabrizio Grandoni, Tobias M{\"{o}}mke, and Andreas Wiese.
\newblock A {PTAS} for unsplittable flow on a path.
\newblock 2022.
\newblock \emph{personal communication}.

\bibitem{GMW22a}
Fabrizio Grandoni, Tobias M{\"{o}}mke, and Andreas Wiese.
\newblock Unsplittable flow on a path: The game!
\newblock In {\em SODA}, page To appear, 2022.

\bibitem{GrandoniMW017}
Fabrizio Grandoni, Tobias M{\"{o}}mke, Andreas Wiese, and Hang Zhou.
\newblock To augment or not to augment: Solving unsplittable flow on a path by
  creating slack.
\newblock In {\em SODA}, pages 2411--2422, 2017.

\bibitem{GMWZ018}
Fabrizio Grandoni, Tobias M{\"{o}}mke, Andreas Wiese, and Hang Zhou.
\newblock A {(5/3} + {\(\epsilon\)})-approximation for unsplittable flow on a
  path: placing small tasks into boxes.
\newblock In {\em STOC}, pages 607--619, 2018.

\bibitem{JahanjouKR17}
Hamidreza Jahanjou, Erez Kantor, and Rajmohan Rajaraman.
\newblock Improved algorithms for scheduling unsplittable flows on paths.
\newblock In {\em ISAAC}, pages 49:1--49:12, 2017.

\bibitem{Jansen2004}
Klaus Jansen and Guochuan Zhang.
\newblock On rectangle packing: maximizing benefits.
\newblock In {\em SODA}, pages 204--213, 2004.

\bibitem{khan2015approximation}
Arindam Khan.
\newblock {\em Approximation algorithms for multidimensional bin packing}.
\newblock PhD thesis, Georgia Institute of Technology, 2015.

\bibitem{khan2021guillotine}
Arindam Khan, Arnab Maiti, Amatya Sharma, and Andreas Wiese.
\newblock On guillotine separable packings for the two-dimensional geometric
  knapsack problem.
\newblock In {\em SoCG}, volume 189, pages 48:1--48:17, 2021.

\bibitem{KMR20}
Arindam Khan and Madhusudhan~Reddy Pittu.
\newblock On guillotine separability of squares and rectangles.
\newblock In {\em APPROX/RANDOM}, pages 47:1--47:22, 2020.

\bibitem{KS21}
Arindam Khan and Eklavya Sharma.
\newblock Tight approximation algorithms for geometric bin packing with skewed
  items.
\newblock In {\em APPROX/RANDOM}, volume 207, pages 22:1--22:23, 2021.

\bibitem{KhanS15}
Arindam Khan and Mohit Singh.
\newblock On weighted bipartite edge coloring.
\newblock In {\em FSTTCS}, pages 136--150, 2015.

\bibitem{mitchellFocs}
Joseph S.~B. Mitchell.
\newblock Approximating maximum independent set for rectangles in the plane.
\newblock {\em CoRR}, abs/2101.00326, 2021.
\newblock URL: \url{https://arxiv.org/abs/2101.00326}.

\bibitem{MomkeW15}
Tobias M{\"{o}}mke and Andreas Wiese.
\newblock A (2+epsilon) -approximation algorithm for the storage allocation
  problem.
\newblock In {\em ICALP}, pages 973--984, 2015.

\bibitem{MomkeW20}
Tobias M{\"{o}}mke and Andreas Wiese.
\newblock Breaking the barrier of 2 for the storage allocation problem.
\newblock In {\em ICALP}, pages 86:1--86:19, 2020.

\bibitem{narayanaswamy2004dynamic}
NS~Narayanaswamy.
\newblock Dynamic storage allocation and on-line colouring interval graphs.
\newblock In {\em COCOON}, pages 329--338. Springer, 2004.

\bibitem{NomikosPZ01}
Christos Nomikos, Aris Pagourtzis, and Stathis Zachos.
\newblock Routing and path multicoloring.
\newblock {\em Inf. Process. Lett.}, 80(5):249--256, 2001.

\bibitem{pal2014approximation}
Arindam Pal.
\newblock Approximation algorithms for covering and packing problems on paths.
\newblock {\em arXiv preprint arXiv:1402.1107}, 2014.

\bibitem{raghavan1994efficient}
Prabhakar Raghavan and Eli Upfal.
\newblock Efficient routing in all-optical networks.
\newblock In {\em STOC}, pages 134--143, 1994.

\bibitem{winkler2003wavelength}
Peter Winkler and Lisa Zhang.
\newblock Wavelength assignment and generalized interval graph coloring.
\newblock In {\em SODA}, pages 830--831, 2003.

\bibitem{woeginger1997there}
Gerhard~J Woeginger.
\newblock There is no asymptotic ptas for two-dimensional vector packing.
\newblock {\em Information Processing Letters}, 64(6):293--297, 1997.

\end{thebibliography}


\begin{thebibliography}{10}

\bibitem{AHW19}
Anna Adamaszek, Sariel Har{-}Peled, and Andreas Wiese.
\newblock Approximation schemes for independent set and sparse subsets of
  polygons.
\newblock {\em J. {ACM}}, 66(4):29:1--29:40, 2019.

\bibitem{adamy2003online}
Udo Adamy and Thomas Erlebach.
\newblock Online coloring of intervals with bandwidth.
\newblock In {\em WAOA}, pages 1--12. Springer, 2003.

\bibitem{AnagnostopoulosGLW14}
Aris Anagnostopoulos, Fabrizio Grandoni, Stefano Leonardi, and Andreas Wiese.
\newblock A mazing 2+epsilon approximation for unsplittable flow on a path.
\newblock In {\em SODA}, pages 26--41, 2014.

\bibitem{andrews2005bounds}
Matthew Andrews and Lisa Zhang.
\newblock Bounds on fiber minimization in optical networks with fixed fiber
  capacity.
\newblock In {\em Proceedings IEEE 24th Annual Joint Conference of the IEEE
  Computer and Communications Societies.}, volume~1, pages 409--419. IEEE,
  2005.

\bibitem{azar2006improved}
Yossi Azar, Amos Fiat, Meital Levy, and NS~Narayanaswamy.
\newblock An improved algorithm for online coloring of intervals with
  bandwidth.
\newblock {\em Theoretical Computer Science}, 363(1):18--27, 2006.

\bibitem{bansal2006bin}
Nikhil Bansal, Jos{\'e}~R Correa, Claire Kenyon, and Maxim Sviridenko.
\newblock Bin packing in multiple dimensions: inapproximability results and
  approximation schemes.
\newblock {\em Mathematics of operations research}, 31(1):31--49, 2006.

\bibitem{bansal2014logarithmic}
Nikhil Bansal, Zachary Friggstad, Rohit Khandekar, and Mohammad~R Salavatipour.
\newblock A logarithmic approximation for unsplittable flow on line graphs.
\newblock {\em ACM Transactions on Algorithms (TALG)}, 10(1):1--15, 2014.

\bibitem{bansal2014binpacking}
Nikhil Bansal and Arindam Khan.
\newblock Improved approximation algorithm for two-dimensional bin packing.
\newblock In {\em SODA}, pages 13--25, 2014.

\bibitem{BatraG0MW15}
Jatin Batra, Naveen Garg, Amit Kumar, Tobias M{\"{o}}mke, and Andreas Wiese.
\newblock New approximation schemes for unsplittable flow on a path.
\newblock In {\em SODA}, pages 47--58, 2015.

\bibitem{BonsmaSW11}
Paul~S. Bonsma, Jens Schulz, and Andreas Wiese.
\newblock A constant factor approximation algorithm for unsplittable flow on
  paths.
\newblock In {\em {IEEE} 52nd Annual Symposium on Foundations of Computer
  Science, {FOCS} 2011, Palm Springs, CA, USA, October 22-25, 2011}, pages
  47--56, 2011.

\bibitem{BuchsbaumKKRT03}
Adam~L. Buchsbaum, Howard~J. Karloff, Claire Kenyon, Nick Reingold, and Mikkel
  Thorup.
\newblock {OPT} versus {LOAD} in dynamic storage allocation.
\newblock In {\em STOC}, pages 556--564, 2003.

\bibitem{parinbar}
Parinya Chalermsook and Bartosz Walczak.
\newblock Coloring and maximum weight independent set of rectangles.
\newblock {\em CoRR}, abs/2007.07880, 2020.

\bibitem{chekuri2005polynomial}
Chandra Chekuri and Sanjeev Khanna.
\newblock A polynomial time approximation scheme for the multiple knapsack
  problem.
\newblock {\em SIAM Journal on Computing}, 35(3):713--728, 2005.

\bibitem{ChekuriMS07}
Chandra Chekuri, Marcelo Mydlarz, and F.~Bruce Shepherd.
\newblock Multicommodity demand flow in a tree and packing integer programs.
\newblock {\em {ACM} Trans. Algorithms}, 3(3):27, 2007.

\bibitem{chlebik2006complexity}
Miroslav Chleb{\'\i}k and Janka Chleb{\'\i}kov{\'a}.
\newblock Complexity of approximating bounded variants of optimization
  problems.
\newblock {\em Theoretical Computer Science}, 354(3):320--338, 2006.

\bibitem{chlebik2009hardness}
Miroslav Chleb{\'\i}k and Janka Chleb{\'\i}kov{\'a}.
\newblock Hardness of approximation for orthogonal rectangle packing and
  covering problems.
\newblock {\em Journal of Discrete Algorithms}, 7(3):291--305, 2009.

\bibitem{CKPT17}
Henrik~I. Christensen, Arindam Khan, Sebastian Pokutta, and Prasad Tetali.
\newblock Approximation and online algorithms for multidimensional bin packing:
  A survey.
\newblock {\em Computer Science Review}, 24:63--79, 2017.

\bibitem{chuzhoy2016polylogarithmic}
Julia Chuzhoy and Shi Li.
\newblock A polylogarithmic approximation algorithm for edge-disjoint paths
  with congestion 2.
\newblock {\em Journal of the ACM}, 63(5):1--51, 2016.

\bibitem{ElbassioniGGKN12}
Khaled~M. Elbassioni, Naveen Garg, Divya Gupta, Amit Kumar, Vishal Narula, and
  Arindam Pal.
\newblock Approximation algorithms for the unsplittable flow problem on paths
  and trees.
\newblock In {\em FSTTCS}, pages 267--275, 2012.

\bibitem{epstein2009online}
Leah Epstein, Thomas Erlebach, and Asaf Levin.
\newblock Online capacitated interval coloring.
\newblock {\em SIAM Journal on Discrete Mathematics}, 23(2):822--841, 2009.

\bibitem{erlebach2001complexity}
Thomas Erlebach and Klaus Jansen.
\newblock The complexity of path coloring and call scheduling.
\newblock {\em Theoretical Computer Science}, 255(1-2):33--50, 2001.

\bibitem{Galvez0AJ0R20}
Waldo G{\'{a}}lvez, Fabrizio Grandoni, Afrouz~Jabal Ameli, Klaus Jansen,
  Arindam Khan, and Malin Rau.
\newblock A tight (3/2+{\(\epsilon\)}) approximation for skewed strip packing.
\newblock In {\em APPROX/RANDOM}, pages 44:1--44:18, 2020.

\bibitem{GalvezGHI0W17}
Waldo G{\'{a}}lvez, Fabrizio Grandoni, Sandy Heydrich, Salvatore Ingala,
  Arindam Khan, and Andreas Wiese.
\newblock Approximating geometric knapsack via l-packings.
\newblock In {\em FOCS}, pages 260--271, 2017.

\bibitem{GGIK16}
Waldo G{\'{a}}lvez, Fabrizio Grandoni, Salvatore Ingala, and Arindam Khan.
\newblock Improved pseudo-polynomial-time approximation for strip packing.
\newblock In {\em FSTTCS}, pages 9:1--9:14, 2016.

\bibitem{Gergov99}
Jordan Gergov.
\newblock Algorithms for compile-time memory optimization.
\newblock In {\em SODA}, pages 907--908, 1999.

\bibitem{GrandoniMW017}
Fabrizio Grandoni, Tobias M{\"{o}}mke, Andreas Wiese, and Hang Zhou.
\newblock To augment or not to augment: Solving unsplittable flow on a path by
  creating slack.
\newblock In {\em SODA}, pages 2411--2422, 2017.

\bibitem{GMWZ018}
Fabrizio Grandoni, Tobias M{\"{o}}mke, Andreas Wiese, and Hang Zhou.
\newblock A {(5/3} + {\(\epsilon\)})-approximation for unsplittable flow on a
  path: placing small tasks into boxes.
\newblock In {\em STOC}, pages 607--619, 2018.

\bibitem{JahanjouKR17}
Hamidreza Jahanjou, Erez Kantor, and Rajmohan Rajaraman.
\newblock Improved algorithms for scheduling unsplittable flows on paths.
\newblock In {\em ISAAC}, pages 49:1--49:12, 2017.

\bibitem{Jansen2004}
Klaus Jansen and Guochuan Zhang.
\newblock On rectangle packing: maximizing benefits.
\newblock In {\em SODA}, pages 204--213, 2004.

\bibitem{karmarkar1982efficient}
Narendra Karmarkar and Richard~M Karp.
\newblock An efficient approximation scheme for the one-dimensional bin-packing
  problem.
\newblock In {\em FOCS}, pages 312--320, 1982.

\bibitem{khan2015approximation}
Arindam Khan.
\newblock {\em Approximation algorithms for multidimensional bin packing}.
\newblock PhD thesis, Georgia Institute of Technology, 2015.

\bibitem{KMR20}
Arindam Khan and Madhusudhan~Reddy Pittu.
\newblock On guillotine separability of squares and rectangles.
\newblock In {\em APPROX/RANDOM}, pages 47:1--47:22, 2020.

\bibitem{KhanS15}
Arindam Khan and Mohit Singh.
\newblock On weighted bipartite edge coloring.
\newblock In {\em FSTTCS}, pages 136--150, 2015.

\bibitem{MomkeW15}
Tobias M{\"{o}}mke and Andreas Wiese.
\newblock A (2+epsilon) -approximation algorithm for the storage allocation
  problem.
\newblock In {\em ICALP}, pages 973--984, 2015.

\bibitem{MomkeW20}
Tobias M{\"{o}}mke and Andreas Wiese.
\newblock Breaking the barrier of 2 for the storage allocation problem.
\newblock In {\em ICALP}, pages 86:1--86:19, 2020.

\bibitem{narayanaswamy2004dynamic}
NS~Narayanaswamy.
\newblock Dynamic storage allocation and on-line colouring interval graphs.
\newblock In {\em COCOON}, pages 329--338. Springer, 2004.

\bibitem{NomikosPZ01}
Christos Nomikos, Aris Pagourtzis, and Stathis Zachos.
\newblock Routing and path multicoloring.
\newblock {\em Inf. Process. Lett.}, 80(5):249--256, 2001.

\bibitem{raghavan1994efficient}
Prabhakar Raghavan and Eli Upfal.
\newblock Efficient routing in all-optical networks.
\newblock In {\em Proceedings of the twenty-sixth annual ACM symposium on
  Theory of computing}, pages 134--143, 1994.

\bibitem{winkler2003wavelength}
Peter Winkler and Lisa Zhang.
\newblock Wavelength assignment and generalized interval graph coloring.
\newblock In {\em Proceedings of the fourteenth annual ACM-SIAM symposium on
  Discrete algorithms}, pages 830--831, 2003.

\bibitem{woeginger1997there}
Gerhard~J Woeginger.
\newblock There is no asymptotic ptas for two-dimensional vector packing.
\newblock {\em Information Processing Letters}, 64(6):293--297, 1997.

\end{thebibliography}

\appendix
\section{Approximation Algorithms}
\label{sec:approx}
In this subsection, we define notions related to approximation algorithms. 

\begin{definition}[Approximation Guarantee]
For a minimization problem $\Pi$, an algorithm $\mathcal{A}$ has approximation guarantee of  $\alpha$ ($\alpha>1$), if $\mathcal{A}(I) \le \alpha~\opt(I)$ for all input instance $I$ of $\Pi$. 
 For a maximization problem $\Pi'$, an algorithm $\mathcal{A}$ has approximation guarantee of  $\alpha$ ($\alpha>1$), if $\opt(I) \le \alpha~\mathcal{A}(I)$ for all input instance $I$ of $\Pi'$. 
\end{definition}
This is also known as {\em absolute approximation guarantee}. There is another notion of approximation called asymptotic approximation which we define next.

\begin{definition}[Asymptotic Approximation Guarantee]
 For a minimization problem $\Pi$, an algorithm $\mathcal{A}$ has asymptotic approximation guarantee of $\alpha$ ($\alpha>1$), if $\mathcal{A}(I) \le \alpha~\opt(I) + o(\opt(I))$ for all input instance $I$ of $\Pi$. 
\end{definition}

\begin{definition}[Polynomial Time Approximation Scheme (PTAS)]
 A minimization problem $\Pi$ admits PTAS if for every constant $\eps>0$, there exists a $(1+\eps)$-approximation algorithm with running time $O(n^{f(1/\eps)})$, for any function $f$ that depends only on $\eps$.
\end{definition}

If the running time of a PTAS is $O(f(1/\eps)~n^c)$ for some function $f$ and a constant $c$ that is independent of $1/\eps$, we call it Efficient PTAS (EPTAS), If the running time of a PTAS is polynomial in both $n$ and $1/\eps$, we call it Fully PTAS  (FPTAS). 
Asymptotic analogue of PTAS, EPTAS, FPTAS are known as APTAS, AEPTAS, AFPTAS, respectively. 

\section{Omitted Proofs}
\label{sec:applowb}

\subsection{Proof of \Cref{lem:propertyofdummyrounds}}
We first state some inequalities on the job dimensions.

\begin{lemma} \label{lemma:inequalities} The following inequalities
hold:
\begin{enumerate}[(i)]
	\item $d_{a}>999\gamma$, $\forall a\in A$.
	\item $d_{b}>1001\gamma$, $\forall b\in B$.
	\item $d_{a'}>1000\gamma$, $\forall a'\in A'$.
	\item $d_{b'}>997\gamma$, $\forall b'\in B'$.
\end{enumerate}
\end{lemma}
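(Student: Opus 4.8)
The plan is to unfold the explicit demand formulas from the construction and reduce each of the four inequalities to an elementary estimate on the auxiliary integers $x'_i,y'_j,z'_k$ and $\tau'_l$, using only that $\rho=32q$, $\gamma=\rho^4+15$, and that $1\le i,j,k,l$ with $i,j,k\le q$ (so in particular $\rho\ge 32$). Recall $d_{a_{X,i}}=999\gamma+4x'_i$ (and analogously for the $Y$- and $Z$-jobs in $A$), $d_{a'_{X,i}}=1001\gamma-4x'_i$ (and analogously in $A'$), $d_{b_l}=999\gamma+4\tau'_l$, and $d_{b'_l}=1001\gamma-4\tau'_l$. Every inequality becomes a statement of the form ``this auxiliary integer is positive / bounded above / bounded below by a fixed multiple of $\gamma$''.

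Parts (i) and (iv) are immediate. For (i), each element of $\Vs$ of the form $x'_i=i\rho+1$, $y'_j=j\rho^2+2$, $z'_k=k\rho^3+4$ is strictly positive, so $d_a=999\gamma+4\cdot(\text{positive})>999\gamma$ for every $a\in A$. For (iv) it suffices to note $\tau'_l<\gamma$: since $\tau'_l=\rho^4-k\rho^3-j\rho^2-i\rho+8\le\rho^4+8<\rho^4+15=\gamma$, we get $d_{b'_l}=1001\gamma-4\tau'_l>1001\gamma-4\gamma=997\gamma$ for every $b'\in B'$ (this estimate also shows the demand stays positive).

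For (ii) I would observe that $d_{b_l}>1001\gamma$ is equivalent to $\tau'_l>\tfrac12\gamma$. Using $q=\rho/32$ and $i,j,k\le q$, the subtracted terms satisfy $k\rho^3+j\rho^2+i\rho\le q(\rho^3+\rho^2+\rho)\le\tfrac1{16}\rho^4$ (which holds for all $\rho\ge 2$, since $\rho^3+\rho^2\le\rho^4$), hence $\tau'_l\ge\rho^4-\tfrac1{16}\rho^4+8=\tfrac{15}{16}\rho^4+8>\tfrac12(\rho^4+15)=\tfrac12\gamma$. For (iii), $d_{a'_{X,i}}>1000\gamma$ is equivalent to $x'_i<\tfrac14\gamma$, and similarly $d_{a'}>1000\gamma$ for the $Y$- and $Z$-jobs in $A'$ is equivalent to $y'_j<\tfrac14\gamma$ and $z'_k<\tfrac14\gamma$. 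The largest of these auxiliary integers is $z'_k=k\rho^3+4\le q\rho^3+4=\tfrac1{32}\rho^4+4$, which is strictly below $\tfrac14(\rho^4+15)$ for every $\rho\ge 2$; a fortiori $y'_j\le\tfrac1{32}\rho^3+2$ and $x'_i\le\tfrac1{32}\rho^2+1$ satisfy the same bound, so $d_{a'}>1000\gamma$ for every $a'\in A'$.

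There is essentially no genuine obstacle here: once the demand formulas are substituted, the statement is pure arithmetic bookkeeping. The only point requiring a small amount of care is to confirm that the ``$\rho$ large enough'' comparisons (namely $q(\rho^3+\rho^2+\rho)\le\tfrac1{16}\rho^4$ and $\tfrac1{32}\rho^4+4<\tfrac14\gamma$) already hold for the smallest admissible parameter $q=1$, i.e. $\rho=32$; since in each case one side is a constant fraction of $\rho^4$ while the other is $O(\rho^3)$, the inequalities hold with a wide margin and no case analysis on $q$ is needed.
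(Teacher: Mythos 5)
Your proof is correct and follows essentially the same route as the paper: unfold the explicit demand formulas and reduce each inequality to an arithmetic estimate on $x'_i$, $y'_j$, $z'_k$, $\tau'_l$ using $\rho=32q$ and $\gamma=\rho^4+15$. The only stylistic difference is that you first restate each inequality as an equivalent threshold on the auxiliary integer (e.g.\ $d_b>1001\gamma\iff\tau'_l>\tfrac12\gamma$) before estimating, whereas the paper plugs $\rho=32q$ in directly and bounds the resulting polynomial in $q$; both are the same bookkeeping and your presentation is if anything a touch cleaner.
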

\begin{proof}
\begin{enumerate}[(i)]
\item Follows from definition of job dimensions.
\item Let $b \in B$. Then $d_b \ge 999\gamma + 4\cdot (\rho^4 -q\rho^3 - q\rho^2 - q\rho + 8)$, as $i,j,k \le q$. Thus $d_b - 1001\gamma 
\ge 4\cdot(32^4q^4-q \cdot 32^3 q^3 - q \cdot 32^2 q^2- q \cdot 32 q+8)-2\cdot (32^4q^4+15)
\ge 4\cdot (15\cdot 32^3q^4 - 32^2q^3 - 32q^2) + 2 > 0$ (using the fact that $\rho=32q$ and $\gamma=\rho^4 + 15$).
\item Let $a'\in A'$. Then $d_{a'} \ge 1001\gamma - 4\cdot (q\rho^3 + 4)$. Thus $d_{a'} - 1000\gamma \ge 28.32^3q^4 - 1 > 0$.
\item Let $b' \in B'$. Then $d_{b'} \ge 1001\gamma - 4 \cdot (\rho^4 - \rho^3 - \rho^2 - \rho +8)$. Thus $d_{b'} - 997\gamma > (4\gamma - 4\cdot \rho^4) + 4\cdot (\rho^3 + \rho^2 + \rho -8) >0$.
\end{enumerate}
\end{proof}

Thus the demand of any job in $A\cup A'\cup B\cup B'$ is at least $997\gamma$. Also recall that the demand of any dummy job is $2997\gamma$. Suppose a round $\Rs$ contains a dummy job. Consider the leftmost edge $e_1$. If $\Rs$ contains at least 2 jobs from $A\cup B$, the total sum of demands of the jobs inside $\Rs$ over $e_1$ would be at least $2997\gamma + 2\cdot 997\gamma > 4000\gamma = c^{*}$, a contradiction. Similarly $\Rs$ can contain at most 1 job from $A'\cup B'$. Thus any round containing a dummy job can contain at most 1 job each from $A\cup B$ and $A'\cup B'$.

\subsection{Proof of \Cref{lem:maxnumberofjobs}}
\label{sec:proofofmaxnumberofjobs}
From \Cref{lem:propertyofdummyrounds}, a round containing a dummy job can contain at most 3 jobs. Suppose a round $\Rs$ does not contain a dummy job. If it has at least 5 jobs from $A\cup B$, the load on the leftmost edge $e_1$ inside round $\Rs$ would be at least $5\cdot 997\gamma > 4000\gamma = c^{*}$, a contradiction. Similarly since every job in $A'\cup B'$ passes through the rightmost edge $e_m$, there can be at most 4 such jobs inside $\Rs$. Thus $\Rs$ can contain at most 8 jobs.

\subsection{Proof of \Cref{lem:propertyofnicerounds}}
The if part follows directly since the 8 jobs corresponding to a triplet $\tau_{l}=(x_{i},y_{j},z_{k})\in\Ts$
(4 from $A\cup B$ and their peers) can be packed together inside
one round as shown in Figure \ref{fig:Awellpackedbin_new}.

\begin{figure}[h]
\centering
\captionsetup{justification=centering}
\begin{tikzpicture}[scale=0.7, transform shape]
\draw[thick] (0,0) rectangle (8,4);

\draw[thick, fill=gray] (0,0) rectangle (3.6,1.15);
\draw[thick] (1.8,0.575) node {\small $b_l$};
\draw[thick, fill = gray] (0,1.15) rectangle (3.7,2.2);
\draw[thick] (1.85,1.675) node {\small $a_{Z,k}$};
\draw[thick, fill = gray] (0,2.2) rectangle (3.8,3.15);
\draw[thick] (1.9,2.675) node {\small $a_{Y,j}$};
\draw[thick, fill = gray] (0,3.15) rectangle (3.9,4);
\draw[thick] (1.95,3.575) node {\small $a_{X,i}$};

\draw[thick, fill = gray] (8,4) rectangle (3.9,2.85);
\draw[thick] (5.95,3.425) node {\small $a'_{X,i}$};
\draw[thick, fill = gray] (8,2.85) rectangle (3.8,1.8);
\draw[thick] (5.9,2.325) node {\small $a'_{Y,j}$};
\draw[thick, fill = gray] (8,1.8) rectangle (3.7,0.85);
\draw[thick] (5.85,1.325) node {\small $a'_{Z,k}$};
\draw[thick, fill = gray] (8,0.85) rectangle (3.6,0);
\draw[thick] (5.8,0.425) node {\small $b'_l$};

\draw[thick] (8,-0.3) node {\footnotesize $40000\gamma$};
\draw[thick] (-0.5,4) node {\footnotesize $4000\gamma$};
\draw[thick] (-0.2,-0.2) node {\footnotesize 0};
\end{tikzpicture}
\caption{A nice round corresponding to the triplet $\tau_l = (x_i,y_j,z_k)$}
\label{fig:Awellpackedbin_new}
\end{figure}
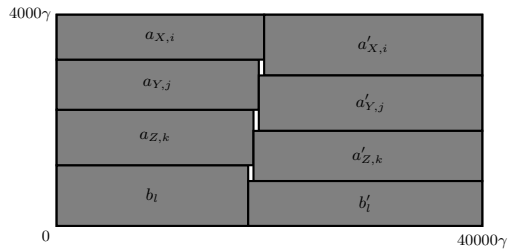

We begin the proof of the only if part by stating some properties of the job dimensions.

\begin{lemma}
\label{lemma:lemma1new} The following conditions hold: 
\begin{enumerate}[(i)]
\item $w_{a}+d_{a}=20999\gamma$, $\forall a\in A$.
\item $w_{a'}+d_{a'}=21001\gamma$, $\forall a'\in A'$. 
\item $w_{b}+d_{b}=20000\gamma$, $\forall b\in B$. 
\item $w_{b'}+d_{b'}=22000\gamma$, $\forall b'\in B'$. 
\end{enumerate}
\end{lemma}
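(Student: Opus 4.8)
The statement is a direct consequence of the explicit definitions of the jobs in the reduction, so the plan is simply to substitute the definitions of $s_\js$, $t_\js$, $d_\js$ into $w_\js := t_\js - s_\js$ and observe that the terms depending on the auxiliary integers $x'_i, y'_j, z'_k, \tau'_l$ cancel pairwise. No nontrivial structural argument is needed; the only thing to be careful about is tracking the signs of the $4x'_i$ (resp.\ $4\tau'_l$) contributions, which appear with opposite signs in the width and in the demand of each job, and hence cancel.

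Concretely, for part~(i) take any $a = a_{X,i} \in A_X$. By definition $a = (0,\, 20000\gamma - 4x'_i,\, 999\gamma + 4x'_i)$, so $w_a = t_a - s_a = 20000\gamma - 4x'_i$ and $d_a = 999\gamma + 4x'_i$, whence $w_a + d_a = 20999\gamma$; the same computation works verbatim for $a_{Y,j}$ and $a_{Z,k}$ with $y'_j$ resp.\ $z'_k$ in place of $x'_i$. For part~(ii) take $a' = a'_{X,i} = (20000\gamma - 4x'_i,\, 40000\gamma,\, 1001\gamma - 4x'_i)$, so $w_{a'} = 40000\gamma - (20000\gamma - 4x'_i) = 20000\gamma + 4x'_i$ and $d_{a'} = 1001\gamma - 4x'_i$, giving $w_{a'} + d_{a'} = 21001\gamma$, again uniformly over $A'_X, A'_Y, A'_Z$.

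For part~(iii) take $b = b_l = (0,\, 19001\gamma - 4\tau'_l,\, 999\gamma + 4\tau'_l)$, so $w_b = 19001\gamma - 4\tau'_l$ and $d_b = 999\gamma + 4\tau'_l$, hence $w_b + d_b = 20000\gamma$. For part~(iv) take $b' = b'_l = (19001\gamma - 4\tau'_l,\, 40000\gamma,\, 1001\gamma - 4\tau'_l)$, so $w_{b'} = 40000\gamma - (19001\gamma - 4\tau'_l) = 20999\gamma + 4\tau'_l$ and $d_{b'} = 1001\gamma - 4\tau'_l$, hence $w_{b'} + d_{b'} = 22000\gamma$. This establishes all four identities.

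Since the argument is pure arithmetic, there is no real obstacle; the mild bookkeeping point worth emphasizing in the writeup is that the identities hold with the exact constants above \emph{independently} of the values $x'_i, y'_j, z'_k, \tau'_l$, which is precisely what makes these quantities useful later (e.g.\ in the ``$\sum_{i=1}^8(d_{\js_i}+w_{\js_i}) = 168000\gamma$'' bound in the proof of the only-if direction of Lemma~\ref{lem:propertyofnicerounds}).
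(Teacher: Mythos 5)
Your proof is correct and is exactly the direct substitution the paper relies on (the paper simply writes ``Can be verified easily''). All four arithmetic checks are right, including the cancellation of the $\pm 4x'_i$ (resp.\ $\pm 4\tau'_l$) terms between width and demand.
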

\begin{proof}
Can be verified easily.
\end{proof}

\begin{lemma} \label{lemma:lemma2new} The following statements hold:
\begin{enumerate}
\item The demands of four jobs in $A\cup B$ (or $A'\cup B'$) sum
to $4000\gamma$ iff they correspond to a triplet.
\item The widths
of two jobs in $A\cup A'\cup B\cup B'$ sum to $40000\gamma$ iff
they are peers.
\end{enumerate}
\end{lemma}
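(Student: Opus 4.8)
The plan is to unwind the definitions of the job demands and widths and reduce everything to \Cref{lemma:woegingerlemma} together with a short separation-of-scales argument. For part~1, I would first record that every job in $A\cup B$ has demand of the form $999\gamma+4v$ for some $v\in\Vs$: indeed $d_{a_{X,i}}=999\gamma+4x'_i$, $d_{a_{Y,j}}=999\gamma+4y'_j$, $d_{a_{Z,k}}=999\gamma+4z'_k$, and $d_{b_l}=999\gamma+4\tau'_l$. Hence four jobs of $A\cup B$ carrying $\Vs$-values $v_1,v_2,v_3,v_4$ have demand sum $3996\gamma+4(v_1+v_2+v_3+v_4)$, which equals $4000\gamma$ precisely when $v_1+v_2+v_3+v_4=\gamma$. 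By \Cref{lemma:woegingerlemma} this holds if and only if the four values are one $x'_i$, one $y'_j$, one $z'_k$, and one $\tau'_l$ with $\tau_l=(x_i,y_j,z_k)$, i.e., exactly when the four jobs are $a_{X,i},a_{Y,j},a_{Z,k},b_l$, which is by definition what it means for them to correspond to the triplet $\tau_l$. For $A'\cup B'$ the computation is identical up to a sign: each such demand equals $1001\gamma-4v$, so a sum of four is $4004\gamma-4(v_1+v_2+v_3+v_4)$, again $4000\gamma$ iff the $\Vs$-values sum to $\gamma$, and \Cref{lemma:woegingerlemma} closes the argument.

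For part~2, I would compute the widths $w_{a_{X,i}}=20000\gamma-4x'_i$, $w_{a'_{X,i}}=20000\gamma+4x'_i$ (and analogously for the $Y$- and $Z$-jobs), $w_{b_l}=19001\gamma-4\tau'_l$, and $w_{b'_l}=20999\gamma+4\tau'_l$; the \emph{if} direction is then immediate, since each of these four peer-pairs has width sum exactly $40000\gamma$. For the \emph{only if} direction I would use that $0<4v<4\gamma$ for every $v\in\Vs$ (note $\tau'_l<\rho^4<\gamma$ and the other $\Vs$-values are smaller still), so the four width families lie in well-separated bands: $w_b\in(18997\gamma,18998\gamma)$, $w_a\in(19999\gamma,20000\gamma)$, $w_{a'}\in(20000\gamma,20001\gamma)$, $w_{b'}\in(21002\gamma,21003\gamma)$. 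Inspecting the possible pairwise band-sums, the only pairs of bands whose sum-interval contains $40000\gamma$ are $\{A,A'\}$ and $\{B,B'\}$; the borderline homogeneous pairs $A+A$ and $A'+A'$ (whose intervals have $40000\gamma$ as an endpoint) are ruled out because $w_{a_1}+w_{a_2}=40000\gamma-4(v_1+v_2)<40000\gamma$ and, symmetrically, two $A'$-widths sum to strictly more than $40000\gamma$. Finally, within the type $\{A,A'\}$ one has $w_a+w_{a'}=40000\gamma+4(v'-v)$, which equals $40000\gamma$ iff $v=v'$; since $x'_i,y'_j,z'_k$ are pairwise distinct (they sit at the scales $\rho,\rho^2,\rho^3$ and $q<\rho$), $v=v'$ forces the two jobs to come from the same element, i.e., to be peers. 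The same reasoning handles $\{B,B'\}$ via the fact that the $\tau'_l$ are pairwise distinct, since $k\rho^3+j\rho^2+i\rho$ is the base-$\rho$ encoding of $(i,j,k)$ when $1\le i,j,k\le q<\rho$.

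The step I expect to be the main obstacle is making the band-separation in part~2 fully rigorous: the bands are separated by gaps of order $\gamma$, but each band has internal spread of order $\gamma/8$ (coming from $4\tau'_l$ for the $b$-jobs and from the largest $z'_k$ for the $a$-jobs), so one has to check carefully that none of the cross-band sums, and neither of the two borderline homogeneous sums $A+A$ and $A'+A'$, can accidentally equal $40000\gamma$. Everything else is routine substitution into the definitions followed by an appeal to \Cref{lemma:woegingerlemma}.
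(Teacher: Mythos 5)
Your proof is correct and takes essentially the same route as the paper's: part~1 reduces to \Cref{lemma:woegingerlemma} after writing each demand in $A\cup B$ as $999\gamma+4v$ (and each in $A'\cup B'$ as $1001\gamma-4v$), and part~2 exploits the distinctness of the $\Vs$-values. For part~2 the paper is terser: it asserts that all widths in $A\cup A'\cup B\cup B'$ are pairwise distinct (using that the $5q$ values in $\Vs$ are distinct together with $x'_i<y'_j<z'_k<\tau'_l$) and immediately concludes that the unique job of width $40000\gamma-w_{\js}$ is the peer of $\js$; you verify the same uniqueness explicitly by band separation followed by $\Vs$-value matching. The step you flag as the main obstacle is not a real gap: since $0<v<\gamma$ for every $v\in\Vs$, each of the four width families lies in an interval of length under $4\gamma$, and these intervals are centered roughly $1000\gamma$ or more apart, so among all ten unordered pairs of families only $\{A,A'\}$ and $\{B,B'\}$ can have a pair sum near $40000\gamma$, and your observations $w_{a_1}+w_{a_2}=40000\gamma-4(v_1+v_2)<40000\gamma$ and, symmetrically, $w_{a'_1}+w_{a'_2}>40000\gamma$ dispose of the two borderline homogeneous cases.
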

\begin{proof}
\begin{enumerate}[(i)]
\item Follows directly from \Cref{lemma:woegingerlemma}.
\item The if part follows from the definition of the job dimensions. For the only if part, note that all the $5q$ integers in $\Vs$ are distinct. Also $x'_i < y'_j < z'_k < \tau'_l$, $\forall 1 \le i,j,k \le q$ and $l \le 2q$. It follows that all job widths are distinct. Thus for any $\js\in A\cup A' \cup B\cup B'$ having width $w_{\js}$, the unique job in $A\cup A' \cup B\cup B'$ having width $40000\gamma - w_{\js}$ is its peer. Hence the claim follows.
\end{enumerate}
\end{proof}

\begin{lemma} \label{lemma:startingandending} $s_{b'}<t_{a}$, $\forall a\in A$,
$b'\in B'$. \end{lemma}
\begin{proof}
Let $a\in A$ and $b'\in B'$. Then $t_{a} \ge 20000\gamma - 4\cdot (q\rho^3+4) \ge 20000\gamma - 4\gamma = 19996\gamma$, since $\gamma = \rho^4 + 15 \ge q\rho^3 + 4$. Also clearly $s_{b'} < 19001\gamma < 19996\gamma$. Thus we have $s_{b'} < t_a$.
\end{proof}

Now let $\Rs$ be a nice round. Since any \rsap\ packing is also a valid \rufp\ packing, it suffices
to prove the required result for \rufp\ packings. From the proof of \Cref{lem:maxnumberofjobs}, it follows that $\Rs$ must contain 4 jobs each from $A\cup B$ and $A'\cup B'$.

We begin by showing that $\Rs$ can contain at most 1 job from
$B$. Suppose at least 2 jobs from $B$ are present. From \Cref{lemma:inequalities}(i)
and (ii), the sum of demands of all the 4 jobs from $A\cup B$ is
then $>2\cdot1001\gamma+2\cdot999\gamma=4000\gamma = c^{*}$, a contradiction.
Now we show that at least one job from $B$ must be present. Suppose
not, then there are 4 jobs from $A$. \Cref{lemma:startingandending}
then implies that no job from $B'$ can be present. Thus there must
be 4 jobs from $A'$. But from \Cref{lemma:inequalities}(iii),
the sum of demands of these jobs from $A'$ would be $>4\cdot1000\gamma=4000\gamma=c^{*}$,
a contradiction. Hence, $\Rs$ must have exactly 1 job from $B$
and thus 3 jobs from $A$.

As shown above, $\Rs$ cannot contain 4 jobs from $A'$ and thus
must have at least 1 job from $B'$. We now show that exactly 1 job
from $B'$ must be present. \Cref{lemma:startingandending} implies
any job in $B'$ must share at least one edge with any job in $A$.
Let $e$ be such a common edge. Therefore, if at least 2 jobs from
$B'$ are present, sum of demands of all jobs inside $\Rs$ over $e$ would be $>3\cdot999\gamma+2\cdot997\gamma>4000\gamma = c^{*}$
(from \Cref{lemma:inequalities}), a contradiction. Thus $\Rs$
has exactly 1 job from $B'$ and thus 3 jobs from $A'$.

Now consider the leftmost edge $e_{1}$. Let the 4 jobs from $A\cup B$
lying over $e_{1}$, from top to bottom, be $\js_{1},\js_{2},\js_{3},\js_{4}$.
Similarly let the 4 jobs from $A'\cup B'$ lying over the rightmost
edge $e_{m}$ be $\js_{5},\js_{6},\js_{7},\js_{8}$ from top to bottom.
Then the following inequalities must hold: 
\begin{align}
\sum_{i=1}^{4}d_{\js_{i}}\le4000\gamma,\qquad\sum_{i=5}^{8}d_{\js_{i}}\le4000\gamma,\qquad\sum_{i=1}^{8}w_{\js_{i}}\le4\cdot40000\gamma=160000\gamma\label{eq: 200}
\end{align}

Adding the three inequalities, we get $\sum_{i=1}^{8}(d_{\js_{i}}+w_{\js_{i}})\le168000\gamma$.
Also \Cref{lemma:lemma1new} implies that $\sum_{i=1}^{8}(d_{\js_{i}}+w_{\js_{i}})=3\cdot20999\gamma+3\cdot21001\gamma+1\cdot20000\gamma+1\cdot22000\gamma=168000\gamma$.
Thus \eqref{eq: 200} must be satisfied with equality. \Cref{lemma:lemma2new}(i)
then implies that the four jobs from $A\cup B$ must correspond to
a triplet. Also since equality holds in \eqref{eq: 200}, \Cref{lemma:lemma2new}(ii)
implies that their corresponding peers must be present. Hence the result follows.

\subsection{Proof of \Cref{theorem:hardnessproofnew-1}}
Let $\opt$ denote either $\opt_{SAP}$ or $\opt_{UFP}$. From \Cref{lem:reduction-1}, $\opt \le 5q - 3\beta(q)$ whenever $\opt_{3DM}\ge \beta(q)$, and from \Cref{lem:reduction-2}, $\opt \ge (5q - 3\beta(q)) + \frac{1}{7}(\beta(q)-\alpha(q))$ whenever $\opt_{3DM} \le \alpha(q)$. Let $\delta_0 = \frac{1}{7}\frac{\beta(q)-\alpha(q)}{5q-3\beta(q)}$. Suppose there exists a polynomial time algorithm $\As$ for \rsap or \rufp and a constant $C$ such that for instances with $\opt > C$, $\As$ returns a packing using at most $(1+\delta_0)\opt$ rounds. Then for any corresponding instance $K$ of \sbmdm, we could distinguish whether $\opt_{3DM}<\alpha (q)$ or $\opt_{3DM}\ge \beta(q)$ in polynomial time, contradicting \Cref{theorem:chlebikhardness_new}. For $\alpha(q)= \lfloor 0.9690082645q \rfloor$ and $\beta(q) = \lceil 0.979338843q \rceil$, a simple calculation will show that $\delta_0 > 1/1398$.

\subsection{Proof of Lemma \ref{lem:normal}}
\label{sec:normalizedpacking}
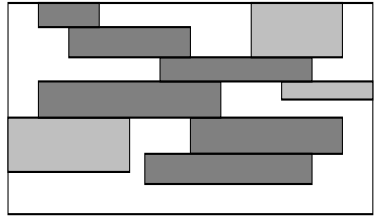
\begin{figure}[h]
\centering
\captionsetup{justification=centering}
\scalebox{0.5}[1]{
\begin{tikzpicture}[scale=0.8, transform shape]
\draw[thick] (0,0) rectangle (12,3.5);
\draw[thick, fill=lightgray] (0,0.7) rectangle (4,1.6);
\draw[thick, fill=gray] (6,1) rectangle (11,1.6);
\draw[thick, fill=gray] (1,1.6) rectangle (7,2.2);
\draw[thick, fill=lightgray] (9,1.9) rectangle (12,2.2);
\draw[thick, fill=gray] (5,2.2) rectangle (10,2.6);
\draw[thick, fill=gray] (2,2.6) rectangle (6,3.1);
\draw[thick, fill=gray] (1,3.1) rectangle (3,3.5);
\draw[thick, fill=lightgray] (8,2.6) rectangle (11,3.5);
\draw[thick, fill=gray] (4.5,0.5) rectangle (10,1);




\end{tikzpicture}}
\caption{A normalized packing}
\label{fig:chain}
\end{figure}
We sort the jobs in non-increasing order of their $h_{\js}+d_{\js}$
values and push them up until they either touch the capacity profile
or the bottom of some other job.

\subsection{Proof of Lemma \ref{lemma:maindynamicprogramming}}
We first consider \rsap. First we guess the value of $\opt_{SAP}$ as $\kappa$, where $\kappa \in \{1, 2, \dots, n\}$. Each DP cell consists of the following  $(\kappa +2)$  attributes. 
\begin{itemize}
  \item an edge $e \in E$,
  \item a function $f_{e}$ that assigns a round to each job passing through $e$,
  \item $\kappa$ functions $g^{1}_{e}$, $g^{2}_{e}$, \ldots, $g^{\kappa}_{e}$, one for each round, where $g^{i}_{e}$ assigns the vertical location $h_{\js}$ to each job $\js$ assigned to the $i^{\text{th}}$ round by $f_{e}$, $\forall i \in \{1,2, \ldots, \kappa\}$.  
\end{itemize}
 
For each edge $e_k$, let $J^{e_k} := \{\js \in J' \mid s_{\js}<k\}$, i.e., the set of jobs that either pass through or end before $e_k$. We define DP($e, f_{e}, g^{1}_{e}, g^{2}_{e}, \ldots, g^{\kappa}_{e}$) = 1 if and only if there exists a valid packing of $J^e$ using $\kappa$ rounds such that the positions of all jobs in $J^e$ are exactly the same as those assigned by the functions $f_{e}$, $g^{1}_{e}$, $g^{2}_{e}$, \ldots, $g^{\kappa}_{e}$. Thus the recurrence for the DP is given by DP($e_j, f_{e_j}, g^{1}_{e_j}, g^{2}_{e_j}, \ldots, g^{\kappa}_{e_j}$) = 1 if there exist functions $f_{e_{j-1}}$, $g^{1}_{e_{j-1}}$, $g^{2}_{e_{j-1}}$, \ldots, $g^{\kappa}_{e_{j-1}}$ such that DP($e_{j-1}, f_{e_{j-1}}, g^{1}_{e_{j-1}}, g^{2}_{e_{j-1}}, \ldots, g^{\kappa}_{e_{j-1}}$) = 1 and $(f_{e_j}, g^{1}_{e_{j}}, g^{2}_{e_{j}}, \ldots, g^{\kappa}_{e_{j}})$ and $(f_{e_{j-1}}$, $g^{1}_{e_{j-1}}$, $g^{2}_{e_{j-1}}$, \ldots, $g^{\kappa}_{e_{j-1}})$ are \textit{consistent} with each other, $\forall j\in \{2,\ldots,m\}$. Here by consistency, we mean that any job $\js$ that passes through both $e_j$ and $e_{j-1}$ must be assigned the same round (say the $i^{\text{th}}$ round) by $f_{e_j}$ and $f_{e_{j-1}}$ and the same value of $h_{\js}$ by $g^{i}_{e_{j}}$ and $g^{i}_{e_{j-1}}$.

Finally, we bound the running time. Since each job can be assigned to any of the $\kappa$ rounds and have at most $|\Hs'|$ distinct positions inside a round, the number of DP cells per edge is bounded by $(\kappa |\Hs'|)^{\omega}$. Also determining each DP entry requires visiting all DP cells corresponding to the edge on the immediate left of the current edge. Thus the time required to determine whether $J$ can be packed using $\kappa$ rounds is bounded by $m\cdot (\kappa |\Hs'|)^{\omega}\cdot (\kappa |\Hs'|)^{\omega} \le (n|\Hs'|)^{O(\omega)}$. 

For \rufp, the positions of the jobs inside a round does not matter. Thus we simply have a 2-cell DP, consisting of an edge $e$ and a function $f_e$ that allocates a round to each job passing through $e$. The recurrence is given by DP($e_j,f_{e_j}$) = 1 if there exists $f_{e_{j-1}}$ such that DP($e_{j-1},f_{e_{j-1}}$) = 1 and $f_{e_j}$ and $f_{e_{j-1}}$ are consistent with each other. Clearly the running time is bounded by $n^{O(\omega)}$.



\subsection{Proof of \Cref{lem:Jpack}}
Consider any edge $e\in E$ inside any round $\Rs$. Let $e_L$ and $e_R$ be the first edges on the left and right of $e$ respectively that have capacity at most $1/\delta^{i+1}$. Observe that any job in $J^{(i)}$ must pass through at least one edge having capacity at most $1/\delta^{i+1}$. Thus the load on edge $e$ inside round $\Rs$ is at most the sum of the loads on edges $e_L$ and $e_R$, which is at most $2/\delta^{i+1}$. Also since the bottleneck capacity of any job in $J^{(i)}$ is at least $1/\delta^i$, any edge with capacity less than $1/\delta^i$ can be contracted (i.e. its capacity can be made 0).

\subsection{Proof of \Cref{lem:combine}}
\label{sec:proofofloglog}

Let $\Rs^{(2k)}$ be the computed round for each set $J^{(2k)}$, for $k\in \mathbb{N}$. Suppose we are allowed a resource augmentation of $1+\gamma$, for some $\gamma > 0$. Now since the bottleneck capacity of any job in $J^{(i)}$ is at least $\frac{1}{\delta^i}$, on resource augmentation, the capacity of any such edge would increase by at least $\frac{\gamma}{\delta^i}$. Given the packing inside round $\Rs^i$, we shift up the height $h_{\js}$ of each job $\js$ by $\frac{\gamma}{\delta^i}$ and place it in a round $\Rs$ of the original capacity profile.

Now the jobs from $J^{(i-2)}$ do not go above $\frac{2}{\delta^{i-1}}+\frac{\gamma}{\delta^{i-2}}$. We choose $\gamma$ such that $\frac{\gamma}{\delta^i} \ge \frac{2}{\delta^{i-1}}+\frac{\gamma}{\delta^{i-2}}$, from which we get $\gamma \ge \frac{2\delta}{1-\delta^2}=O(\delta)$. Thus one round from each set $J^{(2k)}$ can be packed together. Similarly one round from each set $J^{(2k+1)}$ can be packed together.

\subsection{Proof of \Cref{lemma:factor4}}
\label{sec:figforrsap}
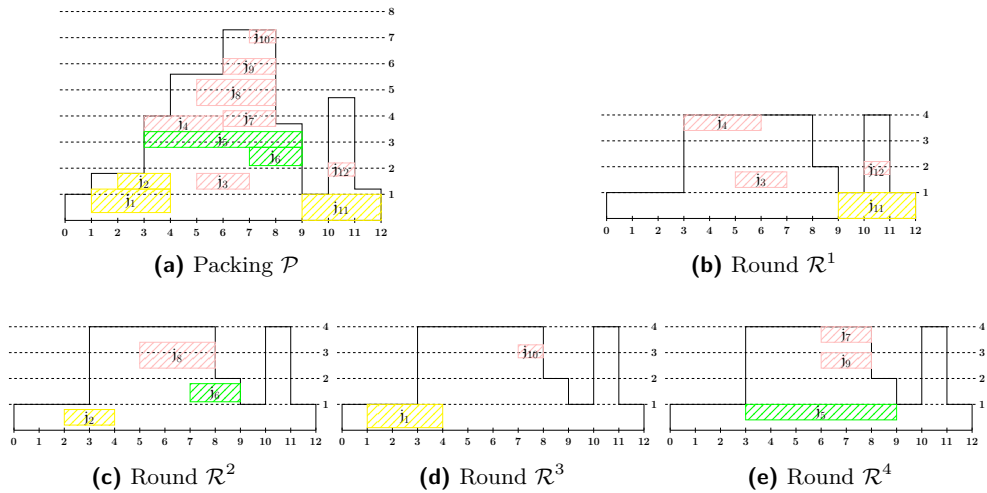
\begin{figure}[h]
\captionsetup{justification=centering} \captionsetup[subfigure]{justification=centering}
\begin{subfigure}[b]{.5\textwidth} \centering \resizebox{4.5cm}{3.1cm}{ \begin{tikzpicture}
			\draw[thick] (1,0) -- (13,0);
			\foreach \x in {1,...,13}
				\draw[thick] (\x,0.1) -- (\x,-0.1);
			\foreach \x in {0,...,12}
				\draw[thick] (\x+1,-0.45) node {\Large \textbf{\x}};
				
			\draw[thick] (1,0) -- (1,1) -- (2,1) -- (2,1.8) -- (4,1.8) -- (4,4) -- (5,4) -- (5,5.6) -- (7,5.6) -- (7,7.3) -- (9,7.3) -- (9,3.7) -- (10,3.7) -- (10,1) -- (11,1) -- (11,4.7) -- (12,4.7) -- (12,1.2) -- (13,1.2) -- (13,0);
			\foreach \x in {1,...,8}
				\draw[dashed] (0.8,\x) -- (13.2,\x);
				\foreach \x in {1,...,8}
				\draw[thick] (13.4,\x) node {\Large \textbf{\x}};
			
			\draw[thick, color=yellow, pattern=north east lines, pattern color = yellow] (2,0.3) rectangle (5,1.2);
			\draw[thick] (3.5,0.75) node {\LARGE \textbf{$\js_1$}};
			\draw[thick, color=yellow, pattern=north east lines, pattern color = yellow] (3,1.2) rectangle (5,1.8);
			\draw[thick] (4,1.5) node {\LARGE \textbf{$\js_2$}};
			\draw[thick, color=pink, pattern=north east lines, pattern color = pink] (6,1.2) rectangle (8,1.8);
			\draw[thick] (7,1.5) node {\LARGE \textbf{$\js_3$}};
			\draw[thick, color=pink, pattern=north east lines, pattern color = pink] (4,3.4) rectangle (7,4);
			\draw[thick] (5.5,3.7) node {\LARGE \textbf{$\js_4$}};
			\draw[thick, color=green, pattern=north east lines, pattern color = green] (4,2.8) rectangle (10,3.4);
			\draw[thick] (7,3.1) node {\LARGE \textbf{$\js_5$}};
			\draw[thick, color=green, pattern=north east lines, pattern color = green] (8,2.1) rectangle (10,2.8);
			\draw[thick] (9,2.45) node {\LARGE \textbf{$\js_6$}};
			\draw[thick, color=pink, pattern=north east lines, pattern color = pink] (7,3.6) rectangle (9,4.2);
			\draw[thick] (8,3.9) node {\LARGE \textbf{$\js_7$}};
			\draw[thick, color=pink, pattern=north east lines, pattern color = pink] (6,4.4) rectangle (9,5.4);
			\draw[thick] (7.5,4.9) node {\LARGE \textbf{$\js_8$}};
			\draw[thick, color=pink, pattern=north east lines, pattern color = pink] (7,5.6) rectangle (9,6.2);
			\draw[thick] (8,5.9) node {\LARGE \textbf{$\js_9$}};
			\draw[thick, color=pink, pattern=north east lines, pattern color = pink] (8,6.8) rectangle (9,7.3);
			\draw[thick] (8.5,7.05) node {\LARGE \textbf{$\js_{10}$}};
			\draw[thick, color=yellow, pattern=north east lines, pattern color = yellow] (10,0) rectangle (13,1);
			\draw[thick] (11.5,0.5) node {\LARGE \textbf{$\js_{11}$}};
			\draw[thick, color=pink, pattern=north east lines, pattern color = pink] (11,1.7) rectangle (12,2.2);
			\draw[thick] (11.5,1.95) node {\LARGE \textbf{$\js_{12}$}};
		\end{tikzpicture}} \caption{Packing $\Ps$}
\label{fig:zbend1} \end{subfigure} \begin{subfigure}[b]{.5\textwidth}
\centering \resizebox{4.4cm}{1.7cm}{ \begin{tikzpicture}
			\draw[thick] (1,0) -- (13,0);
			\foreach \x in {1,...,13}
				\draw[thick] (\x,0.1) -- (\x,-0.1);
			\foreach \x in {0,...,12}
				\draw[thick] (\x+1,-0.45) node {\Large \textbf{\x}};
				
				\draw[thick] (1,0) -- (1,1) -- (4,1) -- (4,4) -- (9,4) -- (9,2) -- (10,2) -- (10,1) -- (11,1) -- (11,4) -- (12,4) -- (12,1) -- (13,1) -- (13,0);
				\foreach \x in {1,...,4}
				\draw[dashed] (0.8,\x) -- (13.2,\x);
				\foreach \x in {1,...,4}
				\draw[thick] (13.4,\x) node {\Large \textbf{\x}};
				
				\draw[thick, color=pink, pattern=north east lines, pattern color = pink] (6,1.2) rectangle (8,1.8);
				\draw[thick] (7,1.5) node {\LARGE \textbf{$\js_3$}};
				\draw[thick, color=pink, pattern=north east lines, pattern color = pink] (4,3.4) rectangle (7,4);
			\draw[thick] (5.5,3.7) node {\LARGE \textbf{$\js_4$}};
				\draw[thick, color=yellow, pattern=north east lines, pattern color = yellow] (10,0) rectangle (13,1);
				\draw[thick] (11.5,0.5) node {\LARGE \textbf{$\js_{11}$}};
			\draw[thick, color=pink, pattern=north east lines, pattern color = pink] (11,1.7) rectangle (12,2.2);
			\draw[thick] (11.5,1.9) node {\LARGE \textbf{$\js_{12}$}};
		\end{tikzpicture}} \caption{Round $\Rs^{1}$}
\label{fig:zbend1} \end{subfigure}

\vspace{0.5cm}
 \centering \begin{subfigure}[b]{.3\textwidth} 
 \resizebox{4.3cm}{1.7cm}{ \begin{tikzpicture}
			\draw[thick] (1,0) -- (13,0);
			\foreach \x in {1,...,13}
				\draw[thick] (\x,0.1) -- (\x,-0.1);
			\foreach \x in {0,...,12}
				\draw[thick] (\x+1,-0.45) node {\Large \textbf{\x}};
				
				\draw[thick] (1,0) -- (1,1) -- (4,1) -- (4,4) -- (9,4) -- (9,2) -- (10,2) -- (10,1) -- (11,1) -- (11,4) -- (12,4) -- (12,1) -- (13,1) -- (13,0);
				\foreach \x in {1,...,4}
				\draw[dashed] (0.8,\x) -- (13.2,\x);
				\foreach \x in {1,...,4}
				\draw[thick] (13.4,\x) node {\Large \textbf{\x}};
				
				\draw[thick, color=yellow, pattern=north east lines, pattern color = yellow] (3,0.2) rectangle (5,0.8);
				\draw[thick] (4,0.5) node {\LARGE \textbf{$\js_2$}};
				\draw[thick, color=green, pattern=north east lines, pattern color = green] (8,1.1) rectangle (10,1.8);
				\draw[thick] (9,1.45) node {\LARGE \textbf{$\js_6$}};
				\draw[thick, color=pink, pattern=north east lines, pattern color = pink] (6,2.4) rectangle (9,3.4);
				\draw[thick] (7.5,2.9) node {\LARGE \textbf{$\js_8$}};
		\end{tikzpicture}} \caption{Round $\Rs^{2}$}
\label{fig:zbend1} \end{subfigure} 
 \begin{subfigure}[b]{.3\textwidth} \centering \resizebox{4.3cm}{1.7cm}{ \begin{tikzpicture}
			\draw[thick] (1,0) -- (13,0);
			\foreach \x in {1,...,13}
				\draw[thick] (\x,0.1) -- (\x,-0.1);
			\foreach \x in {0,...,12}
				\draw[thick] (\x+1,-0.45) node {\Large \textbf{\x}};
				
				\draw[thick] (1,0) -- (1,1) -- (4,1) -- (4,4) -- (9,4) -- (9,2) -- (10,2) -- (10,1) -- (11,1) -- (11,4) -- (12,4) -- (12,1) -- (13,1) -- (13,0);
				\foreach \x in {1,...,4}
				\draw[dashed] (0.8,\x) -- (13.2,\x);
				\foreach \x in {1,...,4}
				\draw[thick] (13.4,\x) node {\Large \textbf{\x}};
				
				\draw[thick, color=yellow, pattern=north east lines, pattern color = yellow] (2,0.1) rectangle (5,1);
				\draw[thick] (3.5,0.55) node {\LARGE \textbf{$\js_1$}};
				
				\draw[thick, color=pink, pattern=north east lines, pattern color = pink] (8,2.8) rectangle (9,3.3);
				\draw[thick] (8.5,3.05) node {\LARGE \textbf{$\js_{10}$}};
	
		\end{tikzpicture}} \caption{Round $\Rs^{3}$}
\label{fig:zbend1} \end{subfigure} 
 \begin{subfigure}[b]{.3\textwidth} \centering \resizebox{4.3cm}{1.7cm}{ \begin{tikzpicture}
			\draw[thick] (1,0) -- (13,0);
			\foreach \x in {1,...,13}
				\draw[thick] (\x,0.1) -- (\x,-0.1);
			\foreach \x in {0,...,12}
				\draw[thick] (\x+1,-0.45) node {\Large \textbf{\x}};
				
				\draw[thick] (1,0) -- (1,1) -- (4,1) -- (4,4) -- (9,4) -- (9,2) -- (10,2) -- (10,1) -- (11,1) -- (11,4) -- (12,4) -- (12,1) -- (13,1) -- (13,0);
				\foreach \x in {1,...,4}
				\draw[dashed] (0.8,\x) -- (13.2,\x);
				\foreach \x in {1,...,4}
				\draw[thick] (13.4,\x) node {\Large \textbf{\x}};
				
				\draw[thick, color=green, pattern=north east lines, pattern color = green] (4,0.4) rectangle (10,1);
				\draw[thick] (7,0.7) node {\LARGE \textbf{$\js_5$}};
				\draw[thick, color=pink, pattern=north east lines, pattern color = pink] (7,2.4) rectangle (9,3);
				\draw[thick] (8,2.7) node {\LARGE \textbf{$\js_9$}};
				\draw[thick, color=pink, pattern=north east lines, pattern color = pink] (7,3.4) rectangle (9,4);
				\draw[thick] (8,3.7) node {\LARGE \textbf{$\js_7$}};
	
		\end{tikzpicture}} \caption{Round $\Rs^{4}$}
\label{fig:zbend1} \end{subfigure} \caption{Figure for \Cref{lemma:factor4}. The yellow, green and pink jobs
belong to $T^{(0)}$, $T^{(1)}$ and $T^{(2)}$, respectively.}
\label{closedcor2_newnew} 
\end{figure}

Given a valid \rsap\ packing $\Ps$ of a set of jobs $J'$ for the
given edge capacities $(c_{e})_{e\in E}$, we construct a valid packing
of $J'$ into 4 rounds $\Rs^{1},\Rs^{2},\Rs^{3},\Rs^{4},$ under profile
$(c'_{e})_{e\in E}$ such that no job is sliced by $\mathcal{L}$.
Let $J^{(i)}:=\{\js\in T\mid2^{i}\le\mathfrak{b}_{\js}<2^{i+1}\}$,
$\forall i\in\{0,1,\ldots,\lfloor\log c_{\max}\rfloor\}$. Note that
jobs in $J^{(i)}$ will have bottleneck capacity equal to $2^{i}$ in
$(c'_{e})_{e\in E}$. For $i\in\{1,\ldots,\lfloor\log c_{\max}\rfloor\}$,
we pack $J^{(i)}$ as follows: 
\begin{itemize}
\item Place each job $\js$ lying completely below $\ell_{2^{i}}$ into
$\Rs^{1}$ at height $h'_{\js}:=h_{\js}$.
(Note that here $h_{\js}$ denotes the height of the bottom edge
of $\js$ in $\Ps$.) 
\item Place each job $\js$ lying completely between $\ell_{2^{i}}$ and
$\ell_{3\cdot2^{i-1}}$ into $\Rs^{2}$ with $h'_{\js}:=h_{\js}-2^{i-1}$. 
\item Place each job $\js$ lying completely between $\ell_{3\cdot2^{i-1}}$
and $\ell_{2^{i+1}}$ into $\Rs^{3}$ with $h'_{\js}:=h_{\js}-2^{i}$. 
\item Place each job $\js$ sliced by $\ell_{2^{i}}$ into $\Rs^{4}$ with
$h'_{\js}:=2^{i}-d_{\js}$. 
\item Place each job $\js$ sliced by $\ell_{3\cdot2^{i-1}}$ into $\Rs^{4}$
with $h'_{\js}:=3\cdot2^{i-2}-d_{\js}$ for $i\ge2$ and $h'_{\js}:=1-d_{\js}$
for $i=1$. 
\end{itemize}
Note that the jobs allocated to $\Rs^{1}$ are placed at the same
vertical height as in $\Ps$. For $\Rs^{2}$, we have maintained the
invariant that all jobs placed between $\ell_{2^{i-1}}$ and $\ell_{2^{i}}$
lied completely between $\ell_{2^{i}}$ and $\ell_{3.2^{i-1}}$ and
had bottleneck capacity in the interval $[2^{i},2^{i+1})$ in $\Ps$.
Similarly for $\Rs_{3}$, jobs placed between $\ell_{2^{i-1}}$ and
$\ell_{2^{i}}$ lied completely between $\ell_{3.2^{i-1}}$ and $\ell_{2^{i+1}}$
and had bottleneck capacity in the interval $[2^{i},2^{i+1})$ in
$\Ps$. Observe that the region below $\ell_{1}$ is currently empty
in both $\Rs^{2}$ and $\Rs^{3}$ (this will be utilized for packing
jobs in $J^{(0)}$). Finally, $\Rs^{4}$ contains the jobs sliced by
$\ell_{2^{i}}$ and $\ell_{3\cdot2^{i-1}}$.

As we have ensured that jobs from $J^{(i)}$ have been placed only in
the region between the slicing lines $\ell_{2^{i-1}}$ and $\ell_{2^{i}}$
in $\Rs^{2}$ and $\Rs^{3}$, there is no overlap between jobs from
different $J^{(i)}$'s in $\Rs^{2}$ and $\Rs^{3}$. Also since the
demand of each job is at most 1, there is no overlapping of jobs from
different $J^{(i)}$'s inside $\Rs^{4}$. Now we pack the jobs in $J^{(0)}$
(jobs with bottleneck capacity less than 2), as follows: 
\begin{itemize}
\item Place each job $\js$ lying completely below $\ell_{1}$ into $\Rs^{1}$
with $h'_{\js}:=h_{\js}$. 
\item Place each job $\js$ lying completely between $\ell_{1}$ and $\ell_{2}$
into $\Rs^{2}$ with $h'_{\js}:=h_{\js}-1$. 
\item Place each job $\js$ sliced by $\ell_{1}$ into $\Rs^{3}$ with $h'_{\js}:=1-d_{\js}$. 
\end{itemize}
We continue to maintain the same invariant for $\Rs^{1}$. For $\Rs^{2}$
and $\Rs^{3}$, we used the empty region below $\ell_{1}$ to pack
jobs from $J^{(0)}$. Note that no job is sliced by $\Ls$ in the packing
obtained. Hence the lemma holds.

\subsection{Proof of Lemma \ref{lemma:factor2loss}}

For $i=0$, $\opt^{(0)} \le \opt'_{SAP}$ from definition and we are done. For $i \ge 1$, given a valid packing under uniform capacity profiles of height $2^i$ with no job sliced by $\mathcal{L}$ (call it type 1), we obtain a valid packing under uniform capacity profiles of height $2^{i-1}$ (type 2) using at most twice the number of rounds, thus proving that $\opt^{(i)}\le 2\cdot \opt'_{SAP}$. For each round of type 1, introduce 2 rounds of type 2, denoted by $\Rs^1$ and $\Rs^2$. Note that since $\ell_{2^{i-1}} \in \mathcal{L}$, no job is sliced by $\ell_{2^{i-1}}$ in any round of type 1. We pack all jobs lying above $\ell_{2^{i-1}}$ into $\Rs^1$ and all jobs below $\ell_{2^{i-1}}$ into $\Rs^2$.

\subsection{Proof of \Cref{lemma:mainpackinglemma}}

Let $\Rs^i \in \Gamma_i$ be the round chosen for each $i$. The crucial observation is that for any $\js \in J_i$, $P_{\js}$ only contains edges of capacity at least $2^i$. Thus each $\Rs^i$, $1 \le i \le \log c'_{\max}$ can be placed between the slicing lines $\ell_{2^{i-1}}$ and $\ell_{2^{i}}$ under the profile $I'$. Finally we place $\Rs^0$ between $\ell_0$ and $\ell_1$. Observe that no job is sliced by $\mathcal{L}$ in this packing generated.    

\subsection{Proof of \Cref{lem:ufp4r}}

We round up the demand of each job in $J_{\mathrm{large}}$ to $1$
and round down the capacity of each edge to the nearest integer. Thus
the congestion increases by at most a factor of 4. Hence, by applying
\Cref{theorem:nomikostheorem}, jobs in $J_{\mathrm{large}}$ can be packed into
at most $4r$ bins. 

\section{\rufp\ on Trees}
\label{sec:roundufpontree}
\subsection{Uniform capacities}
\label{subsec:roundtreeuniform}
First, we consider the \rtree\ problem where all edges of $G_{\text{tree}}$ have the same capacity $\cs$. Let $\Zs_L := \{\js \in J\mid d_{\js} > \cs/2 \}$ and $\Zs_S := J\setminus \Zs_L$.

For the jobs in $\Zs_L$, any two jobs sharing at least one common edge must be placed in different \textit{tree-rounds}. Thus the problem reduces to the \pcolor\ problem on trees and we have the following result due to Erlebach and Jansen \cite{erlebach2001complexity} (for asymptotic approximation ratio) and Raghavan and Upfal \cite{raghavan1994efficient} (for absolute approximation ratio).


\begin{lemma}
\label{lemma:almostopt}
There exists a polynomial time asymptotic (resp. absolute) 1.1- (resp. 1.5-) approximation algorithm for packing jobs in $\Zs_L$.
\end{lemma}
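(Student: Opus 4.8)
The plan is to observe that packing $\Zs_L$ into the minimum number of tree-rounds is precisely the \pcolor\ problem on $G_{\text{tree}}$ for the path family $\{P_{\js}\mid\js\in\Zs_L\}$. First I would verify this equivalence. Since every $\js\in\Zs_L$ has $d_{\js}>\cs/2$, any two jobs $\js,\js'\in\Zs_L$ whose paths share a common edge $e$ would put load $d_{\js}+d_{\js'}>\cs$ on $e$, so they must lie in different rounds; conversely, any set $S\subseteq\Zs_L$ of jobs with pairwise edge-disjoint paths forms a valid round, because each edge then carries at most one job and $d_{\js}\le\cs$ for every feasible job $\js$. Hence a partition of $\Zs_L$ into $k$ valid tree-rounds is exactly a proper $k$-colouring of the paths $\{P_{\js}\}$ under the rule ``two paths conflict iff they share an edge'', and the optimal number of rounds for $\Zs_L$ equals the minimum number of such colours.

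Next I would record the only lower bound needed. Let $\lambda$ be the maximum, over edges $e$ of $G_{\text{tree}}$, of the number of jobs of $\Zs_L$ whose path contains $e$; this $\lambda$ is exactly the ``maximum load'' parameter of \pcolor\ and is computable in polynomial time. The $\lambda$ paths through a busiest edge pairwise conflict, hence they require $\lambda$ distinct rounds, and therefore the optimum for $\Zs_L$ is at least $\lambda$.

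Finally I would invoke the known \pcolor\ algorithms on undirected trees as black boxes. Raghavan and Upfal \cite{raghavan1994efficient} give a polynomial-time algorithm colouring the paths with at most $\lceil\tfrac{3}{2}\lambda\rceil$ colours; together with $\opt\ge\lambda$ this yields the claimed absolute $1.5$-approximation. Erlebach and Jansen \cite{erlebach2001complexity} give a polynomial-time algorithm using at most $1.1\,\lambda+O(1)$ colours; since $\lambda\le\opt$, this produces at most $1.1\,\opt+O(1)$ rounds, i.e.\ an asymptotic $1.1$-approximation. Turning each colour class into one tree-round completes the argument.

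As for difficulty: there is essentially no real obstacle — the statement is a dictionary translation plus a citation. The only points that need care are checking both directions of the equivalence ``valid round $\iff$ pairwise edge-disjoint paths'' (the forward direction uses $d_{\js}>\cs/2$ crucially, the reverse uses $d_{\js}\le\cs$), and making sure the load parameter $\lambda$ used above coincides with the parameter against which the cited path-colouring guarantees are stated, so that the resulting approximation ratios are measured against $\opt$ rather than against $\lambda$.
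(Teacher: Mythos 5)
Your proposal is correct and matches the paper's approach: the paper makes exactly the same observation (jobs in $\Zs_L$ conflict iff their paths share an edge, since $d_{\js}>\cs/2$), reduces to \pcolor\ on trees, and cites the same two results of Erlebach--Jansen and Raghavan--Upfal. You simply spell out the two directions of the reduction and the role of the load lower bound $\lambda$, which the paper leaves implicit.
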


Now we consider packing of jobs in  $\Zs_S$.  First, we fix any vertex $v_{\text{root}}$ as the root of $G_{\text{tree}}$. For any $\js \in J$, let $\theta_{\js}$ denote the least common ancestor of $v_{s_{\js}}$ and $v_{t_{\js}}$. For any two vertices $u,v \in V(G_{\text{tree}})$, let $P_{u-v}$ denote the unique path starting at $u$ and ending at $v$. The \textit{level} of a vertex $v\in G_{\text{tree}}$ is defined as the number of edges present in $P_{v_{\text{root}}-v}$. We sort the jobs in $\Zs_S$ in non-decreasing order of the levels of $\theta_{\js}$ and apply the \ff\ algorithm, i.e., we assign $\js$ to the first tree-round in which it can be placed without violating the edge-capacities. Let $\Gamma$ denote the set of tree-rounds used.

\begin{lemma}
\label{lemma:badfactorfour}
$|\Gamma| \le 4r$.
\end{lemma}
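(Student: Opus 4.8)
The plan is to show that \ff\ never opens tree-round $4r+1$, following the familiar First-Fit analysis for UFP adapted to trees. Fix an arbitrary job $\js\in\Zs_S$ and suppose \ff\ assigns it to tree-round $t$; we may assume $t\ge 2$, the case $t=1$ being trivial. For each $i\in\{1,\dots,t-1\}$, \ff\ could not place $\js$ in tree-round $i$, so there is an edge $e_i\in P_{\js}$ such that the total demand of the jobs already assigned to tree-round $i$ that pass through $e_i$ exceeds $\cs-d_{\js}\ge\cs/2$, using $d_{\js}\le\cs/2$. Split $P_{\js}$ at $\theta_{\js}$ into the two ascending subpaths $P^-:=P_{v_{s_{\js}}-\theta_{\js}}$ and $P^+:=P_{v_{t_{\js}}-\theta_{\js}}$. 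Each $e_i$ lies on exactly one of them, so by pigeonhole at least $\lceil(t-1)/2\rceil$ of the indices $i$ have $e_i\in P^-$; assume this is the case (the argument for $P^+$ is symmetric).

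The key step is to replace the round-dependent edges $e_i$ by a single edge $e$: let $e$ be the edge among $\{e_i : e_i\in P^-\}$ that lies closest to $\theta_{\js}$ along $P^-$. I claim that for every index $i$ with $e_i\in P^-$, the total demand of the jobs in tree-round $i$ passing through $e$ also exceeds $\cs/2$. Indeed, any job $\js'$ assigned to tree-round $i$ with $e_i\in P_{\js'}$ was processed before $\js$, so $\mathrm{level}(\theta_{\js'})\le\mathrm{level}(\theta_{\js})$; since $\theta_{\js'}$ and $\theta_{\js}$ are both ancestors of the endpoints of $e_i$, they are comparable, and hence $\theta_{\js'}$ is an ancestor of (or equal to) $\theta_{\js}$. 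This forces $P_{\js'}$ to contain the whole ascending segment of $P^-$ from $e_i$ up to $\theta_{\js}$, and in particular to contain $e$ by the choice of $e$. Thus every job witnessing the overload of $e_i$ in tree-round $i$ also passes through $e$, which proves the claim.

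Since every job lies in exactly one tree-round and loads only grow over time, summing the per-round loads of $e$ over the $\ge\lceil(t-1)/2\rceil$ rounds just discussed gives $l_e>\lceil(t-1)/2\rceil\cdot\cs/2\ge(t-1)\cs/4$, so $r\ge r_e=\lceil l_e/\cs\rceil>(t-1)/4$ and therefore $t\le 4r$. As $\js$ was arbitrary, $|\Gamma|\le 4r$. The step I expect to be the main obstacle is the consolidation claim: one must argue carefully, using the tree structure, that because $\theta_{\js'}$ sits no deeper than $\theta_{\js}$ the path $P_{\js'}$ cannot ``turn'' strictly between $e_i$ and $\theta_{\js}$ and therefore must run straight through $e$; this is exactly where the ordering by $\mathrm{level}(\theta_{\js})$ is used. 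Everything else reduces to elementary counting of demands over the single edge $e$.
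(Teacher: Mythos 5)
Your proof is correct and follows essentially the same approach as the paper: both rely on the ordering by $\mathrm{level}(\theta_{\js})$ to argue that any blocking edge on $P_{\js}$ can be pushed upward to an edge incident to $\theta_{\js}$ (the paper states this as the observation that only $e^1_{\js}$ and $e^2_{\js}$ need to be checked when placing $\js$), and then bound the accumulated load by the congestion. The paper tracks both top edges at once and sums $l_{e^1_{\js}}+l_{e^2_{\js}}\le 2L$, whereas you pigeonhole onto one side and consolidate to a single edge with $l_e\le L$; the resulting bound of $4r$ is the same.
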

\begin{proof}
For any job $\js$, let $e^{1}_{\js}$ and $e^{2}_{\js}$ denote the first edges on the paths $P_{\theta_{\js}-v_{s_{\js}}}$ and $P_{\theta_{\js}-v_{t_{\js}}}$, respectively. For any tree-round $\Rs \in \Gamma$ and edge $e\in E(G_{\text{tree}})$, let $l^{\Rs}_e$ denote the sum of demands of all jobs assigned to $\Rs$ that pass through $e$.  

Let $\Rs^1, \Rs^2, \ldots \Rs^{|\Gamma|}$ be the tree-rounds used to pack $\Zs_S$. Observe that only $l^{\Rs^i}_{e^{1}_{\js}}$ and $l^{\Rs^i}_{e^{2}_{\js}}$, $\forall 1\le i \le |\Gamma|$, need to be considered while determining the tree-round for $\js$, as we are considering jobs in $\Zs_S$ in non-decreasing order of the levels of $\theta_{\js}$.
Let $\js_{\text{f}}$ be the first job for which $\Rs^{|\Gamma|}$ is opened for the first time. Clearly then at least one of $l^{\Rs^i}_{e^{1}_{\js_{\text{f}}}}$ or $l^{\Rs^i}_{e^{2}_{\js_{\text{f}}}}$ must exceed $\cs/2$, $\forall i = 1,2,\ldots,|\Gamma|-1$. Thus $\frac{\cs}{2} < l^{\Rs^i}_{e^{1}_{\js_{\text{f}}}} + l^{\Rs^i}_{e^{2}_{\js_{\text{f}}}}$. Summing over all $i$ from 1 to $|\Gamma| -1$, we get $\frac{\cs}{2}(|\Gamma| -1) < l_{e^{1}_{\js_{\text{f}}}} + l_{e^{2}_{\js_{\text{f}}}} \le 2L$. Thus, $|\Gamma| < 4r+1$. Since. $|\Gamma|$ is an integer, we get $|\Gamma| \le 4r$.     
\end{proof}

Combining \Cref{lemma:almostopt} and \Cref{lemma:badfactorfour}, we have the following theorem.

\begin{theorem}
There exists a polynomial-time asymptotic (resp. absolute) 5.1- (resp. 5.5-) approximation algorithm for \rtree~with uniform edge capacities.
\end{theorem}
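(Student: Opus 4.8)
The plan is to combine the two structural lemmas already established for the two job classes. Recall we split $J = \Zs_L \,\dot{\cup}\, \Zs_S$ with $\Zs_L := \{\js \in J \mid d_{\js} > \cs/2\}$, pack each class into its own collection of tree-rounds, and return the union of the two collections as the final solution. Since restricting any feasible packing of $J$ to a subset $J' \subseteq J$ yields a feasible packing of $J'$, the optimal number of tree-rounds for $\Zs_L$ (and likewise for $\Zs_S$) is at most $\opt$, the optimum for the whole instance; this is what lets us charge both parts against the same quantity $\opt$.

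First I would handle $\Zs_L$. Because each job in $\Zs_L$ uses more than half the capacity of every edge on its path, at most one job of $\Zs_L$ can lie on any edge in a given tree-round; hence a tree-round is exactly an independent set in the graph whose vertices are the jobs of $\Zs_L$ and whose edges join two jobs sharing a common edge, so the optimum for $\Zs_L$ equals the chromatic number of this path-intersection graph. Packing $\Zs_L$ is therefore precisely \pcolor\ on $G_{\text{tree}}$, and Lemma~\ref{lemma:almostopt} produces a packing of $\Zs_L$ into at most $1.1\,\opt + o(\opt)$ tree-rounds in the asymptotic setting and into at most $1.5\,\opt$ tree-rounds in the absolute setting. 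Next I would run the \ff\ routine on $\Zs_S$, sorting the jobs by the level of $\theta_{\js}$, and invoke Lemma~\ref{lemma:badfactorfour}, which bounds the number of tree-rounds it uses by $4r$. Since for every edge $e$ any feasible packing needs at least $r_e = \lceil l_e/c_e\rceil$ tree-rounds just for the jobs passing through $e$, we have $r = \max_e r_e \le \opt$, so $\Zs_S$ is packed into at most $4\,\opt$ tree-rounds.

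Taking the union of the two collections yields a feasible packing of $J$ into at most $1.1\,\opt + o(\opt) + 4\,\opt = 5.1\,\opt + o(\opt)$ tree-rounds, giving the asymptotic $5.1$-approximation, and into at most $1.5\,\opt + 4\,\opt = 5.5\,\opt$ tree-rounds, giving the absolute $5.5$-approximation; both subroutines run in polynomial time, so the combined algorithm does as well. The only point I would take care to spell out in full is the equivalence between packing $\Zs_L$ and \pcolor\ on the tree (two large jobs conflict iff their paths share an edge, and non-conflicting large jobs never need separating), so that the guarantees of Erlebach--Jansen and of Raghavan--Upfal transfer verbatim; everything else is just the addition of the two bounds together with the observation $r \le \opt$.
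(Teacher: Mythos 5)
Your proposal is correct and follows essentially the same approach as the paper: the same split of $J$ into $\Zs_L$ and $\Zs_S$ by the $\cs/2$ threshold, the same reduction of $\Zs_L$ to \pcolor\ backed by Lemma~\ref{lemma:almostopt}, the same first-fit bound of $4r$ for $\Zs_S$ from Lemma~\ref{lemma:badfactorfour}, and the same conclusion via $r \le \opt$. The only difference is that you spell out a couple of facts the paper leaves implicit (the exact equivalence between packing $\Zs_L$ and \pcolor, and the monotonicity $\opt(\Zs_L),\opt(\Zs_S) \le \opt$), which is fine.
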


\subsection{Arbitrary capacities (under NBA)}
\label{subsec:roundtreearbi}
We now consider the \rtree\ problem. Chekuri et al.~\cite{ChekuriMS07}  proved the following result.

\begin{theorem}
\label{theorem:Chekuristheorem}
\cite{ChekuriMS07} If all job demands are equal to 1, then they can be packed into at most $4r$ \textit{tree-rounds}.
\end{theorem}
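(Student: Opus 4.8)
The plan is to recover this in two steps, since it is due to Chekuri, Mydlarz and Shepherd~\cite{ChekuriMS07}: first bound the \emph{fractional} number of rounds by $r$, then show that integrality costs only a factor $4$. The fractional bound is immediate: give every job weight $1/r$ in each of $r$ rounds; the fractional load of any edge $e$ in any round is then $l_e/r$, and $r=\max_{e'}\lceil l_{e'}/c_{e'}\rceil\ge l_e/c_e$ gives $l_e/r\le c_e$. So $r$ fractional rounds always suffice and the whole content of the theorem is that the integrality gap of ``partition the jobs into feasible rounds'' is at most $4$ on a tree.

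For the integral bound I would generalise the greedy argument of \Cref{lemma:badfactorfour}. Root the tree at an arbitrary vertex; for a job $\js$ let $\theta_{\js}$ be the least common ancestor of its endpoints and let $e^{1}_{\js},e^{2}_{\js}$ be the (at most two) edges of $P_{\js}$ incident to $\theta_{\js}$. Process the jobs in nondecreasing order of the level of $\theta_{\js}$ and run \ff. The structural fact that carries over from the path case is that any earlier-processed job sharing an edge with $\js$ must traverse $e^{1}_{\js}$ or $e^{2}_{\js}$; hence, whenever $\js$ is blocked in round $\Rs^{i}$ by a saturated edge $e\in P_{\js}$, all $c_e$ jobs of $\Rs^{i}$ through $e$ also cross $e^{1}_{\js}$ or $e^{2}_{\js}$, so one of those two edges already carries $\ge c_e$ jobs in $\Rs^{i}$. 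I would then charge each newly opened round to the blocking edge $e$ that triggered it, note that each edge can be a blocking edge for at most about $r$ rounds on each of its two ``sides'' (a job blocked at $e$ from the $s$-side also fully occupies the ancestor edge of $e$ on the subpath $\theta_{\js}\!\to\! s_{\js}$), and sum these bounds over the tree to reach $4r$.

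The hard part --- and the reason the factor is $4$ rather than the $2$ one obtains in the uniform-capacity case --- is that the blocking edge $e$ need not be $e^{1}_{\js}$ or $e^{2}_{\js}$, and its capacity $c_e$ may be unrelated to $c_{e^{1}_{\js}}$ or $c_{e^{2}_{\js}}$, so the naive inequality $l^{\Rs^{i}}_{e^{1}_{\js}}\ge c_e$ cannot be normalised against $r$ locally; the charging has to be organised globally over the rooted tree, and making it tight is the delicate point. An alternative route would be to round the uniform fractional solution directly, exploiting that the path--edge incidence matrix of a tree is near--totally-balanced so that iterated LP rounding loses only a constant; there the difficulty moves to controlling the accumulation of rounding errors along root-to-leaf branches, which again costs a factor tied to the two branches meeting at each least common ancestor.
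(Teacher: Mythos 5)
The theorem you are asked to prove is cited from Chekuri, Mydlarz and Shepherd~\cite{ChekuriMS07}; the paper itself does not prove it and simply invokes it as a black box in the analysis of \Cref{lemma:treelemma}. So there is no internal proof in the paper to compare your argument against, and the relevant benchmark is whether your sketch could plausibly close into a self-contained proof.

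It does not, and you are candid about this, but the gap is worth naming precisely because it is exactly where your plan breaks. Your fractional-to-integral framing is fine, and the structural fact you extract from \Cref{lemma:badfactorfour} is correct: rooting the tree and processing jobs by the level of $\theta_{\js}$, any already-placed job that blocks $\js$ at some $e\in P_{\js}$ must cross $e^1_{\js}$ or $e^2_{\js}$. But the inequality that this gives you is $l^{\Rs^i}_{e^1_{\js}}\ge c_{e_i}$ (or the analogue for $e^2_{\js}$), where $e_i$ is the saturated blocking edge in round $\Rs^i$, and $c_{e_i}$ varies from round to round. In the uniform case $c_{e_i}=\cs$ always, so summing over rounds divides by a fixed $\cs$ and produces $r$; here you get $\sum_i c_{e_i}\le l_{e^1_{\js}}+l_{e^2_{\js}}$, which cannot be normalised against $r$ because the $c_{e_i}$ have no lower bound in terms of $c_{e^1_{\js}}$ or $c_{e^2_{\js}}$. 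Your proposed fix, ``each edge can be a blocking edge for at most about $r$ rounds on each side,'' is an unproved claim, and even granting it, a per-edge bound summed over the tree gives a bound scaling with the number of edges rather than with $r$. So the charging scheme as stated does not yield $4r$; some additional idea (a potential function tied to capacities along a root-to-leaf path, or a different ordering) is needed before the greedy route closes. The alternative LP-rounding route you gesture at is closer in spirit to what \cite{ChekuriMS07} actually does, but as you note, you have not worked out the error accumulation there either. As it stands, the proposal is a reasonable plan with the hard step left open, not a proof.
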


We state the following result which enables us to obtain an improved approximation algorithm.

\begin{lemma}
\label{lemma:treelemma}
Let $\Hs := \{\js \in J\mid \frac{1}{\eta_1}c_{\min} < d_{\js} \le \frac{1}{\eta_2}c_{\min}\}$, for some $\eta_1 > \eta_2 \ge 1$. Then there exists a valid \rtree\ packing of $\Hs$ using less than $\frac{4\eta_1(\eta_2 +1)}{\eta^{2}_2}r + 4$ tree-rounds.
\end{lemma}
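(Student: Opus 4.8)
The plan is to mimic the argument already used in \Cref{lemma:cruciallemma} (the path version), replacing the application of the Nomikos--Papakostantinou--Zissimopoulos theorem by \Cref{theorem:Chekuristheorem} for trees. First I would scale up the demand of every job in $\Hs$ to exactly $\frac{1}{\eta_2}c_{\min}$, and scale down the capacity of every edge $e$ to the largest integral multiple of $\frac{1}{\eta_2}c_{\min}$ that is at most $c_e$; call the resulting capacities $(c'_e)_{e\in E(G_{\text{tree}})}$ and the resulting job set $\Hs'$. Since in the original instance $c_e\ge c_{\min}\ge \eta_2\cdot\frac{1}{\eta_2}c_{\min}$, each $c'_e$ is a positive integral multiple of $\frac{1}{\eta_2}c_{\min}$, so after dividing all demands and capacities by $\frac{1}{\eta_2}c_{\min}$ we obtain an instance with unit demands and integral edge capacities, to which \Cref{theorem:Chekuristheorem} applies.

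Next I would bound the congestion of the scaled instance. For an edge $e$, let $l_e^{\Hs}$ and $r_e^{\Hs}$ denote the load and congestion due to jobs of $\Hs$ in the original instance, and $l_e^{\Hs'}$, $r_e^{\Hs'}$ the corresponding quantities after scaling. Since each job's demand increases from a value in $(\frac{1}{\eta_1}c_{\min},\frac{1}{\eta_2}c_{\min}]$ up to $\frac{1}{\eta_2}c_{\min}$, the load scales by a factor strictly less than $\eta_1/\eta_2$, i.e.\ $l_e^{\Hs'}<\frac{\eta_1}{\eta_2}l_e^{\Hs}$. For the capacity, rounding down to a multiple of $\frac{1}{\eta_2}c_{\min}$ loses less than $\frac{1}{\eta_2}c_{\min}\le\frac{1}{\eta_2}c_e$, and since $c_e\ge c_{\min}$ one gets $c'_e\ge\frac{\eta_2}{\eta_2+1}c_e$ exactly as in the path case. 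Combining, $r_e^{\Hs'}=\bigl\lceil l_e^{\Hs'}/c'_e\bigr\rceil<\frac{\eta_1(\eta_2+1)}{\eta_2^2}\,\frac{l_e^{\Hs}}{c_e}+1\le\frac{\eta_1(\eta_2+1)}{\eta_2^2}r+1$, so the maximum congestion of the scaled unit-demand instance is less than $\frac{\eta_1(\eta_2+1)}{\eta_2^2}r+1$.

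Finally, applying \Cref{theorem:Chekuristheorem} to the scaled instance packs all jobs of $\Hs'$ (hence of $\Hs$, since any feasible packing of the scaled-up, capacity-reduced instance is feasible for the original one) into at most $4\cdot\bigl(\text{congestion}\bigr)$ tree-rounds, which is less than $4\bigl(\frac{\eta_1(\eta_2+1)}{\eta_2^2}r+1\bigr)=\frac{4\eta_1(\eta_2+1)}{\eta_2^2}r+4$ tree-rounds, as claimed. One small point to be careful about: if $L<\frac{\eta_2}{\eta_2+1}c_{\min}$ (so that after scaling the congestion would be $0$ on every edge, or the instance is trivial), the claim still holds vacuously since $\Hs$ can then be packed in a single round; I would dispatch this degenerate case at the start. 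The main (minor) obstacle is simply making the two scaling inequalities $l_e^{\Hs'}<\frac{\eta_1}{\eta_2}l_e^{\Hs}$ and $c'_e\ge\frac{\eta_2}{\eta_2+1}c_e$ airtight and checking that $c'_e$ is genuinely a positive multiple of the common unit so that \Cref{theorem:Chekuristheorem} is applicable; everything else is a transcription of the path argument to the tree setting.
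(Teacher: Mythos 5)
Your proof is correct and follows the paper's argument essentially verbatim: scale demands up to $\frac{1}{\eta_2}c_{\min}$, round capacities down to the nearest multiple of $\frac{1}{\eta_2}c_{\min}$, bound the resulting load and capacity changes by $\frac{\eta_1}{\eta_2}$ and $\frac{\eta_2}{\eta_2+1}$ respectively, and apply the Chekuri--Mydlarz--Shepherd $4r$ bound to the resulting unit-demand/integer-capacity tree instance. The degenerate-case remark is harmless but unnecessary, since the paper's chain of inequalities already covers it.
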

\begin{proof}
We scale up the demand of each job in $\Hs$ to $\frac{1}{\eta_2}c_{\min}$ and scale down the capacity of each edge in $E$ to the nearest integral multiple of $\frac{1}{\eta_2}c_{\min}$. Let $\Hs'$ denote the new set of identical demand (of $\frac{1}{\eta_2}c_{\min}$) jobs and let $I' = (c'_{e_1},\ldots,c'_{e_m})$ denote the new profile. For $e\in E$, let $l^{\Hs}_e$ and $r^{\Hs}_e$ denote the original load and congestion of edge $e$ due to the jobs in $\Hs$ and let $l^{\Hs'}_e$ and $r^{\Hs'}_e$ denote the new load and congestion, respectively. Then $l^{\Hs'}_e < \frac{\eta_1}{\eta_2}l^{\Hs}_e$ and $c'_e \ge \frac{\eta_2}{\eta_2 + 1}c_e$. Thus $r^{\Hs'}_e = \left \lceil \frac{l^{\Hs'}_e}{c'_e} \right \rceil \le \left \lceil \frac{\frac{\eta_1}{\eta_2}l^{\Hs}_e}{\frac{\eta_2}{\eta_2 + 1}c_e} \right \rceil < \frac{\eta_1 (\eta_2 + 1)}{\eta^2_2}\frac{l^{\Hs}}{c_e}+1 \le \frac{\eta_1 (\eta_2 + 1)}{\eta^2_2}r^{\Hs}_e + 1$.

Finally, we scale all demands and edge capacities by $\frac{1}{\eta_2}c_{\min}$ so that now all jobs have unit demands and all capacities are integers. Applying \Cref{theorem:Chekuristheorem} to this set of jobs, we obtain the number of tree-rounds used is at most $4r^{\Hs'} < \frac{4\eta_1(\eta_2 +1)}{\eta^{2}_2}r + 4$.
\end{proof}

Let $\Js_L := \{\js \in J\mid d_{\js} > \mathfrak{b}_{\js}/5\}$ and $\Js_s := J\setminus \Js_L$.

\begin{lemma}
\label{lemma:31approx}
The jobs in $\Js_L$ can be packed into at most $31r+6$ \textit{tree-rounds}.
\end{lemma}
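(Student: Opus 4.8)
The statement to prove is Lemma~\ref{lemma:31approx}: the jobs in $\Js_L := \{\js \in J \mid d_{\js} > \mathfrak{b}_{\js}/5\}$ can be packed into at most $31r + 6$ tree-rounds. My plan is to mimic the structure used for the path case (Lemma~\ref{lemma:10approxforsmalljobs} in the commented-out section) but using the tree analogue, Lemma~\ref{lemma:treelemma}, which costs an extra factor of $4$ and an additive $+4$ per application because it relies on the Chekuri--Mydlarz--Shepherd result (Theorem~\ref{theorem:Chekuristheorem}) that unit-demand instances on trees pack into $4r$ tree-rounds rather than exactly $r$.

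First I would split $\Js_L$ by demand relative to $c_{\min}$ into a constant number of geometrically-scaled classes so that each class has demands in a window $(\frac{1}{\eta_1}c_{\min}, \frac{1}{\eta_2}c_{\min}]$ with $\eta_1/\eta_2$ a small constant; the NBA guarantees $d_{\js} \le c_{\min}$, so the largest class is $(\frac{1}{2}c_{\min}, c_{\min}]$, i.e.\ $\eta_1 = 2$ handled with a further partition, and the smallest relevant window is bounded below by $\mathfrak{b}_{\js}/5 \ge c_{\min}/5$, so $\eta_1 = 5$. Concretely I would use $Q_1 := \{\js \in \Js_L \mid d_{\js} > \frac{1}{2}c_{\min}\}$ and $Q_2 := \{\js \in \Js_L \mid \frac{1}{5}c_{\min} < d_{\js} \le \frac{1}{2}c_{\min}\}$, so $\Js_L = Q_1 \dot\cup Q_2$. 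Then I apply Lemma~\ref{lemma:treelemma} to $Q_1$ with $\eta_1 = 2, \eta_2 = 1$, giving fewer than $4\cdot 2\cdot 2 \cdot r + 4 = 16r + 4$ tree-rounds, hence at most $16r + 3$ since the count is an integer; and to $Q_2$ with $\eta_1 = 5, \eta_2 = 2$, giving fewer than $\frac{4 \cdot 5 \cdot 3}{4} r + 4 = 15r + 4$ tree-rounds, hence at most $15r + 3$. Summing gives at most $31r + 6$ tree-rounds, as claimed.

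The main thing to check carefully is that Lemma~\ref{lemma:treelemma} is genuinely applicable to each $Q_i$, i.e.\ that the demand-window hypothesis $\frac{1}{\eta_1}c_{\min} < d_{\js} \le \frac{1}{\eta_2}c_{\min}$ holds with the chosen constants. For $Q_1$ this is immediate from the definition and the NBA (demands are at most $c_{\min}$, and strictly above $\frac{1}{2}c_{\min}$). For $Q_2$ the upper bound $d_{\js} \le \frac{1}{2}c_{\min}$ is by definition; the lower bound $d_{\js} > \frac{1}{5}c_{\min}$ follows since $\js \in \Js_L$ means $d_{\js} > \mathfrak{b}_{\js}/5 \ge c_{\min}/5$. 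Then the arithmetic in Lemma~\ref{lemma:treelemma}'s conclusion $\frac{4\eta_1(\eta_2+1)}{\eta_2^2}r + 4$ plugs in directly, and I would round each bound down to an integer before summing (this is where the $+4 \to +3$ improvements come from, yielding the stated $+6$ rather than $+8$).

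The only real obstacle is bookkeeping: getting the constants to add up to exactly $31r + 6$ rather than something slightly larger, which constrains how finely one must partition $\Js_L$ (too coarse a partition blows up $\eta_1/\eta_2$ and hence the leading constant; too fine a partition accumulates too many additive $+4$ terms). The two-class split above is the natural choice and I would verify $16 + 15 = 31$ and $3 + 3 = 6$ close the argument; if a tighter leading constant were needed one could optimize the split points as the paper remarks for the path case, but for the claimed bound the straightforward two-piece decomposition suffices.
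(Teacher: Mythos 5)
Your proof is correct and takes essentially the same approach as the paper: you partition $\Js_L$ at the threshold $\frac{1}{2}c_{\min}$ (your $Q_1$ and $Q_2$ are exactly the paper's $\Qs_L$ and $\Qs_M$), apply Lemma~\ref{lemma:treelemma} with the same parameter pairs $(\eta_1,\eta_2)=(2,1)$ and $(5,2)$, and use strictness plus integrality to round the per-class bounds down to $16r+3$ and $15r+3$ before summing.
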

\begin{proof}
Let $\Qs_M := \{\js \in \Js_L \mid \frac{1}{5}c_{\min} < d_{\js} \le \frac{1}{2}c_{\min}\}$ and $\Qs_L := \Js_L \setminus \Qs_M$. Apply \Cref{lemma:treelemma} to $\Qs_M$ with $\eta_1 = 5$ and $\eta_2 = 2$ and to $\Qs_L$ with $\eta_1 = 2$ and $\eta_2 = 1$. This yields valid packings of $\Qs_M$ and $\Qs_L$ using at most $15r+3$ and $16r+3$ tree-rounds, respectively (since the number of rounds is an integer and \Cref{lemma:treelemma} is strict). Thus the total number of tree-rounds used $\le 31r + 6$.
\end{proof}

\begin{lemma}
\label{lemma:18approx}
The jobs in $\Js_S$ can be packed into at most $18r$ \textit{tree-rounds}.
\end{lemma}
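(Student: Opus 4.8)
The plan is to mimic, on trees, the greedy ``critical-edge'' packing that handles small jobs on a path, fused with the least-common-ancestor ordering used in the proof of \Cref{lemma:badfactorfour}. Root $G_{\text{tree}}$ at an arbitrary vertex $v_{\text{root}}$; for a job $\js\in\Js_S$ let its two \emph{arms} be the downward paths $P_{\theta_\js-v_{s_\js}}$ and $P_{\theta_\js-v_{t_\js}}$. Give each edge $e$ the class $\mathrm{cl}(e):=\lfloor\log c_e\rfloor$ (so $2^{\mathrm{cl}(e)}\le c_e<2^{\mathrm{cl}(e)+1}$) and, on each nonempty arm of $\js$, designate a minimum-class edge of that arm as a \emph{critical edge}, breaking ties towards $\theta_\js$; write $\mathrm{crit}_1(\js),\mathrm{crit}_2(\js)$ for the (at most two) critical edges. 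Maintain $18r$ tree-rounds, process the jobs of $\Js_S$ in non-decreasing order of the level of $\theta_\js$, and place $\js$ in the first round in which the total demand already routed through each of $\mathrm{crit}_1(\js),\mathrm{crit}_2(\js)$ is at most a fixed constant fraction (we aim for $\tfrac19$) of that edge's capacity.

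Next I would check that such a round always exists and that placing $\js$ keeps the relevant edge within capacity. If all $18r$ rounds were blocked, then by pigeonhole at least $9r$ of them are blocked at one common critical edge $f$ of $\js$, each holding demand exceeding $c_f/9$ through $f$, so the total demand already routed through $f$ would exceed $9r\cdot c_f/9=r\,c_f\ge l_f$, contradicting $l_f\le r_f c_f\le r\,c_f$. Placing $\js$ is then legitimate since $f\in P_\js$ gives $\mathfrak{b}_\js\le c_f<2\mathfrak{b}_\js$, hence $d_\js\le\mathfrak{b}_\js/5\le c_f/5$, which fits on top of the reserved $c_f/9$. This is where the bound $18r=2\cdot 9\cdot r$ is spent: the ``$9$'' is the size of the per-round reservation bucket needed so that a job of the size allowed in $\Js_S$ still fits, and the extra factor $2$ is the price of a job having two candidate bottleneck arms instead of one.

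It remains to prove feasibility: in each round $\Rs$ and for each edge $e=(u,p(u))$, the jobs routed through $e$ have total demand at most $c_e$. The structural fact, exactly as in \Cref{lemma:badfactorfour}, is that a job $\js'$ processed before $\js$ that shares an edge with $\js$ must have $\theta_{\js'}$ an ancestor of $\theta_\js$, and therefore traverses the \emph{entire} portion of $\js$'s $e$-containing arm from $\theta_\js$ down to the shared edge. I would then split the jobs through $e$ according to the class of their critical edge on the $e$-side arm and according to whether that critical edge lies on the ancestor chain above $e$ or strictly inside the subtree $T_u$ below $e$. The ``above'' part is essentially a path situation on the ancestor chain of $p(u)$ and is treated as in the path proof: the last such job contributes a leading term $c_{\mathrm{crit}}/9\le 2c_e/9$ plus its own demand $\le c_e/5$, and the remaining classes $i<\mathrm{cl}(e)$ contribute a geometric sum of the form $\sum_i\bigl(2^{i+1}/9+2^{i+1}/5\bigr)$, bounded by a constant times $c_e$. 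The ``below'' part is the genuine new obstacle: inside $T_u$ the arms branch, so the clean ``one critical edge per class'' property of the path case fails, and one must instead invoke the LCA-ordering to show that, class by class, the jobs through $e$ still assemble into a bounded contribution. Getting all of this to add up to at most $c_e$ with exactly $18r$ rounds and a $c/9$ bucket — and, if a cleaner route is preferred for the dense part of the instance, feeding it instead into the unit-demand routing of \Cref{theorem:Chekuristheorem} after rounding demands to powers of $\tfrac12$ — is the delicate step, and I expect this, rather than the existence argument, to be where the real work lies.
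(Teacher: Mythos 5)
Your framework is the right one and matches the paper's: greedy placement of the $\mathfrak{b}_\js/5$-small jobs in LCA order, a per-arm critical edge, a $1/9$ reservation bucket at each of the two critical edges, $18r$ rounds, and the double-pigeonhole argument for existence ($\chi<9r$ from one critical edge, $\chi>9r$ from the other). That part is complete and correct. However, the proof as written has two genuine problems.

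First, the feasibility argument --- which is the only nontrivial step --- is explicitly left unfinished. You correctly identify the split of $\Ss_e^{\Rs}$ into jobs whose (at-least-one) critical edge lies on the root-to-$v$ chain and those whose critical edges all sit strictly inside $T_v$, and you correctly flag the latter as the delicate case (on a path, jobs of a fixed class share a critical edge; in $T_v$ they may not). The paper handles the first group with a single leading term ($c_f/9$ at the critical edge $f$ of the last such job, all earlier ones being forced through $f$ by the LCA order, plus $d_{\js_\text{up}}\le c_e/5$), and the second group with a geometric sum over classes $i<\mathrm{cl}(e)$. Your sketch conflates the geometric sum into the ``above'' part and leaves ``below'' with no bound at all; the proof does not close.

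Second, even if you fill in the structure, your choice of classes $\mathrm{cl}(e)=\lfloor\log_2 c_e\rfloor$ makes the arithmetic fail. With base $2$, a $1/9$ bucket, and $d_\js\le\mathfrak{b}_\js/5$, the leading term is bounded by $2c_e/9+c_e/5=19c_e/45$, while the geometric sum is bounded by $(1/9+1/5)\sum_{i<\mathrm{cl}(e)}2^{i+1}<\tfrac{14}{45}\cdot 2c_e=28c_e/45$, totalling $47c_e/45>c_e$. The paper instead uses classes with base $5/2$, i.e.\ $(5/2)^k\le c_e<(5/2)^{k+1}$, for which the leading term is at most $\tfrac{43}{90}c_e$ and the geometric sum at most $\tfrac{14}{27}c_e$, and $\tfrac{43}{90}+\tfrac{14}{27}<1$. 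So the constant $5/2$ is not cosmetic: with base $2$ you would need a smaller bucket and hence more than $18r$ rounds. Finally, a small error: from $f\in P_\js$ you only get $\mathfrak{b}_\js\le c_f$; the claimed $c_f<2\mathfrak{b}_\js$ is false when $f$ is the critical edge of the arm that does not contain the bottleneck. Luckily you only need $d_\js\le\mathfrak{b}_\js/5\le c_f/5$, which holds without it.
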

\begin{proof}
First, we introduce some terminologies. An edge $e\in E(G_{\text{tree}})$ is said to be of \textit{class} $k$ if $(5/2)^k \le c_e < (5/2)^{(k+1)}$ and we denote $\text{cl}(e):=k$. For any $u,v \in V(G_{\text{tree}})$, the \textit{critical edge} of $P_{u-v}$, denoted by $\text{crit}(P_{u-v})$ is defined as the first edge having the minimum class among all the edges of $P_{u-v}$. 

We again fix any vertex ($v_{\text{root}}$) as the root of $G_{\text{tree}}$.
As previously, for any $\js \in J$, let $\theta_{\js}$ denote the least common ancestor of $v_{s_{\js}}$ and $v_{t_{\js}}$.
 We maintain $18r$ tree-rounds, $\Rs^1, \Rs^2,\ldots,\Rs^{18r}$. For any job $\js \in \Js_S$, let $e_{\js}^1 := \text{crit}(P_{\theta_{\js}-v_{s_{\js}}})$ and $e_{\js}^2 := \text{crit}(P_{\theta_{\js}-v_{t_{\js}}})$. We consider the jobs in $\Js_S$ in non-decreasing order of the levels of $\theta_{\js}$ and place $\js$ in a tree-round $\Rs$ in which both the following conditions hold:
\begin{enumerate}[(i)]
\item The sum of demands of all jobs that have already been assigned to $\Rs$ and pass through $e_{\js}^{1}$ is at most $c_{e_{\js}^{1}}/9$.
\item The sum of demands of all jobs that have already been assigned to $\Rs$ and pass through $e_{\js}^{2}$ is at most $c_{e_{\js}^{2}}/9$. 
\end{enumerate}

First, we show that such a tree-round must always exist. Suppose to the contrary, there exists a job $\js$ that cannot be assigned by the above algorithm. Then in each of the $18r$ tree-rounds, either (i) or (ii) must fail. For any tree-round $\Rs$ and edge $e\in E(G_{\text{tree}})$, let $l_{e}^{\Rs}$ denote the sum of demands of jobs assigned to $\Rs$ that pass through $e$. Let $\chi$ be the number of tree-rounds in which $l_{e_{\js}^1}^{\Rs^i} > \frac{1}{9}c_{e_{\js}^1}$. Thus in at least $18r-\chi$ tree-rounds, $l_{e_{\js}^2}^{\Rs^i} > \frac{1}{9}c_{e_{\js}^2}$. So $l_{e_{\js}^1} \ge \sum \limits_{i=1}^{18r}l_{e_{\js}^1}^{\Rs^i} > \chi\cdot \frac{1}{9}c_{e_{\js}^1}$. Since $r \ge l_{e_{\js}^1}/c_{e_{\js}^1}$, we get $\chi < 9r$. Again $l_{e_{\js}^2} \ge \sum \limits_{i=1}^{18r}l_{e_{\js}^2}^{\Rs^i} > (18r-\chi)\cdot \frac{1}{9}c_{e_{\js}^2}$ and since $r \ge l_{e_{\js}^2} /c_{e_{\js}^2}$, we get $\chi > 9r$, a contradiction.

Now we show that the above algorithm produces a valid packing. Consider any tree-round $\Rs$ and edge $e = \{u,v\}$, such that $\text{level}(v)=\text{level}(u)+1$. Let $\Ss_e^{\Rs}$ denote the set of jobs assigned to $\Rs$ that pass through $e$. Let $\js_{\text{up}} \in \Ss_e^{\Rs}$ be the last job such that at least one of $e_{j_{\text{up}}}^{1}$ or $e_{j_{\text{up}}}^{2}$ lies on $P_{v_{\text{root}}-v}$. Thus the sum of demands of all such jobs such that at least one of $e_{j_{\text{up}}}^{1}$ or $e_{j_{\text{up}}}^{2}$ lies on $P_{v_{\text{root}}-v}$, including $\js_{\text{up}}$, is at most \[\frac{(5/2)^{\text{cl}(e)+1}}{9} + \frac{c_e}{5} \le \frac{43}{90}c_e.\]
       
All the remaining jobs must satisfy $\text{level}(\theta_{\js}) > \text{level}(u)$. 
So the critical edge is of lower level. 
Thus the sum of demands of all such jobs is at most \[\sum \limits_{i=0}^{\text{cl}(e)-1} \Big (\frac{(5/2)^{i+1}}{9} + \frac{(5/2)^{i+1}}{5}\Big ) \le \frac{14}{45}\sum \limits_{i=0}^{\text{cl}(e)-1} (5/2)^{i+1} \le \frac{14}{27}c_e.\]
    
Since ${43}/{90} + {14}/{27} < 1$, the total sum of demands of jobs in $\Ss_e^{\Rs}$ does not exceed $c_e$. Hence, the packing is valid.
\end{proof}

Note that \Cref{lemma:31approx} yields a packing of $\Js_L$ using at most $31r+6 \le 37r$ tree-rounds. Together with \Cref{lemma:18approx}, we have the following theorem.

\begin{theorem}
There exists a polynomial time asymptotic (resp. absolute) 49- (resp. 55-) approximation algorithm for \rtree.
\end{theorem}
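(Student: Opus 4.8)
The plan is to simply combine the two structural lemmas already proved for the two halves of the instance and then use the elementary lower bound $r \le \opt_{UFP}$ (valid for trees as well, since each congested edge forces that many tree-rounds). First I would partition the job set as $J = \Js_L \,\dot\cup\, \Js_S$, where $\Js_L := \{\js \in J \mid d_\js > \mathfrak{b}_\js/5\}$ and $\Js_S := J \setminus \Js_L$. For $\Js_L$, invoke Lemma~\ref{lemma:31approx}, which packs $\Js_L$ into at most $31r + 6$ tree-rounds; for $\Js_S$, invoke Lemma~\ref{lemma:18approx}, which packs $\Js_S$ into at most $18r$ tree-rounds. Taking the union of these two packings (the rounds are disjoint, so they may just be concatenated) yields a feasible solution to \rtree\ using at most $31r + 6 + 18r = 49r + 6$ tree-rounds in total.

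For the asymptotic bound, since $r \le \opt_{UFP}$ (the maximum congestion is a lower bound on the number of tree-rounds needed, exactly as on the path), the number of tree-rounds used is at most $49\,\opt_{UFP} + 6$, giving an asymptotic $49$-approximation. For the absolute bound, I would note that if $r = 0$ there is nothing to pack, and otherwise $r \ge 1$, so $31r + 6 \le 37r$; hence the total number of tree-rounds is at most $37r + 18r = 55r \le 55\,\opt_{UFP}$, which is an absolute $55$-approximation. All steps run in polynomial time since Lemmas~\ref{lemma:31approx} and \ref{lemma:18approx} are algorithmic (they invoke only Lemma~\ref{lemma:treelemma}, i.e.\ scaling plus Theorem~\ref{theorem:Chekuristheorem}, and a greedy critical-edge assignment).

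There is essentially no obstacle here: the two lemmas have already done all the work, and the only thing to be careful about is the bookkeeping — making sure the split at threshold $\mathfrak{b}_\js/5$ is the same one used in both Lemma~\ref{lemma:31approx} (which covers $d_\js > \mathfrak{b}_\js/5$, further subdivided at $c_{\min}/2$) and Lemma~\ref{lemma:18approx} (which covers $d_\js \le \mathfrak{b}_\js/5$, using the critical-edge/class machinery), so that every job is packed exactly once, and that the constant additive term $6$ is correctly attributed to the asymptotic regime while the rounding $31r+6 \le 37r$ is used only to obtain the absolute guarantee. The NBA is needed implicitly because Lemma~\ref{lemma:treelemma} (and hence Lemma~\ref{lemma:31approx}) relies on being able to scale so that job demands are comparable to $c_{\min}$; this is already baked into the cited statements.

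\begin{proof}
Partition $J = \Js_L \,\dot\cup\, \Js_S$ with $\Js_L := \{\js \in J \mid d_\js > \mathfrak{b}_\js/5\}$ and $\Js_S := J \setminus \Js_L$. By Lemma~\ref{lemma:31approx}, the jobs in $\Js_L$ can be packed into at most $31r + 6$ tree-rounds in polynomial time, and by Lemma~\ref{lemma:18approx}, the jobs in $\Js_S$ can be packed into at most $18r$ tree-rounds in polynomial time. Concatenating the two packings yields a feasible solution to \rtree\ using at most $31r + 6 + 18r = 49r + 6$ tree-rounds. Since $r$ is a lower bound on the optimal number of tree-rounds, this is an asymptotic $49$-approximation. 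If $r \ge 1$ then $31r + 6 \le 37r$, so the total number of tree-rounds is at most $37r + 18r = 55r \le 55\,\opt_{UFP}$; if $r = 0$ there is nothing to pack. Hence the algorithm is an absolute $55$-approximation.
\end{proof}
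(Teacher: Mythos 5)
Your proof is correct and follows essentially the same route as the paper: partition at the threshold $\mathfrak{b}_\js/5$, invoke Lemma~\ref{lemma:31approx} to pack $\Js_L$ into at most $31r+6$ tree-rounds, invoke Lemma~\ref{lemma:18approx} to pack $\Js_S$ into at most $18r$ tree-rounds, and combine, using $r \le \opt_{UFP}$ together with the bound $31r+6 \le 37r$ for $r\ge 1$ to extract the absolute $55$ and the raw sum $49r+6$ to extract the asymptotic $49$. Your aside correctly notes that the NBA is already implicit in Lemma~\ref{lemma:treelemma} (the case $\eta_2=1$ requires $d_\js \le c_{\min}$), which matches the scope of Theorem~\ref{thm:rtreetheorem}.
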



We note that with some careful choice of parameters, the asymptotic approximation ratio could be improved to 48.292, but it is unlikely to be improved further using the present technique.


\end{document}